\documentclass[11pt]{article}

\usepackage[bottom]{footmisc}
\usepackage{amssymb}
\usepackage{amsmath}
\usepackage{amsthm}
\usepackage{amsfonts,amsthm,amsmath,amssymb}

\usepackage[
    disableredefinitions,
    classfont=bold,
    full,
    langfont=caps,
    funcfont=roman
]{complexity}

\usepackage[mathscr]{euscript}
\usepackage[margin=2cm]{geometry}
\usepackage{bm}
\usepackage[
	backend=bibtex8,
	style=alphabetic,
	natbib=true,
	maxnames=10,
	maxitems=10,
	maxcitenames=3,
	maxalphanames=4,
	minalphanames=3,
	labelalpha=true,
]{biblatex}

\usepackage{dsfont}

\usepackage{hyperref}
\hypersetup{
    citecolor=red,
    colorlinks=true,
    linkcolor=blue,
    filecolor=magenta,      
    urlcolor=cyan,
}

\usepackage{thmtools}
\usepackage{thm-restate}
\usepackage[noabbrev,capitalise]{cleveref}
\usepackage{mathtools}

\declaretheorem[numberwithin=section]{thm}
\declaretheorem[sibling=thm,name=Theorem]{theorem}
\declaretheorem[sibling=thm,name=Lemma]{lemma}
\declaretheorem[sibling=thm,name=Corollary]{corollary}
\declaretheorem[sibling=thm,name=Claim]{claim}

\declaretheorem[sibling=thm,name=Fact]{fact}

\declaretheorem[sibling=thm,name=Definition]{definition}

\usepackage{soul}
\usepackage{xcolor}
\usepackage{wasysym}
\usepackage{comment}
\usepackage{marvosym}
\usepackage[shortlabels]{enumitem}
\usepackage{tikz}
\usetikzlibrary{calc}

\usepackage{braket}

\mathchardef\mhyphen="2D

\newcommand{\eps}{\varepsilon}
\newcommand{\inr}{\in_R}

\newcommand{\defeq}{\stackrel{def}{=}}
\newcommand{\defeqq}{\coloneqq}
\newcommand{\col}{\colon}

\newcommand{\qlq}{\quad\Longrightarrow\quad}

\mathchardef\mhyphen="2D

\newcommand{\wh}[1]{\widehat{#1}}
\newcommand{\wt}[1]{\widetilde{#1}}
\newcommand{\ol}[1]{\overline{#1}}

\newcommand{\cB}{\mathcal{B}}

\newcommand{\cE}{\mathcal{E}}

\newcommand{\cI}{\mathcal{I}}

\newcommand{\cM}{\mathcal{M}}

\renewcommand{\cS}{\mathcal{S}}

\newcommand{\cU}{\mathcal{U}}
\newcommand{\cV}{\mathcal{V}}

\renewcommand{\C}{\mathbb{C}}

\newcommand{\N}{\mathbb{N}}

\crefname{claim}{claim}{claims}
\Crefname{claim}{Claim}{Claims}
\crefname{algorithm}{algorithm}{algorithms}
\Crefname{algorithm}{Algorithm}{Algorithms}
\crefname{algocf}{algorithm}{algorithms}

\makeatletter
\if@cref@capitalise
\crefname{claim}{Claim}{Claims}
\else
\crefname{claim}{claim}{claims}
\fi
\makeatother

\newcommand{\norm}[1]{\left\| #1 \right\|}

\renewcommand{\set}[1]{{\left\{ #1 \right\}}}

\newcommand{\sett}[2]{\left\{ #1 \;\middle\vert\; #2 \right\}}

\newcommand{\abs}[1]{{\left\lvert{#1}\right\rvert}}

\newcommand{\supp}{\mathrm{Supp}}

\newcommand{\power}[1]{\{0,1\}^{#1}}
\newcommand{\pown}{\power{n}}
\newcommand{\powern}{\pown}

\DeclarePairedDelimiter{\parentheses}{\lparen}{\rparen}
\newcommand{\ps}[1]{\parentheses*{#1}}
\newcommand{\pss}[1]{(#1)}

\newcommand{\ketbra}[2]{\ket{#1}\bra{#2}}
\newcommand{\ketbraa}[1]{\ketbra{#1}{#1}}

\renewcommand{\braket}[2]{\left\langle{#1} \middle|{#2}\right\rangle}

\newclass{\TISP}{TISP}
\newclass{\RTISP}{RTISP}

\newclass{\RTISPs}{R^*TISP}
\newclass{\BPTISP}{BPTISP}
\newclass{\BPTISPs}{BP^*TISP}
\newclass{\prBPTISPs}{prBP^*TISP}

\newclass{\BPSs}{BP^*SPACE}
\newclass{\RSs}{R^*SPACE}

\newclass{\prBPSs}{prBP^*SPACE}
\newclass{\prRSs}{prR^*SPACE}

\newclass{\searchBPP}{SearchBPP}
\newclass{\searchP}{SearchP}

\renewclass{\promiseBPP}{prBPP}
\renewclass{\promiseRP}{prRP}
\newclass{\prRTISPs}{pr\RTISPs}

\newclass{\prZPL}{prZPL}
\newclass{\prBPL}{prBPL}
\newclass{\prRL}{prRL}
\newclass{\prL}{prL}

\newclass{\searchNL}{SearchNL}
\newclass{\searchBPL}{SearchBPL}
\newclass{\searchBPLs}{SearchBP^*L}
\newclass{\searchZPLs}{SearchZP^*L}
\newclass{\searchRLs}{SearchR^*L}
\newclass{\searchRL}{SearchRL}
\newclass{\searchL}{SearchL}

\newclass{\RLs}{R^{*}L}
\newclass{\BPLs}{BP^{*}L}
\newclass{\ZPLs}{ZP^{*}L}

\newclass{\BPPLs}{BP^*PL}
\newclass{\RPLs}{R^*PL}
\newclass{\prBPPLs}{prBP^*PL}
\newclass{\prRPLs}{prR^*PL}

\newclass{\prBPLs}{pr\BPLs}
\newclass{\prRLs}{pr\RLs}

\newclass{\searchBPSACE}{Search\BPSPACE}
\newclass{\searchDSPACE}{Search\DSPACE}

\newclass{\almost}{almost\mhyphen}
\newclass{\almostLw}{almost\textsubscript{w}\mhyphen L}
\newclass{\almostLs}{almost\textsubscript{s}\mhyphen L}
\newclass{\almostL}{\almost L}

\newclass{\recursiveset}{R}
\newclass{\MAL}{MAL}
\renewcommand{\L}{\mathbf{L}}
\renewcommand{\P}{\mathbf{P}}
\usepackage{booktabs}
\usepackage{multirow}
\usepackage{arydshln}
\usepackage{braket}
\usepackage{yhmath}
\usepackage{bbm}
\usepackage{float}
\usepackage[normalem]{ulem}

\newcommand{\stabgroup}{\mathrm{Stab}}

\newcommand{\Pauli}{\mathrm{Pauli}}

\DeclareMathOperator{\tr}{tr}

\newcommand{\rhoinit}{\rho}
\newcommand{\rhopre}{\rho^{xy} }
\newcommand{\rhopost}{\rho^{xyb}}

\newcommand{\genresource}{\mathsf{Gen}_{res}}
\newcommand{\genalice}{{\mathsf{Gen}^{\mathrm{pre}}_{\mathtt{A}}}}

\newcommand{\genbob}{{\mathsf{Gen}^{\mathrm{pre}}_{\mathtt{B}}}}
\newcommand{\genalicepost}{\mathsf{Gen}^{\mathrm{post}}_{\mathtt{A}}}
\newcommand{\genbobpost}{\mathsf{Gen}^{\mathrm{post}}_{\mathtt{B}}}

\newcommand{\Qb}{{\ol {Q}}}
\newcommand{\rank}{\operatorname{rank}}

\newclass{\prBQTIME}{prBQTIME}
\newclass{\prBPTIME}{prBPTIME}

\NewDocumentEnvironment{spliteq}{b}{%
  \begin{equation}
    \begin{split}
      #1
    \end{split}
  \end{equation}
}{%
  \ignorespacesafterend
}

\begin{document}

\author{
    Oren Renard\thanks{Cornell University. Email: \texttt{oren@cs.cornell.edu}.}
    \and
    Nicholas Spooner\thanks{Cornell University. Email: \texttt{nspooner@cornell.edu}.}
}

\title{Computational Bounds for $f$-Routing}
\maketitle

\begin{abstract}
The $f$-routing protocol is a leading candidate for quantum position verification \cite{KentMunroSpiller2011}, but security guarantees for explicit functions remain limited.
We prove unconditional resource lower bounds for uniform attackers; our new techniques bypass communication-complexity bounds central to previous works, which are inherently at most linear in the input length.

We show that, for input length $n$ and sufficiently small constant $\eps>0$, a uniformly generated strategy using $q$ qubits and having description length $\poly(q)$, with success probability at least $1-\eps$ on every input, implies the following computational bounds on $f$:
\begin{enumerate}
    \item If the strategies are arbitrary quantum channels, then $f\in \QSZK(\poly(nq))$, where $\QSZK(T)$ is the class of languages having quantum statistical zero knowledge proofs in which the verifier runs in time $T$ (and the simulator in time $\poly(T)$).
    \item If the strategies are explicit Pauli-sparse unitaries on $q$ qubits that have at most $s$ nonzero Pauli coefficients, then $f\in \DTIME(\poly(nqs))$.
    \item If the strategies are Clifford+T circuits using at most $t$ magic gates, then $f\in \DTIME(\poly(nq2^t))$.
\end{enumerate}
Time and space hierarchies then yield explicit functions secure against polynomial and even quasipolynomial qubits $q$ under our computational restrictions.
These bounds exceed the $q\le \log n$ bound of \cite{BluhmChristandlSpeelman2022} for inner product function $f=\mathrm{IP}$, at the cost of restricting adversarial computation and increasing honest evaluation complexity.
\end{abstract}

\tableofcontents

\section{Introduction}\label{sec:introduction}
Quantum position verification is an exciting direction in Quantum Information theory and Cryptography that aims to turn a device's \emph{physical location} into a verifiable credential, enabling applications such as position-based authentication and key exchange \cite{BuhrmanEtAl2014Position}. Its central challenge is to certify a prover's location even when several devices elsewhere cooperate to impersonate it: see \cite{AlnawakhthaMiller2022} for a survey.

One idea for verifying a prover's location is to give it a distributed challenge, whose parts arrive from different locations, and time its response. Assuming we can estimate the communication and computation times required by all participants, a deadline for combining these parts and responding constrains the distance the messages could have traveled. This follows the principle of \emph{distance bounding} introduced by \citet{BrandsChaum1994}.

To make this concrete, we focus on the simplest setting, one-dimensional space, which already captures the essential challenge of position verification. Consider a prover $P$ claiming to be halfway between two verifiers $V_0$ and $V_1$ on a line (see \cref{fig:intro-spacetime}(a)).
The verifiers publicly announce a function $f\colon\powern\times\powern\to\power{}$ that specifies the prover's response to the distributed challenge.
Then, the verifiers send independent strings $x,y\in\powern$, timed to arrive at $P$ simultaneously. The prover must compute $b=f(x,y)$ and return $b$ to both verifiers within the prescribed time.

\begin{figure}[!ht]
\centering
\begin{tikzpicture}[x=0.9cm,y=0.9cm,>=stealth,font=\small]
  \begin{scope}
    \node at (2,-1.3) {(a) Honest $f$-routing};
    \foreach \x/\name in {0/V_0,2/P,4/V_1} {
      \node[below] at (\x,0) {$\name$};
    }
    \draw[->] (-0.8,-0.65) -- (-0.8,4.65) node[above] {time};
    \draw[->] (-0.8,-0.65) -- (4.5,-0.65) node[right] {space};
    \draw[->,thick,shorten >=4pt] (0,0.4) -- (2,2.4) node[pos=0.46,above left] {$x,\mathtt{Q}$};
    \draw[->,thick,shorten >=4pt] (4,0.4) -- (2,2.4) node[pos=0.46,above right] {$y$};
    \fill (2,2.4) circle (1.8pt);
    \draw[->,thick,dashed,shorten <=4pt] (2,2.4) -- (0,4.4) node[midway,below left] {$b=0$};
    \draw[->,thick,dashed,shorten <=4pt] (2,2.4) -- (4,4.4) node[midway,below right] {$b=1$};
  \end{scope}
  \begin{scope}[xshift=7cm]
    \node at (2,-1.3) {(b) Classical copy-and-forward attack};
    \foreach \x/\name in {0/V_0,1/A,3/B,4/V_1} {
      \node[below] at (\x,0) {$\name$};
    }
    \draw[->] (-0.8,-0.65) -- (-0.8,4.65) node[above] {time};
    \draw[->] (-0.8,-0.65) -- (4.5,-0.65) node[right] {space};
    \draw[->,thick,shorten >=3pt] (0,0.4) -- (1,1.4) node[midway,above left] {$x$};
    \draw[->,thick,shorten >=3pt] (4,0.4) -- (3,1.4) node[midway,above right] {$y$};
    \draw[->,thick,dashed,shorten <=3pt,shorten >=3pt,preaction={draw,white,solid,line width=3pt}] (1,1.4) -- (1,3.4);
    \draw[->,thick,dashed,shorten <=3pt,shorten >=3pt,preaction={draw,white,solid,line width=3pt}] (3,1.4) -- (3,3.4);
    \node[anchor=east] at (0.8,2.4) {$x$};
    \node[anchor=west] at (3.2,2.4) {$y$};
    \draw[->,thick,shorten <=3pt,shorten >=3pt] (1,1.4) -- (3,3.4) node[pos=0.22,above left] {$x$};
    \draw[->,thick,shorten <=3pt,shorten >=3pt,preaction={draw,white,line width=3pt}] (3,1.4) -- (1,3.4) node[pos=0.22,above right] {$y$};
    \draw[->,thick,shorten <=3pt] (1,3.4) -- (0,4.4);
    \draw[->,thick,shorten <=3pt] (3,3.4) -- (4,4.4);
    \foreach \x/\y in {1/1.4,3/1.4,1/3.4,3/3.4} {\fill (\x,\y) circle (1.8pt);}
  \end{scope}
\end{tikzpicture}
\caption{Spacetime diagrams with instantaneous local computation and communication at the speed of light. In (a), $P$ routes the quantum challenge along the dashed path selected by $b=f(x,y)$. In (b), $A$ and $B$ retain and exchange classical copies, allowing both to respond on time.}
\label{fig:intro-spacetime}
\end{figure}

As observed in the seminal work of \citet{ChandranEtAl2009}, classical protocols allow two colluding devices on opposite sides of the claimed position to keep copies of the messages they intercept and exchange them. Both then learn $f(x,y)$ in time to provide the required response to their nearby verifier. This ``copying'' attack, illustrated in \cref{fig:intro-spacetime}(b), underlies the impossibility of classical position verification without additional assumptions.

$f$-routing protocols \cite{KentMunroSpiller2011,BuhrmanEtAl2013GardenHose} build on the above idea by using $f$ to specify the destination of a quantum challenge. Verifier $V_0$ sends a quantum state, unknown to the prover, alongside $x$, and the prover must forward it unchanged to $V_{f(x,y)}$. While the no-cloning principle prevents the classical copying attack, colluding adversaries sharing a large enough entangled state can nonetheless implement the routing task using Non-Local Quantum Computation \cite{Vaidman2003,BeigiKonig2011,BuhrmanEtAl2014Position} (see also the recent monograph by \citet{May2026EntanglementCost}). The entanglement cost of the $f$-routing task for general $f$ remains a wide open question, despite extensive work over the past decade \cite{TomamichelFehrKaniewskiWehner2013,ChiuEtAl2014,KlauckPodder2014,RibeiroGrosshans2015,Speelman2016,ArunachalamPodder2021,BluhmChristandlSpeelman2022,CreeMay2023,AllerstorferEtAl2024,AsadiCleveCulfMay2025,AsadiCulfMay2025}.

The challenge is to choose $f$ that an honest prover can compute efficiently once both inputs arrive, but whose routing task is costly for spatially separated adversaries. We study this cost in quantum storage and in the resources needed to prepare the shared state and perform local operations.

Throughout this work, we consider this arrangement of two colluding adversaries on opposite sides of the claimed position, whom we call Alice ($A$) and Bob ($B$), located near $V_0$ and $V_1$, respectively. Informally, they intercept their respective challenges, perform local computations, and simultaneously exchange messages before responding to their nearby verifiers. 
Establishing establish security of $f$-routing against such adversaries under various restrictions is the main goal of this work.
We use the standard worst-case correctness requirement for implementing the $f$-routing task, requiring a single strategy that succeeds in the protocol with probability $\ge1-\eps$ on every input pair $(x,y)$. Proving worst-case lower bounds is already challenging  \cite{BuhrmanEtAl2013GardenHose,AllerstorferEtAl2024,AsadiEtAl2025CDQS}.

\paragraph{Known Results.}

Teleportation-based attacks can implement $f$-routing for any function on $n$-bit inputs using $2^{O(n)}$ shared EPR pairs (see \cite{Vaidman2003,BeigiKonig2011,BuhrmanEtAl2014Position,KentMunroSpiller2011,BuhrmanEtAl2013GardenHose} and references therein).  A recent connection to Conditional Disclosure of Secrets yields a perfect attack for every $f$ using $2^{O(\sqrt{n\log n})}$ qubits of shared resource and quantum communication (see \cite{AllerstorferEtAl2024,LiuVaikuntanathanWee2017}).

On the positive side, \citet{Unruh2014} proved security of an $f$-measure-based position verification protocol in the random oracle model against adversaries making polynomially many oracle queries, even with unbounded pre-shared entanglement.
\citet{ColissonPalaisEscolaFarrasSpeelman2025} showed that $m$ parallel repetitions of their basic $f$-routing protocol achieve exponentially small soundness $2^{-\Omega(m)}$ against adversaries sharing at most $\alpha m$ entangled qubits, for a sufficiently small $\alpha >0$; the drawback of this protocol is that its quantum resources scale with $m$.

Further security results and lower bounds under various restrictions on the adversaries were studied as well (e.g., Garden Hose attacks \cite{BuhrmanEtAl2013GardenHose}, classical communication \cite{RibeiroGrosshans2015}, span programs \cite{CreeMay2023}, and much more \cite{KlauckPodder2014,RibeiroEtAl2018,May2022Complexity,AsadiCleveCulfMay2025,AsadiCulfMay2025}).

For the most general adversarial models, two main works give resource lower bounds \cite{AsadiCleveCulfMay2025,BluhmChristandlSpeelman2022}. Here, $q$ denotes an upper bound on the number of qubits held by each adversary. The former \cite{AsadiCleveCulfMay2025} proved a lower bound of $\Omega(\tfrac{n}{\log q})$ on the number of quantum gates and measurements. The latter, due to \citet{BluhmChristandlSpeelman2022}, studied the bounded quantum-storage model.
They established information-theoretic security $f$-routing when each of the two adversaries initially holds at most $q$ qubits, while each adversary is computationally unbounded. Their explicit construction sets $f$ to be the inner-product function $\mathrm{IP}(x,y)=\sum_{i=1}^n x_i y_i\bmod 2$.

\begin{theorem}[{\cite{BluhmChristandlSpeelman2022}}]\label{thm:intro-bcs}
The $\mathrm{IP}$-routing protocol is secure against adversaries each initially holding at most $q\leq\frac12\log_2 n-c$ qubits, for a universal constant $c$.
\end{theorem}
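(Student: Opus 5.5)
The approach follows the Bluhm–Christandl–Speelman route: discretize the space of attack strategies, and show that a low-dimensional attacker's behaviour is captured by a short classical summary, which contradicts the communication complexity of $\mathrm{IP}$.

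\textbf{Step 1: reduce to a first-round map.} Any attack consists of three pieces: Alice's channel on input $x$, Bob's channel on input $y$, and a shared initial state $\rho$ on $q+q$ qubits, followed by one round of simultaneous communication. Success with probability $\ge 1-\eps$ on input $(x,y)$ means that after the exchange, the qubit $\mathtt{Q}$ can be recovered by $A$ when $f(x,y)=0$ and by $B$ when $f(x,y)=1$. By no-cloning, this forces a specific property of the state held after the first-round channels: the purified reference system of $\mathtt{Q}$ must be almost entirely correlated with the subsystem routed to the correct side. Equivalently, a decoupling condition holds towards the other side. I will formalize this as a function $g(x,y)\in\{0,1\}$ determined by the pair of channels $(\Phi_x,\Psi_y)$ acting on the fixed $\rho$. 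Any $(\Phi,\Psi)$ whose joint output is $\delta$-close to a valid pair forces the same routing bit.

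\textbf{Step 2: discretize via nets.} Since $\rho$ has dimension $2^{2q}$, the relevant first-round channels live in a space of dimension $\exp(O(q))$. I take $\delta$-nets $N_A$ and $N_B$ in diamond norm, each of size $(1/\delta)^{\exp(O(q))}$. The attack then induces maps $x\mapsto a(x)\in N_A$ and $y\mapsto b(y)\in N_B$, and by Step 1 and continuity, $f(x,y)=g(a(x),b(y))$ for all $x,y$. This is a classical one-way or simultaneous-message protocol with message length $\log|N_A| = 2^{O(q)}\log(1/\delta)$. Choosing $\delta$ to be a constant depending on $\eps$, the communication cost is $2^{O(q)}$ bits.

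\textbf{Step 3: invoke the IP lower bound.} The deterministic, and even randomized, one-way communication complexity of $\mathrm{IP}_n$ is $\Omega(n)$. Hence $2^{cq}\ge \Omega(n)$. Tracking the constant carefully, with the net dimension controlled by $4^{q}$ real parameters coming from the channel Choi matrices restricted to the input, gives $2^{2q}\gtrsim n$, i.e.\ $q\ge \frac12\log_2 n - c$, which is the claimed bound.

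\textbf{Main obstacle.} The hard part is Step 1 together with the dimension count in Step 2. One has to show that the first-round data alone, and not the full description of the adversaries' channels (which may have large outputs), determines the routing bit robustly. One also needs the net to range over an object of dimension only $4^{q}$, rather than over channels with unbounded output dimension. I would first try parametrizing by the reduced action on $\mathtt{Q}$ together with the $q$-qubit memory, using Stinespring isometries to bound the effective output, and then use a continuity (Fuchs–van de Graaf) argument to transfer correctness from the true channels to their net approximants.
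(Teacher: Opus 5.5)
The paper does not prove this statement; it is quoted from \cite{BluhmChristandlSpeelman2022}. Your outline follows the classical-rounding argument used there. You take a net over the first-round operations and attach to each pair of net points a routing bit that is robust by continuity and no-cloning. Your Step~1 is essentially what \cref{thm:universal-fidelity-gap} provides: $F_0$ is continuous in the mid-protocol state, and its two ranges separate for small $\eps$. Finally, correctness on every input forces $x\mapsto a(x)$ to be injective, because distinct $x$ give distinct rows of $\mathrm{IP}$, so $\log_2|N_A|\geq n$.

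\textbf{The gap is the parameter count that is supposed to produce the constant $\tfrac12$.} A channel on $q+1$ qubits has a Choi matrix of size $4^{q+1}\times 4^{q+1}$. A diamond-norm net over such channels therefore has $\log|N_A|=\Theta(16^{q}\log(1/\delta))$, and your argument would only give $q\geq\tfrac14\log_2 n-c$.

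Passing to Stinespring isometries does not fix this. The dilation of a channel with $2^{q+1}$-dimensional input and output also has on the order of $16^{q}$ real parameters. Moreover, Stinespring bounds the environment dimension, not the output dimension. So if first-round outputs (retained ancillas or quantum messages) were unbounded, no finite net would exist at all.

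\textbf{The missing ingredient is the storage bound of the model, not a dilation argument.} Each attacker's first-round operation acts only on the at most $q$ qubits it holds, plus $\mathtt Q$ for Alice. Without loss of generality it is then a unitary $U_x$ on $\mathtt{QL}$ (respectively $V_y$ on $\mathtt R$), followed by a choice of which qubits to send. Any discarding can be deferred to the unconstrained post-processing.

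With this reduction the count works:
\begin{itemize}
\item An operator-norm $\delta$-net of the unitary group on $q+1$ qubits has $(C/\delta)^{O(4^{q})}$ elements.
\item Replacing $U_x,V_y$ by net points moves the global pure state by at most $2\delta$ in Euclidean norm, so the robustness from Step~1 still applies.
\item Encoding which qubits are sent costs only $O(q)$ extra bits.
\end{itemize}
The injectivity bound then reads $O(4^{q}\log(1/\delta))+O(q)\geq n$. This is exactly $q\geq\tfrac12\log_2 n-c$, with $c$ depending only on the fixed error threshold.
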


We emphasize that the storage restriction in \cref{thm:intro-bcs} is essential for the inner-product function. Indeed, $\mathrm{IP}$-routing admits a perfect attack using $O(n)$ shared EPR pairs, since its garden-hose complexity satisfies $\mathrm{GH}(\mathrm{IP})=\Theta(n)$ \cite{BuhrmanEtAl2013GardenHose,KlauckPodder2014}.

Additionally, \cite{BluhmChristandlSpeelman2022} also showed that almost every Boolean function $f$ yields security against adversaries holding at most $n$ qubits. However, almost all such functions require exponentially large circuits to evaluate, making them impractical for an efficient honest prover. It remains open to describe an explicit function for which $f$-routing requires superlogarithmic entanglement.

\subsection{Main Results}\label{sec:main-results}

In this work we study the relationship between the computational complexity of $f$ and the computational resources required for an attack against $f$-routing.

Our main contribution is a security reduction that turns a successful $f$-routing strategy into an algorithm for computing $f$. At a high level, our idea is to relate $f(x,y)$ to the quantum information produced at an intermediate stage of the strategy, before the adversaries can combine both inputs to evaluate $f(x,y)$ directly. We use their circuit descriptions to reproduce this information and reduce computing $f(x,y)$ to the complement of Quantum State Distinguishability (QSD). In particular, the circuits in the resulting QSD instance are closely related to the adversaries' initialization and local circuits.

To begin with, let us introduce the adversarial model.
%All our results apply to arbitrary qudit dimension $d$, but we simplify in the exposition for qubits $d=2$.
Turning a successful routing strategy into a uniform computational upper bound on $f$ requires an algorithm that constructs the adversaries' circuits. Henceforth we suppose the strategies are uniform in the following sense.

\newcommand{\cgen}{c_{\mathrm{gen}}}
\begin{definition}\label{def:uniform-strategies-informal}
Let $q(n) \col \N \to \N$.
A family of routing strategies is \emph{$q$-uniform} if there exist three fixed classical Turing machines $(\genresource,\genalice,\genbob)$ 
and a fixed constant $\cgen\geq1$ such that, for every $n\in\N$
\begin{enumerate}
    \item $\genresource(1^n)$ prints a description of a quantum channel preparing the adversaries' initial entangled resource state.
    \item For every $x,y \in \powern$, the machines $\genalice(1^n, x)$ and $\genbob(1^n, y)$ print descriptions of Alice's and Bob's quantum channels implementing their local algorithms on inputs $x$ and $y$, respectively. These algorithms process their local resources and determine which message (quantum or classical) to send to the other adversary.
\end{enumerate}
Moreover, the quantum channels act on at most $q=q(n)$ qubits, and all the above machines run in time at most $(qn)^{\cgen}$.
\end{definition}
Our resource-bounded model is motivated by the wide gap between known attacks and security guarantees for $f$-routing. For any $f$, there is a perfect attack using $2^{O(\sqrt{n\log n})}$ qubits of shared resource and quantum communication \cite{AllerstorferEtAl2024}. At the other extreme, position verification can be information-theoretically secure provided , even against computationally unbounded adversaries (see \cite{BuhrmanEtAl2014Position,ColissonPalaisEscolaFarrasSpeelman2025}). We study strategies between these extremes, seeking security against larger attacks while keeping honest evaluation of $f$ efficient.

Our model charges for preparing the initial shared state and for generating and carrying out the adversaries' local operations before communication, including any classical or quantum processing of $x$ and $y$ used in those operations. Computation unrelated to the initialization or these input-dependent operations is free. 
We emphasize that, at a later stage of the protocol, we permit Alice and Bob to process information received from one another and even to know $f(x,y)$. By this stage, the honest prover would already have had to compute $f(x,y)$ to route correctly. We therefore grant the adversaries this value for free, regardless of the computational cost of obtaining it.\footnote{Alice and Bob simultaneously send one message each, which may contain quantum systems and copies of their classical inputs. They then apply local recovery channels to their retained and received systems, as formalized in \cref{sec:setup}.}

Our full results are generalized for parameterized description length and generation time, but for simplicity, we suppose they scale polynomially with $n,q$ in this overview.

Our main result is a reduction from evaluating $f$ in a given input $(x,y)$, to an instance of the (complement of) the Quantum State Distinguishability (QSD) problem \cite{Watrous02}.
\citet{Watrous02,Watrous09} showed that QSD is complete for $\QSZK$.

To track the resources of this reduction, we introduce the time-bounded class $\QSZK(T)$. Its languages admit quantum statistical zero-knowledge proofs with honest verifier running time at most $T(n)$ and simulation time polynomial in $T(n)$ and the dishonest verifier's running time (see \cref{def:qszk-time}).

\begin{theorem}\label{thm:main-qszk}
There exists a universal constant $\eps_0 >0$ such that for every $\cgen$ there is a constant $c = c(\cgen)$ such that, if a $q$-uniform $f$-routing strategy is $(1-\eps)$-correct on all inputs $x,y \in \powern$ for some $\eps\leq\eps_0$, then
\[
f\in\QSZK\ps{(nq)^c}.
\]
\end{theorem}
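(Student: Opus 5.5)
The plan is to reduce evaluating $f(x,y)$ to the complement of Quantum State Distinguishability, via a \emph{no-cloning} argument applied to the state right after the local operations. Then we invoke Watrous's theorem that QSD, and hence its complement since $\QSZK$ is closed under complement, lies in $\QSZK$. We track the verifier's running time to get the bound $(nq)^c$.

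\textbf{Key structural claim.} Fix $x,y$, and let $\rho^{xy}$ denote the joint state after $\genresource$ and the local channels from $\genalice(1^n,x)$ and $\genbob(1^n,y)$, applied with the quantum challenge $\mathtt{Q}$ half of a maximally entangled state with a reference $R$. Split it into the message Alice sends, what she keeps, the message Bob sends, and what he keeps. After communication, $V_0$'s side holds Alice's kept system $K_A$ together with Bob's message $M_B$, and symmetrically for $V_1$.

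If $f(x,y)=0$, then $(1-\eps)$-correctness means the recovery channel on $K_A M_B$ outputs $\mathtt{Q}$ with fidelity $1-O(\eps)$. By decoupling and no-cloning, the system $K_B M_A$ is then nearly uncorrelated with $R$. Symmetrically, $f(x,y)=1$ forces $K_A M_B$ to be nearly decoupled from $R$.

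So we set $\sigma_0$ to be the reduced state on $R\,K_A M_B$ and $\sigma_1$ to be $\tfrac{\mathbb{1}}{2}\otimes \tr_R(\sigma_0)$. When $f(x,y)=1$ these are $O(\sqrt\eps)$-close in trace distance. When $f(x,y)=0$ they are far apart: the recovery on $K_A M_B$ restores maximal entanglement with $R$, while applying the same recovery to the product state yields something at trace distance at least about $1/2 - O(\sqrt\eps)$. Choosing $\eps_0$ small makes the gap satisfy $\alpha^2>\beta$, so $(\sigma_0,\sigma_1)$ is a valid QSD instance. Its distinguishability encodes $1-f(x,y)$; swapping roles gives $f(x,y)$ via $\overline{\mathrm{QSD}}$, or directly via QSD.

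\textbf{Constructing the instance.} The reduction runs $\genresource$, $\genalice(1^n,x)$ and $\genbob(1^n,y)$, which takes time $(nq)^{\cgen}$. It outputs circuits of size $\poly(nq)$ on $O(q)$ qubits that prepare purifications of $\sigma_0$ and $\sigma_1$; preparing $\sigma_1$ only needs an extra maximally mixed qubit. Since the channels are given as descriptions, we use Stinespring dilations to make them unitary circuits; this accounts for the polynomial blowup in the choice of $c(\cgen)$. Polarization then amplifies the gap. Finally we compose with Watrous's $\QSZK$ protocol for QSD, whose verifier runs in time polynomial in the circuit sizes, giving verifier time $(nq)^{c}$ and simulator time polynomial in that.

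\textbf{Main obstacle.} The hardest step is the quantitative no-cloning bound. Routing with fidelity $1-\eps$ to one side must force near-decoupling on the other side, with a \emph{constant} gap that does not depend on $q$. I would prove it with Uhlmann's theorem. If both $K_AM_B$ and $K_BM_A$ could recover correlation with $R$ beyond the $\sqrt\eps$ level, composing the recovery channels would approximately clone a qubit, contradicting the bound on cloning fidelity. Note that $\sigma_0$ and $\sigma_1$ are defined on the same side for both cases, so this single inequality suffices.
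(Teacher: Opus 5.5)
Your reduction matches the paper's. You take $\sigma_0=\rho^{xy}_{\ol{\mathtt Q}\mathtt A}$ on Alice's post-exchange systems and $\sigma_1=\rho_{\ol{\mathtt Q}}\otimes\rho_{\mathtt A}$. Appending a fresh maximally mixed qubit instead of running a second copy is fine, since the reference marginal is exactly maximally mixed. You handle $f(x,y)=0$ via Alice's decoder plus the EPR test, and finish with Watrous's protocol.

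The gap is in the step you yourself flag as the main obstacle. You name Uhlmann's theorem, but the mechanism you describe is a cloning contradiction, and that mechanism does not give what you need. For $f(x,y)=1$ you need $\Delta(\sigma_0,\sigma_1)=O(\sqrt\eps)$, i.e.\ $\rho_{\ol{\mathtt Q}\mathtt A}$ must be close to \emph{product}. Composing recovery maps only rules out Alice \emph{also} producing something close to $\Phi$, and being far from product does not require that. For instance, suppose $\mathtt A$ holds a classical record of a $Z$-basis measurement of the challenge. Then $\Delta(\rho_{\ol{\mathtt Q}\mathtt A},\rho_{\ol{\mathtt Q}}\otimes\rho_{\mathtt A})=1/2$, which destroys your gap. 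Yet from this state Alice can reach fidelity at most $1/\sqrt2$ with $\Phi$, so it is no clone. Rescuing the cloning route would need two additional results:
\begin{itemize}
\item a lemma converting distance-from-product into recoverable EPR overlap strictly above the trivial value $1/d^2$;
\item a quantitative asymmetric-cloning tradeoff at that scale.
\end{itemize}
Standard cloning bounds are constant thresholds and say nothing ``at the $\sqrt\eps$ level.''

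The missing idea is to use Uhlmann in its decoupling (monogamy) form, which controls all correlations with the reference, classical ones included.
\begin{enumerate}
\item Apply only Bob's decoder, obtaining $\sigma_{\ol{\mathtt Q}\mathtt Q_1\mathtt A}$. Its $\ol{\mathtt Q}\mathtt Q_1$ marginal has fidelity at least $1-\eps$ with the \emph{pure} state $\Phi$.
\item Because $\Phi$ is pure, Uhlmann yields a state $\tau_{\mathtt A}$ with $F(\sigma_{\ol{\mathtt Q}\mathtt Q_1\mathtt A},\Phi\otimes\tau_{\mathtt A})\geq1-\eps$.
\item Bob's decoder acts only on $\mathtt B$, so tracing out $\mathtt Q_1$ returns exactly $\rho_{\ol{\mathtt Q}\mathtt A}$. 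Monotonicity then gives $F(\rho_{\ol{\mathtt Q}\mathtt A},\rho_{\ol{\mathtt Q}}\otimes\tau_{\mathtt A})\geq1-\eps$.
\item Tracing out $\ol{\mathtt Q}$ gives $F(\rho_{\mathtt A},\tau_{\mathtt A})\geq1-\eps$.
\item The fidelity-angle triangle inequality gives $F(\sigma_0,\sigma_1)\geq1-4\eps$, hence $\Delta(\sigma_0,\sigma_1)\leq\sqrt{8\eps}$ by Fuchs--van de Graaf.
\end{enumerate}
With this lemma in place of the cloning argument, the rest of your plan goes through.
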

Here $f\in\QSZK(T)$ means that the language $L_f = \sett{x \circ y}{f(x,y) = 1}$ belongs to $L_f \in \QSZK(T)$.

Our uniform reduction yields security of $f$-routing for \emph{explicit} $f$. Since $\QSZK(T)\subseteq\DSPACE(T^a)$ for a universal constant $a\geq1$ (following from $\QIP = \PSPACE$ \cite{JainJiUpadhyayWatrous11} and a standard padding argument), the deterministic space hierarchy gives the following unconditional security guarantee.

\begin{corollary}\label{thm:main-explicit-security}
    There exists a universal constant $\eps_0 >0$ such that for every $\cgen$ there is a constant $c = c(\cgen)$ such that there is an explicit family of total Boolean functions $f\in\DSPACE((nq)^c)$ for which no $q$-uniform bounded strategy is $(1-\eps_0)$-correct on all inputs.
\end{corollary}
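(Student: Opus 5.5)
The plan is to combine \cref{thm:main-qszk} with the inclusion $\QSZK(T)\subseteq\DSPACE(T^a)$ and then diagonalize with the deterministic space hierarchy theorem. First fix the universal $\eps_0$ from \cref{thm:main-qszk}, and let $c_1=c_1(\cgen)$ be the constant it gives. Then every $f$ admitting a $q$-uniform $(1-\eps_0)$-correct strategy satisfies $L_f\in\QSZK((nq)^{c_1})$. By $\QIP=\PSPACE$ \cite{JainJiUpadhyayWatrous11} plus padding, $\QSZK(T)\subseteq\QIP(T)\subseteq\DSPACE(T^a)$ for a universal $a\ge1$. Hence every such $L_f$ lies in $\DSPACE((nq)^{ac_1})$.

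One subtlety must be handled first. The machines $(\genresource,\genalice,\genbob)$ are fixed for a given strategy family, but the space bound obtained must be \emph{uniform} over all such families with the same $\cgen$. Otherwise the diagonal function would have to defeat infinitely many different constants. I would check that in \cref{thm:main-qszk} the QSZK verifier's running time depends only on $\cgen$: it simulates the generators for $(nq)^{\cgen}$ steps and runs the circuits they print. The constant hidden in $O(\cdot)$ that depends on the specific machines is absorbed by considering large enough $n$, or by a universal simulation with polylog overhead. So the class $\mathcal{C}$ of all attackable $f$ is contained in $\DSPACE(S(n))$ with $S(n)=(nq)^{ac_1}$, where the input length of $L_f$ is $2n$. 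I also need $q(n)$ to be space-constructible; I would assume this, for example $q$ a polynomial or quasipolynomial of the standard form.

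Next apply the space hierarchy theorem. Since $S(n)=o(S'(n))$ for $S'(n)=(nq)^{ac_1+1}$, there is a language $L\in\DSPACE(S')\setminus\DSPACE(S)$. It is explicit, given by diagonalizing over space-$S$ machines. Define the total Boolean function $f(x,y)=\mathbbm{1}[x\circ y\in L]$ on inputs of length $2n$; if the hierarchy language is over all lengths, restrict or pad it to even lengths. Then $f\in\DSPACE((nq)^c)$ with $c=ac_1+1$, which depends only on $\cgen$. If some $q$-uniform strategy were $(1-\eps_0)$-correct for $f$ on all inputs, then $L_f=L\in\DSPACE(S)$, a contradiction.

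I expect the main obstacle to be the uniformity issue above: making sure the constant $c_1$ from \cref{thm:main-qszk}, and hence the space bound, does not depend on the particular generating machines, and that the hierarchy theorem applies. The hierarchy theorem requires infinitely-often separation, so the statement is "no strategy is correct for all $n$". I would handle this with a single universal verifier that takes the generator descriptions as constant-size advice hardwired into the machine. A fixed machine has constant overhead, so the space bound is $O(S)$, and the hierarchy theorem tolerates constant factors.
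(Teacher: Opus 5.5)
Your proposal is correct and follows the paper's route: apply \cref{thm:main-qszk}, use $\QSZK(T)\subseteq\QIP(T)\subseteq\DSPACE(T^a)$ (from $\QIP=\PSPACE$ plus padding), and then diagonalize with the deterministic space hierarchy theorem. Your extra care is sound and covers details the paper leaves implicit: the exponent depends only on $c_{\mathrm{gen}}$, with machine-specific constants absorbed because $\DSPACE(S)$ tolerates constant factors, and the hierarchy language has to be transferred to even input lengths $2n$.
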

The main advantage of \cref{thm:main-explicit-security} is that it tolerates arbitrary quantum channels (within the stated resource bounds). We illustrate the resulting security bounds and compare them with prior work in \cref{sec:explicit-security}.
In particular, the above yields unconditional security against adversaries holding arbitrarily polynomial qudits $q= \poly(n)$. 
We reach this regime by allowing more time for honest evaluation while bounding its space usage. This relaxation is, in fact, \emph{necessary} unless one wants to separate $\L$ from $\P$ (see \cite{BuhrmanEtAl2013GardenHose}).

We next obtain tighter computational bounds on $f$ by further restricting the adversaries' resources.
We first consider $(q,t)$-uniform Clifford+$T$ strategies, which are $q$-uniform and use at most $t$ magic gates in each circuit.
For these strategies, we obtain an improved, deterministic algorithm for computing $f$, with running time polynomial in the input length, number of qubits, and circuit size, and exponential only in the number of magic gates.
\begin{theorem}\label{thm:main-clifford-security}
    There exists a universal constant $\eps_0 >0$ such that for every $\cgen$ there is a constant $c = c(\cgen)$ such that if a $(q,t)$-uniform Clifford+$T$ $f$-routing strategy is $(1-\eps)$-correct on every input at every input length for some $0\leq\eps\leq\eps_0$, then
\[
f\in\DTIME\ps{(nq 2^t)^c}.
\]
\end{theorem}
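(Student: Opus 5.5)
We follow the same reduction as in \cref{thm:main-qszk}, but replace the generic QSZK step with a direct deterministic computation that exploits the Clifford+$T$ structure. First, we use the structural fact behind the main reduction. Fix $(x,y)$, and consider the state after both adversaries apply their local channels but before the recovery step. Take Bob's received system together with his retained system, on inputs $x$ and $y$. If $f(x,y)=0$, this joint state must be (nearly) independent of the quantum challenge $\mathtt{Q}$, since Alice has to output $\mathtt{Q}$ and no-cloning rules out Bob also holding it. If $f(x,y)=1$, Bob must be able to recover $\mathtt{Q}$. We make this quantitative by comparing two reduced states $\rho_0^{xy},\rho_1^{xy}$ on Bob's side. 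These are obtained by feeding in two orthogonal challenge states, or equivalently by using half of a maximally entangled pair with $\mathtt{Q}$ and tracing appropriately. Their trace distance is $\le O(\sqrt\eps)$ when $f=0$ and $\ge 1-O(\sqrt\eps)$ when $f=1$, by Fuchs--van de Graaf and the decoupling/recoverability complementarity. For $\eps\le\eps_0$ this gives a constant gap, say below $1/4$ versus above $3/4$. Computing $f(x,y)$ then reduces to estimating $\|\rho_0^{xy}-\rho_1^{xy}\|_1$ to within additive error $1/4$.

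Second, we compute these states explicitly. We run $\genresource(1^n)$, $\genalice(1^n,x)$, and $\genbob(1^n,y)$ in time $(qn)^{\cgen}$ to obtain Clifford+$T$ circuits on $q$ qubits with at most $t$ $T$-gates each. We represent channels by Stinespring dilations, so that measurements and discards become unitary gates plus partial traces, and ancillas are counted in $q$. The whole pre-recovery process is then a Clifford+$T$ circuit on $\poly(q)$ qubits with $O(t)$ $T$-gates, acting on a stabilizer input; the challenge $\mathtt{Q}$ enters as a Bell pair. Each $T$-gate can be gadgetized with a magic state $\ket{T}$, and each magic state decomposed as a sum of 2 stabilizer states (or via stabilizer-rank bounds). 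This writes the final pure state as a linear combination of at most $2^{O(t)}$ stabilizer states, each with a tableau and phase computable in $\poly(q)$ time by Gottesman--Knill, with postselected measurement outcomes folded into the amplitudes. So $\rho_b^{xy}$ is a sum of $2^{O(t)}$ terms $\alpha_i\bar\alpha_j\,\tr_{\text{rest}}\ketbra{\phi_i}{\phi_j}$.

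Third, we estimate the trace distance, which is the main obstacle. The reduced states live on up to $\poly(q)$ qubits, so we cannot write them densely. We plan to use the rank bound instead. $\rho_0^{xy}-\rho_1^{xy}$ is supported on the span of the reduced operators, and each $\tr_{\text{rest}}\ketbra{\phi_i}{\phi_j}$ of stabilizer states is a stabilizer-type operator, namely a projector onto a stabilizer code times a Pauli, up to scaling. So the difference lies in the algebra generated by $2^{O(t)}$ stabilizer codes. Concretely, we purify the difference on a joint space and compute the Gram matrix $G_{ij}=\braket{\phi_i}{\phi_j}$ on the traced-out register. Inner products of stabilizer states are computable exactly in $\poly(q)$ time, and with them the operator $\rho_0-\rho_1$ can be represented in the $2^{O(t)}$-dimensional span of the vectors $\tr$-partners on Bob's side. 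Bob's reduced states act on the span $S$ of the Bob-side components. We obtain an orthonormal-basis representation of $S$ by Gram–Schmidt using stabilizer inner products. This works because each Bob-side reduced stabilizer state is a mixture over a stabilizer code, so we expand it in a stabilizer basis of dimension at most $2^{O(t)}$ per term, a step we will need to control by keeping track of the shared stabilizer group. In that basis, $\rho_0-\rho_1$ is a $2^{O(t)}\times 2^{O(t)}$ Hermitian matrix whose entries are exact rationals in $\mathbb{Q}[1/\sqrt2,i]$. We diagonalize it numerically to sufficient precision in $\poly(2^t)$ time to read off the trace norm within $1/8$.

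Finally, the total running time is $(nq)^{O(\cgen)}\cdot\poly(q)\cdot 2^{O(t)}\le (nq2^t)^c$, and comparing the estimate to $1/2$ decides $f(x,y)$. If controlling the dimension of $S$ fails in the above way, the fallback is to note that $\tr\bigl((\rho_0-\rho_1)^{2k}\bigr)$ is a sum of $2^{O(kt)}$ traces of products of stabilizer operators, each computable in $\poly(q)$ time. Low constant moments separate the two cases via $\|\cdot\|_2$ versus $\|\cdot\|_1$ bounds, but only when $S$ has $2^{O(t)}$ dimension. So the rank bound of step three is the crux either way.
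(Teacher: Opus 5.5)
The gap is in your third step, which you correctly flag as the crux: the dimension bound you propose there is false. Your first two steps are sound. Comparing Bob's mid-protocol states $\rho_0^{xy},\rho_1^{xy}$ for orthogonal challenge inputs is a legitimate substitute, for qubits, for the paper's test on $F_0=F(\rho_{\ol{\mathtt Q}\mathtt A},\rho_{\ol{\mathtt Q}}\otimes\rho_{\mathtt A})$. Gadgetizing the $T$ gates gives the same $2^{O(t)}$-term stabilizer expansion the paper uses.

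The reduced state of a single stabilizer state on Bob's side is proportional to a projector onto a stabilizer code of dimension $2^{k}$. Here $k$ is the number of ebits the Clifford part places across the cut, and it is unrelated to $t$. Already for $t=0$, take a resource state of $k=\Omega(q)$ EPR pairs between $\mathtt L$ and $\mathtt R$, costing $O(q)$ Clifford gates, with Bob simply retaining his halves. Then $\rho_b^{xy}$ has rank at least $2^{k}$. So the span $S$ of Bob-side components, and any Gram--Schmidt basis for it, is exponential in $q$, not in $t$. The paper makes this point in its discussion of ebits.

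The moment fallback fails for the same reason, as you anticipate. In this example, low Schatten moments of $\rho_0-\rho_1$ are exponentially small in both cases. Normalizing them by purities is not robust to the $O(\sqrt{\eps})$ perturbations that approximate correctness allows. So moments cannot determine $\norm{\rho_0-\rho_1}_1$ to constant accuracy. As written, the proposal gives no algorithm faster than $2^{O(q)}$.

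The missing idea is to control stabilizer \emph{nullity} rather than rank. Each $T$ gate destroys at most one independent Pauli stabilizer. The relevant pure states therefore keep $q-O(t)$ independent stabilizers, even though their reduced states have exponential rank. The paper proceeds in three moves:
\begin{enumerate}
\item It writes $F(\rho_{\mathtt A},\sigma_{\mathtt A})=\norm{\tr_{\mathtt A}(\ketbra{\sigma}{\rho})}_1$, i.e., as the sum of Schmidt coefficients of the vectorization $\ket\chi$. This vector is a Bell contraction of two low-nullity states, so it has nullity at most $2t$ and a $2^{2t}$-term stabilizer expansion.
\item It applies local Cliffords on the two sides of the cut, via the Looi--Griffiths tripartite normal form applied to the code state together with a reference. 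This brings $\ket\chi$ to the form $\ket{0}^{\otimes r}\otimes\ket{\Phi^+}^{\otimes p}\otimes\ket{\gamma}$, with $\ket\gamma$ on $O(t)$ qubits.
\item The $p$ Bell pairs are exactly the unboundedly many cross-cut ebits that break your dimension count, and they contribute only the known factor $2^{p/2}$. The amplitudes of $\ket\gamma$ come from the stabilizer expansion, and an SVD of a $2^{O(t)}$-sized matrix finishes.
\end{enumerate}

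Your trace-distance criterion could be rescued the same way. For qubits, let $\ket\Psi$ be a purification on $\ol{\mathtt Q}\mathtt B\mathtt E$. Then $\rho_0-\rho_1=2\tr_{\ol{\mathtt Q}\mathtt E}\bigl(Z_{\ol{\mathtt Q}}\ketbraa{\Psi}\bigr)$ is again a partial trace of an outer product of two low-nullity states. Its vectorization on two copies of $\mathtt B$ therefore also has nullity $O(t)$. This local-Clifford reduction is exactly what your third step lacks.
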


Applying the deterministic time hierarchy again yields explicit functions secure against such strategies.

\begin{corollary}\label{cor:main-clifford-security}
    There exists a universal constant $\eps_0 >0$ such that for every $\cgen$ there is a constant $c = c(\cgen)$ such that there there is an explicit family of total Boolean functions $f\in \DTIME((n q2^t)^{c+1})$ for which no $(q,t)$-uniform Clifford+$T$ strategy is $(1-\eps_0)$-correct on all inputs.
\end{corollary}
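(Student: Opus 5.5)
The plan is to combine \cref{thm:main-clifford-security} with the deterministic time hierarchy theorem, as was done for \cref{thm:main-explicit-security} using the space hierarchy.

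Let $\eps_0$ and $c=c(\cgen)$ be as in \cref{thm:main-clifford-security}, and set $T(n) = (nq(n)2^{t(n)})^{c}$. We assume that $q$ and $t$ are time-constructible, so that $T$ is time-constructible and $T(n)\ge n$. The deterministic time hierarchy theorem gives $\DTIME(T)\subsetneq\DTIME(T\log T)$. Since $T\log T \le (nq2^t)^{c+1}$ for large $n$, there is a language $L\in\DTIME((nq2^t)^{c+1})\setminus\DTIME(T)$.

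We need $L$ to be an explicit family of total functions $f\colon\powern\times\powern\to\power{}$, not just a language. I would take the standard diagonalizing language, which clocks the $i$-th machine for $T$ steps on inputs of length $2n$ and flips its answer; this is explicit. To match the format $L_f=\sett{x\circ y}{f(x,y)=1}$, I would restrict attention to even input lengths $2n$ and define $f(x,y)=1$ iff $x\circ y\in L$. The parameters $q,t$ are then functions of $n$, and the length change from $n$ to $2n$ only costs a constant factor, which is absorbed by enlarging $c$.

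With $f$ defined this way, the contradiction is immediate. Suppose some $(q,t)$-uniform Clifford+$T$ strategy were $(1-\eps_0)$-correct on all inputs at every length. Then \cref{thm:main-clifford-security} would give $f\in\DTIME(T)$, contradicting the choice of $L$.

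The main obstacle is bookkeeping rather than substance, and it comes in two parts. First, the hierarchy must be applied to the language of even-length padded inputs, so the diagonalization has to be arranged to run only over lengths $2n$. Second, the $\log T$ slack and the constant-factor losses must fit inside the extra $+1$ in the exponent. Both are routine provided $q$ and $t$ are time-constructible, which I would state as an implicit assumption.
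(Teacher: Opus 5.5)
Your proposal is correct and is essentially the paper's argument, which simply combines \cref{thm:main-clifford-security} with the deterministic time hierarchy theorem and gives no further detail. One minor correction: the standard hierarchy theorem requires $T\log T=o(T')$, so it does not give $\DTIME(T)\subsetneq\DTIME(T\log T)$ as you wrote; it does, however, directly give $\DTIME(T)\subsetneq\DTIME(T')$ for $T'=(nq2^t)^{c+1}=T\cdot nq2^t$, and this is the separation your argument actually uses.
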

Our reduction goes beyond the linear gate bound of \citet{AsadiCleveCulfMay2025} for $(q,t)$-uniform Clifford+$T$ strategies. It tolerates any fixed polynomial circuit-size bound, or even superpolynomial bounds, at the cost of choosing $f$ from a suitably higher level of the deterministic time hierarchy and increasing the honest evaluation time accordingly.

Secondly, we consider $(q,s)$-Pauli-sparse strategies, that restrict the adversaries to be implemented by unitaries on at most $q$ qubits, each supported on at most $s$ coefficients in the Pauli basis. This sparsity allows us to compute $f$ in time polynomial in $n$, $q$, and $s$. We consider two cases: explicit strategies, in which the Pauli terms and their coefficients can be printed in deterministic time $(nqs)^{\cgen}$, and non-explicit strategies, in which the strategy is provided by an oracle. In the former case we obtain a deterministic classical algorithm; in the latter, we obtain a bounded-error quantum algorithm.

\begin{theorem}\label{thm:main-pauli-security}
    There exists a universal constant $\eps_0 >0$ such that for every $\cgen$ there is a constant $c = c(\cgen)$ such that the following holds. Suppose a $(q,s)$-Pauli-sparse $f$-routing strategy is $(1-\eps)$-correct on every input at every input length for some $\eps\leq\eps_0$.
\begin{enumerate}
\item If the strategy is explicit and $q$-uniform, then $f\in\DTIME\ps{(nqs)^c}$.
\item If the strategy is non-explicit and $q$-uniform, then $f\in\BQTIME\ps{(nqs)^c}$.
\end{enumerate}
\end{theorem}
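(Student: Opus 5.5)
We reuse the reduction behind \cref{thm:main-qszk}, which reduces deciding $f(x,y)$ to distinguishing two quantum states produced at an intermediate stage of the strategy. Fix the input $(x,y)$. Because the strategy is $(1-\eps)$-correct, the following holds for the state after Alice's and Bob's local channels have acted but before the recovery channels. Let $\rhopre$ denote this state, and consider the reduced states on the systems that will be routed toward $V_0$ or toward $V_1$. When $f(x,y)=0$, the reduced state on the side that must reconstruct $\mathtt{Q}$ is far from the corresponding state on the other side; when $f(x,y)=1$, the roles are reversed. So there are two explicit density matrices $\sigma_0^{xy},\sigma_1^{xy}$, built from $\genresource$, $\genalice(1^n,x)$, $\genbob(1^n,y)$ and reductions of them, whose trace distance (or fidelity to a fixed reference) is close to $1$ in one case and close to $0$ in the other. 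The gap is constant for $\eps\le\eps_0$, using the same no-cloning argument as in the QSZK proof. Our task is therefore only to estimate one trace-distance/fidelity quantity to constant accuracy. We will choose a quantity that is a polynomial expression in the entries of the states, for example a purity or an overlap $\tr(\sigma\tau)$. This avoids full trace distance: since the gap is constant, we can use Fuchs–van de Graaf together with the fact that one side of the comparison is pure (the reference maximally entangled state with the register holding $\mathtt{Q}$). This reduces the test to computing an overlap $\tr(\Pi\,\sigma)$ with a Pauli-sparse or low-rank projector.

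For the explicit case, we write each unitary as $U=\sum_{P\in S}\alpha_P P$ with $|S|\le s$. The resource state is prepared by a Pauli-sparse unitary applied to $\ket{0^q}$, and so are the local operations. The relevant state $\rhopre=W\ketbraa{0^q}W^\dagger$, with $W=(U_A^x\otimes U_B^y)U_{res}$, is then a sum of at most $s^{O(1)}$ products of Pauli operators. Multiplying Pauli strings costs $O(q)$ time, tracking the phase. Partial traces of Pauli strings are trivial: a string either has identity on the traced register or contributes zero. The reference projector (EPR on the $\mathtt{Q}$ register) is a uniform combination of a constant number of Paulis per qubit of $\mathtt{Q}$; this number is constant since $\mathtt{Q}$ is a single qubit. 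Consequently, $\tr(\Pi\,\tr_{\bar R}\rhopre)=\sum \alpha\bar\beta\cdots\bra{0^q}P\ket{0^q}$ can be evaluated exactly in time $\poly(nqs)$, using the $(nqs)^{\cgen}$-time printing of coefficients and the fact that $\bra{0}P\ket{0}$ is nonzero only for $Z$-type strings. We then compare the result to the threshold $1/2$ and output $f(x,y)$. If the recovery side involves channels rather than unitaries, we Stinespring-dilate them. The dilation keeps sparsity polynomial, assuming the definition of a $(q,s)$-Pauli-sparse strategy counts the dilated unitary; I expect this to be the intended reading.

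For the non-explicit case, we cannot read coefficients, but we have oracle access to the unitaries. We run the swap or Hadamard test on $W\ket{0^q}$ together with a fresh EPR reference. This estimates the same overlap to additive error $0.1$ with $O(1)$ repetitions, in time $\poly(nq)$ times the oracle cost. The sparsity $s$ enters only through implementing $U$, e.g.\ by linear combination of unitaries with $\poly(s)$ overhead. The result is a $\BQTIME((nqs)^c)$ algorithm. The main obstacle is the first step. We must identify an intermediate-stage quantity that is a low-degree polynomial in the state, rather than a trace distance between mixed states, and that still has a constant gap. The plan is to use the pure EPR reference with the routed qubit, and to check that correctness $1-\eps_0$ forces fidelity $\ge 1-O(\sqrt{\eps_0})$ on the correct side, and $\le 1/2+O(\sqrt{\eps_0})$ on the wrong side by monogamy.
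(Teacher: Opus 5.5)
\textbf{There is a genuine gap: the quantity you propose to compute does not exist at the stage where you compute it.}

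In $\rho^{xy}_{\ol{\mathtt Q}\mathtt{AB}}$ there is no ``register holding $\mathtt Q$''. The routed qudit only appears after the post-communication decoder $\cU^{xyb}$ or $\cV^{xyb}$ has acted. For example, suppose Alice teleports $\mathtt Q$ through a shared EPR pair and puts her measurement outcomes into $\mathtt B_2$. Then the joint state of $\ol{\mathtt Q}$ with Bob's EPR half is exactly $\frac{I}{d}\otimes\frac{I}{d}$, even though Bob can recover $\mathtt Q$ perfectly. So no overlap $\tr(\Pi\,\cdot)$ with a fixed EPR projector on a fixed register of the mid-protocol state separates $f=0$ from $f=1$.

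Your remedy, Stinespring-dilating the recovery channels and assuming they are sparse, is not the intended reading. $q$-uniformity (\cref{def:uniform-strategies-informal}) charges only $\genresource,\genalice,\genbob$, and \cref{def:pauli-sparse-strategy} imposes sparsity only on $\cI^n,\cU^x,\cV^y$. The decoders, which also take $b=f(x,y)$ as input, are deliberately left free and may be arbitrarily complex. A symptom of the problem is that in your non-explicit case $s$ plays no role at all. With efficient decoders one could simply simulate the protocol with $b=0$ and test Alice's output against $\ol{\mathtt Q}$, for any strategy, sparse or not.

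\textbf{The missing idea.} Work with the decoder-free quantity of \cref{thm:universal-fidelity-gap}, $F_0=F(\rho_{\ol{\mathtt Q}\mathtt A},\rho_{\ol{\mathtt Q}}\otimes\rho_{\mathtt A})$. It is at least $1-4\eps$ when $f(x,y)=1$ and at most $\frac12+\sqrt{2\eps}$ when $f(x,y)=0$. Both arguments are mixed states, so your Fuchs--van de Graaf reduction to an overlap with a pure reference is unavailable. Purities or $\tr(\sigma\tau)$ do not in general inherit the constant gap either.

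What makes $F_0$ tractable is the sparsity itself:
\begin{itemize}
\item If $W=\sum_{j\le s'}\alpha_jP_j$, then $W\ket{0^q}$ is a sum of at most $s'$ computational-basis product vectors. Grouping terms by the traced-out string gives $\rho=XX^\dagger$ with $X$ having at most $s'$ columns (\cref{thm:facoting nicely midprotocol reduced states X}).
\item $\rho_{\ol{\mathtt Q}}\otimes\rho_{\mathtt A}$ is a reduced state of $W\otimes W$, which has at most $s'^2$ Pauli terms (\cref{lem:pauli-sparse-routing-preparations}).
\item $F(\rho,\sigma)=\norm{X^\dagger Y}_1$ (\cref{fact from factorizations}) is the trace norm of a $\poly(ds)$-size matrix, which SVD computes.
\end{itemize}

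In the non-explicit case, a swap or Hadamard test cannot estimate such a mixed-state fidelity. The paper instead learns all Pauli coefficients of magnitude at least $\theta$ (\cref{thm:learning Pauli coefficients}), truncates and renormalizes, and bounds the resulting perturbation of the fidelity. It then runs the same classical computation (\cref{thm:learned sparse states to fidelity}). So sparsity must be used to compute $F_0$, not merely to implement $U$.
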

The deterministic time hierarchy therefore gives explicit functions secure against explicit Pauli-sparse strategies. For non-explicit strategies, an analogous uniform bounded-error quantum time hierarchy with polynomial overhead is unknown (see discussion in \cref{sec:discussion}).

\begin{corollary}\label{cor:main-pauli-security}
    There exists a universal constant $\eps_0 >0$ such that for every $\cgen$ there is a constant $c = c(\cgen)$ such that there is an explicit family of total Boolean functions $f\in\DTIME((nqs)^{c+1})$ for which no explicit $q$-uniform $(q,s)$-Pauli-sparse strategy is $(1-\eps_0)$-correct on all inputs.
\end{corollary}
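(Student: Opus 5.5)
The plan is to derive \cref{cor:main-pauli-security} from the first item of \cref{thm:main-pauli-security} by diagonalization, using the deterministic time hierarchy theorem. Take $\eps_0$ and $c=c(\cgen)$ from \cref{thm:main-pauli-security}. Suppose $q=q(n)$ and $s=s(n)$ are time-constructible and polynomially bounded in their description (as in the statement, where the bounds $(nqs)^{c}$ and $(nqs)^{c+1}$ are read as functions of $n$). Set $T(n)=(nqs)^c$ and $T'(n)=(nqs)^{c+1}$.

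\textbf{Step 1: the hierarchy gives a separating language.} Since $T'(n)\geq T(n)\cdot n$, we have $T(n)\log T(n)=o(T'(n))$ whenever $n$ grows faster than $\log T(n)$. This holds because $\log T(n)=c\log(nqs)$, which is $o(n)$ provided $q,s\leq 2^{o(n)}$. Under that assumption the deterministic time hierarchy theorem gives a language $L\in\DTIME(T')\setminus\DTIME(T)$.

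\textbf{Step 2: turn $L$ into a total function $f$ on pairs.} We convert $L$ into a total Boolean function $f\colon\powern\times\powern\to\power{}$ with $L_f=\sett{x\circ y}{f(x,y)=1}$. The natural choice is to let the input $z$ of length $2n$ be $x\circ y$ and set $f(x,y)=\mathbbm{1}[x\circ y\in L]$, restricting $L$ to even lengths. The diagonal language can be built directly on even-length inputs, or odd lengths can be padded away. The only overhead is a constant-factor rescaling of $n$, which is absorbed into the constant $c+1$ by rerunning the hierarchy with slightly adjusted parameters.

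The function $f$ is then explicit: the diagonalizing machine is a concrete Turing machine, and $f\in\DTIME((nqs)^{c+1})$.

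\textbf{Step 3: conclude by contradiction.} Suppose some explicit $q$-uniform $(q,s)$-Pauli-sparse strategy were $(1-\eps_0)$-correct on all inputs at every input length. Then item 1 of \cref{thm:main-pauli-security} would give $f\in\DTIME((nqs)^c)$, that is, $L\in\DTIME(T)$, contradicting Step 1.

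A single $f$ handles every $\cgen$ simultaneously only if $c$ is fixed in advance. The quantifier order of the corollary, which fixes $\cgen$ first, makes this fine.

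\textbf{Main obstacle.} The only delicate point is the bookkeeping in the hierarchy theorem. We need time-constructibility of $T'$ and the gap $T\log T=o(T')$. We also need to check that the choice of encoding $x\circ y$ does not cost more than the factor $nqs$ of slack. If the lengths were not polynomially related I would fall back to choosing $T'=T\cdot\log^2 T$. Here, however, the extra factor of $nqs$ gives ample room.
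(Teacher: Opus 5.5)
Your proposal is correct and is essentially the paper's own argument. You take $\eps_0$ and $c$ from \cref{thm:main-pauli-security}, use the deterministic time hierarchy (diagonalizing on even-length inputs $x\circ y$) to get a total $f\in\DTIME((nqs)^{c+1})\setminus\DTIME((nqs)^{c})$, and conclude by the contrapositive of its first item; note that the side condition $q,s\le 2^{o(n)}$ in your Step 1 is unnecessary, since $T\log T/T' = c\log(nqs)/(nqs)\to 0$ unconditionally.
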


The main theorem driving our reductions (\cref{thm:main-qszk,thm:main-clifford-security,thm:main-pauli-security}) is a new information-theoretic result for $f$-routing.
To state this result, let us first formalize the protocol and the quantum state we examine.

In the standard entanglement-based formulation of $f$-routing, imagine that the verifiers prepare an EPR pair on $\tt{Q\Qb}$. They keep $\tt{\Qb}$ as a reference and send $\tt Q$ as the challenge qubit. The prover must route the challenge $\tt Q$ to $V_{f(x,y)}$ while preserving its entanglement with $\tt{\Qb}$. We examine the \emph{mid-protocol state} $\rhopre_{\tt{\Qb AB}}$, where $\tt A,\tt B$  are Alice's and Bob's systems after they each process their own input locally and send a message to the other based on that processing. Crucially, we can prepare this state without knowing $f(x,y)$, at a cost comparable to their  initialization and local operations costs.
See \cref{sec:setup} for the formal definition and \cref{fig:main-nlqc} for an illustration of these subsystems and the adversaries' shared entanglement.

Our theorem shows that successful routing forces the reference subsystem $\tt{\Qb}$ to be nearly uncorrelated with the adversary who should not receive the challenge. In contrast, it must remain correlated with the intended recipient. We quantify this distinction through the fidelity of the joint state to the product of its marginals.

\begin{theorem}\label{thm:main-fidelity-gap}
Let a routing strategy be $(1-\eps)$-correct on input $(x,y)$, and let $\rho^{xy}_{\ol{\mathtt Q}\mathtt{AB}}$ be its mid-protocol state. Define
\[
F_0(x,y)\defeqq F\ps{\rho^{xy}_{\ol{\mathtt Q}\mathtt A}\; ,\;\rho^{xy}_{\ol{\mathtt Q}}\otimes\rho^{xy}_{\mathtt A}}.
\]
Then
\[
\begin{aligned}
f(x,y)=1 &\quad\Longrightarrow\quad F_0(x,y)\geq1-4\eps,\\
f(x,y)=0 &\quad\Longrightarrow\quad F_0(x,y)\leq 1/2 +\sqrt{2\eps}.
\end{aligned}
\]
\end{theorem}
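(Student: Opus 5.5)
The proof turns on which adversary must output the challenge. Write $\mathtt Q'$ for the output system, and recall from \cref{sec:setup} that after the message exchange each adversary applies a local recovery channel: $\mathcal R^{A}$ acts on $\mathtt A$ and $\mathcal R^{B}$ acts on $\mathtt B$. These channels may depend on both $x,y$, and hence on $f(x,y)$. Correctness with error $\eps$ means that the recovering party produces a state on $\ol{\mathtt Q}\mathtt Q'$ with fidelity at least $1-\eps$ to the EPR state $\Phi_{\ol{\mathtt Q}\mathtt Q'}$; this uses the fidelity convention of \cref{sec:setup}. If $f(x,y)=1$, the recovering party is Bob and Alice should be decoupled from $\ol{\mathtt Q}$. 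If $f(x,y)=0$, it is Alice and she must be correlated with $\ol{\mathtt Q}$. The plan is to prove the two cases separately, using only monotonicity of fidelity, Uhlmann's theorem, and a triangle inequality for the fidelity-based (angle or purified) distance.

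\emph{Case $f(x,y)=1$ (decoupling).}
\begin{enumerate}
\item Purify $\rho^{xy}_{\ol{\mathtt Q}\mathtt{AB}}$ with an environment $\mathtt E$ and dilate Bob's channel $\mathcal R^{B}$ to an isometry $\mathtt B\to\mathtt Q'\mathtt B'$. This gives a pure state $\ket{\psi}_{\ol{\mathtt Q}\mathtt Q'\mathtt B'\mathtt A\mathtt E}$ whose $\ol{\mathtt Q}\mathtt Q'$ marginal has fidelity at least $1-\eps$ with $\Phi$.
\item Because $\Phi$ is pure, Uhlmann's theorem gives a state $\omega_{\mathtt B'\mathtt A\mathtt E}$ such that $\ket\psi$ has fidelity at least $1-\eps$ with $\Phi\otimes\omega$.
\item Trace out $\mathtt Q'\mathtt B'\mathtt E$ and use monotonicity. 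This gives $F\ps{\rho^{xy}_{\ol{\mathtt Q}\mathtt A},\tfrac{I}{2}\otimes\omega_{\mathtt A}}\geq 1-\eps$. Note that the isometry on $\mathtt B$ does not change the $\ol{\mathtt Q}\mathtt A$ marginal.
\item Taking further marginals gives $F(\rho_{\ol{\mathtt Q}},I/2)\geq1-\eps$ and $F(\rho_{\mathtt A},\omega_{\mathtt A})\geq1-\eps$. Fidelity is multiplicative on tensor products, so $F(\tfrac I2\otimes\omega_{\mathtt A},\rho_{\ol{\mathtt Q}}\otimes\rho_{\mathtt A})\geq(1-\eps)^2$.
\item Chain the two closeness statements with the triangle inequality for the angle $\arccos\sqrt F$ (or the purified distance). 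In the regime of small $\eps$ this yields $F_0\geq1-4\eps$.
\end{enumerate}
Getting exactly the constant $4$ is the one fiddly step. I would do the chaining in purified distance $P=\sqrt{1-F}$ with the squared-fidelity convention: the two pieces give distances about $\sqrt\eps$ and $\sqrt{2\eps}$, and the resulting constant should then be adjusted to match the convention used in \cref{sec:setup}.

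\emph{Case $f(x,y)=0$ (correlation).}
\begin{enumerate}
\item Now Alice's channel $\mathcal R^{A}$ on $\mathtt A$ outputs $\mathtt Q'$ with $F(\mathcal R^{A}(\rho_{\ol{\mathtt Q}\mathtt A}),\Phi)\geq1-\eps$, where we write $\mathcal R^{A}$ for $\mathrm{id}_{\ol{\mathtt Q}}\otimes\mathcal R^{A}$.
\item By monotonicity, $F_0\leq F\ps{\mathcal R^{A}(\rho_{\ol{\mathtt Q}\mathtt A}),\,\rho_{\ol{\mathtt Q}}\otimes\mathcal R^{A}(\rho_{\mathtt A})}$. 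The second argument is a product state $\tau_{\ol{\mathtt Q}}\otimes\sigma_{\mathtt Q'}$.
\item Any product state on two qubits has fidelity at most $1/2$ with $\Phi$. The reason is that $\bra\Phi\tau\otimes\sigma\ket\Phi=\tfrac12\tr(\tau\sigma^{T})\leq\tfrac12$.
\item So we have a state $R$ that is $\eps$-close to $\Phi$ in fidelity and a state $S$ that is at most $1/2$-close to $\Phi$. I would bound $F(R,S)$ by a fidelity-continuity inequality: since $\Phi$ is pure, $|F(R,S)-F(\Phi,S)|$ is at most the trace distance between $R$ and $\Phi$. By Fuchs--van de Graaf that trace distance is at most $\sqrt{1-F(R,\Phi)}$, or $\sqrt{2\eps}$-type depending on the convention.
\item This gives $F_0\leq\tfrac12+\sqrt{2\eps}$.
\end{enumerate}
The main obstacle is not conceptual. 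It is matching the exact constants $4\eps$ and $\sqrt{2\eps}$ to the paper's fidelity convention, and choosing the continuity bound (trace distance versus purified distance) that yields precisely those constants.
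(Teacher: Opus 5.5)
Your outline follows the same route as the paper: an Uhlmann extension, monotonicity and the fidelity-angle triangle inequality when $f(x,y)=1$, and pushing both states through the recipient's decoder and testing against $\Phi$ when $f(x,y)=0$. As written, however, neither case reaches the stated constants, and no choice of fidelity convention repairs this. The fact you never use is that $\ol{\mathtt Q}$ is untouched by every channel, so $\rho_{\ol{\mathtt Q}}=I/d$ \emph{exactly}.

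\textbf{The case $f=1$.} In step~4 you settle for $F(\rho_{\ol{\mathtt Q}},I/d)\geq 1-\eps$. This gives only $\beta\defeqq F(\tfrac Id\otimes\omega_{\mathtt A},\rho_{\ol{\mathtt Q}}\otimes\rho_{\mathtt A})\geq(1-\eps)^2$. Feeding this and $\alpha\defeqq F(\rho_{\ol{\mathtt Q}\mathtt A},\tfrac Id\otimes\omega_{\mathtt A})\geq 1-\eps$ into the triangle inequality yields only $F_0\geq 1-(3+2\sqrt2)\eps+O(\eps^2)\approx 1-5.83\eps$. Your own purified-distance estimate $(\sqrt\eps+\sqrt{2\eps})^2=(3+2\sqrt2)\eps$ gives the same constant, and converting conventions leaves it unchanged. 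With the exact marginal, multiplicativity gives $\beta=F(\omega_{\mathtt A},\rho_{\mathtt A})\geq\alpha\geq1-\eps$. Then $F_0\geq(1-\eps)^2-(2\eps-\eps^2)=1-4\eps+2\eps^2$ for every $\eps$, not only small ones, which is exactly the paper's argument.

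\textbf{The case $f=0$.} Here the same gap is more damaging. The paper's $F$ is the root fidelity $\norm{\sqrt\rho\sqrt\sigma}_1$, so $\tr(\Phi\sigma)=F(\sigma,\Phi)^2$. Your bound $\bra\Phi\tau\otimes\sigma\ket\Phi\leq\frac12$ therefore controls the \emph{squared} fidelity. A generic product state can have root fidelity $1/\sqrt2$ with $\Phi$ (take $\ket{00}$), so your chain gives at best $F_0\leq 1/\sqrt2+\sqrt{2\eps}$, or $1/\sqrt d+\sqrt{2\eps}$ for qudits. Reading the whole argument in the squared convention only bounds $F_0^2$, which is weaker still. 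To get $1/d\leq1/2$ you must use that the reference factor is exactly $I/d$, so that $\tr\ps{\Phi\,(\tfrac Id\otimes\sigma_{\mathtt Q'})}=1/d^2$ and $F(\Phi,\tfrac Id\otimes\sigma_{\mathtt Q'})=1/d$.

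Separately, your continuity claim $\abs{F(R,S)-F(\Phi,S)}\leq\frac12\norm{R-\Phi}_1$ is false for root fidelity when $R$ is mixed. For $R=(1-p)\Phi+p\Phi^\perp$ and $S=\Phi^\perp$, the left side is $\sqrt p$ and the right side is $p$. What does hold is the one-sided bound $F(R,S)\leq F(\Phi,S)+\sqrt{1-F(R,\Phi)^2}$. The paper obtains it by applying the two-outcome measurement $\{\Phi,I-\Phi\}$ and data processing: $F(R,S)\leq\sqrt{\tr(\Phi S)}+\sqrt{1-\tr(\Phi R)}\leq 1/d+\sqrt{2\eps}$, using $\tr(\Phi R)=F(R,\Phi)^2\geq(1-\eps)^2\geq1-2\eps$.
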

Intuitively, our bounds quantify the correlations between each adversary and the reference system at the midpoint of the protocol. As expected, the intended recipient of the routed qubit remains entangled with the reference, while the other adversary is nearly uncorrelated with it.

Furthermore, our result extends the product-state characterization of \cite{AsadiCulfMay2025}, which assumes 1-side perfect correctness $\eps=0$, and small error for the other. Our theorem allows nonzero routing error $\eps >0$ for both output values and gives quantitative completeness and soundness bounds for the fidelity test. Our proof also uses different tools, replacing the Information argument of \cite{AsadiCulfMay2025} with direct fidelity inequalities that accommodate nonzero error on both output values.

Therefore, the above implies that for sufficiently small $\eps$, approximating $F_0(x,y)$ to a small constant additive error determines $f(x,y)$.

\subsection{Explicit Security}\label{sec:explicit-security}
We now instantiate \cref{thm:main-explicit-security,cor:main-pauli-security,cor:main-clifford-security} to compare the adversaries' resources with the cost of honest evaluation. Our objective is to relate the adversaries qubits bound $q$ to the function's input length $2n$. For the restricted models, we also track the Pauli sparsity $s$, description length $(nq)^{\cgen}$, and Magic gate count $t$. As above, we assume the printing generation time of the strategies is bounded by $\poly(nq)^{\cgen}$. \Cref{tab:explicit-security} summarizes the resulting parameter choices and honest evaluation costs.

\begin{table}[htbp]
\centering
\small
\setlength{\tabcolsep}{4pt}
\renewcommand{\arraystretch}{1.2}
\begin{tabular}{|l@{\hspace{8pt}}|l|l|}
\hline
\textbf{Adversary model} & \textbf{Parameter regime} & \textbf{Honest evaluation cost} \\
\hline
\multirow{2}{*}{Arbitrary} & $q= \poly(\log n)$ & Polynomial space \\
\cdashline{2-3}
 & $q=\poly(n)$ & Polynomial space \\
\hline

\multirow{2}{*}{Explicit $(q,s)$-Pauli-sparse} & $q,s=\poly(n)$ & Polynomial time \\
\cdashline{2-3}
 & $q=\poly(n)$, $s=2^{\poly(\log n)}$ & Quasipolynomial time \\
% \hline
% 
% \multirow{2}{*}{Non-Explicit $(q,s)$-Pauli-sparse} & $q,s=\poly(n)$ & Polynomial quantum time \\
% \cdashline{2-3}
%  & $q=\poly(n)$, $s=2^{\poly(\log n)}$ & Quasipolynomial quantum time \\

\hline
\multirow{2}{*}{\shortstack[l]{$(q,t)$-uniform Clifford+$T$}} & $q=\poly(n)$, $t=O(\log n)$ & Polynomial time \\
\cdashline{2-3}
 & $q=\poly(n)$, $t=\poly(\log n)$ & Quasipolynomial time \\
\hline
\end{tabular}
\caption{Explicit security for uniform strategies.}
\label{tab:explicit-security}
\end{table}

For each fixed polynomial bound $q=\poly(n)$, the general-channel construction yields a corresponding function computable in polynomial space (which depends on $\cgen$). A single family covering all fixed polynomial resources can instead be obtained by allowing a super-polynomial honest space bound, for example $n^{\log\log\log\log n}$. Indeed, permitting larger $q$ come with additional computational restrictions and higher honest evaluation costs.

The restricted models permit faster honest evaluation. Indeed, polynomial Pauli sparsity or logarithmically many $T$ gates gives polynomial honest time, while the larger sparsity and $T$-gate budgets in the table give quasipolynomial honest time.

\paragraph{Ebits bounds.}
The number of shared entangled-bits (ebits) is a central benchmark for security against nonlocal quantum computation. In our general model, the parameter $q$ bounds each adversary's local qubits, permitting at most $q$ shared EPR pairs.
In the $(q,s)$-Pauli sparse model, an $s$-term Pauli expansion applied to the all-zero state produces a superposition of at most $s$ product basis vectors, so its Schmidt rank is at most $s$. In particular, sharing $e$ EPR pairs requires $s\geq2^e$, and therefore permits only $O(\log s)$ ebits.
For $(q,t)$-uniform Clifford+$T$ strategies, $e$ ebits can be prepared using $2e$ Clifford gates and no magic gates at all. Thus the magic count $t$ alone imposes no additional restriction on their number.

\subsection{Discussion}\label{sec:discussion}

\paragraph{On Our Techniques.}
The closest prior results on lower bounds for explicit $f$-routing are due to \cite{BluhmChristandlSpeelman2022,AsadiCleveCulfMay2025}. The former bounds the adversaries' quantum storage while allowing computationally unbounded local operations, whereas the latter bounds the number of quantum gates and measurements while allowing free classical processing.

These approaches reduce a successful routing strategy to a communication protocol computing $f$, and therefore the resulting resource lower bounds are inherently limited by the linear upper bound for the communication complexity of any $f$, and therefore could not exceed $q\le n$.\footnote{Sending inputs directly gives upper bounds of $n$ bits for one-way communication and $2n$ bits for simultaneous message passing.} Our new techniques avoid this limitation by using the fidelity gap to reduce a successful strategy into an algorithm computing $f$.

\paragraph{Time Hirearchy for $\prBQTIME$.}
For non-explicit uniform Pauli-sparse strategies, our reduction places $f \in \BQTIME(T)$, for which a uniform time hierarchy with polynomial overhead remains unknown for total Boolean functions. It is plausible that the advice-to-promise technique of \citet{He2025PromiseHierarchy}, who proved time-hierarchy theorem for $\prBPTIME$ generalizes to quantum computation, yielding a polynomial time hierarchy for $\prBQTIME$.\footnote{A possible adaptation of \cite{He2025PromiseHierarchy} is as follows. Fix constants $b>a\geq1$. By the quantum time hierarchy theorem with one bit of advice \cite{vanMelkebeekPervyshev2006}, choose a language $L\in\BQTIME(n^b)/1\setminus\BQTIME(n^a)/1$, and let $\alpha_n$ denote the advice bit used by its $O(n^b)$-time algorithm. Consider the promise problem on inputs $(z,\alpha_{|z|})$ with answer $\mathbf{1}_{z\in L}$. The proposed argument is that reading the advice from the input would give a uniform $O(n^b)$-time quantum algorithm, whereas an $O(n^a)$-time algorithm would decide $L$ with one advice bit, contradicting its choice.} 
Such extension would imply that for any fixed polynomial resource bounds, our pointwise reduction would yield a partially defined routing function $f\colon\powern\times\powern\to\{0,1,\bot\}$ with polynomial-time bounded-error quantum evaluation on inputs where $f(x,y)\neq\bot$, such that no admissible strategy is $(1-\eps_0)$-correct on all such inputs.
Here $\bot$ denotes inputs outside the promise.

\paragraph{Comparison with CDS.}
Our results permit security against everywhere-correct strategies using even exponentially many qudits $q=2^{O(n)}$, provided their circuit size, description length, and uniform generation time are suitably bounded. We emphasize that this is compatible with the reduction from CDS to $f$-routing of \citet[Corollary~68]{AllerstorferEtAl2024}, which yields a universal attack using $2^{O(\sqrt{n\log n})}$ resource qubits via the CDS protocols of \cite{LiuVaikuntanathanWee2017}. 

The reason is that their bound controls quantum resources while allowing unrestricted local computation. The reduction to INDEX uses the truth table $D_y=(f(z,y))_{z\in\powern}$ (see \cite[Section~6.3]{AllerstorferEtAl2024}). In particular, evaluating the CDS message formulas \cite[Figure~5]{LiuVaikuntanathanWee2017} term by term uses $2^n$ evaluations of $f$ during local input processing, before communication. This direct implementation already costs more than a single evaluation of $f(x,y)$, and our model charges for this computation.

\paragraph{Relation to CDQS.}
\citet[Theorem~23]{AllerstorferEtAl2024} established an equivalence between qudit $f$-routing and Conditional Disclosure of Quantum Secrets (CDQS). This equivalence suggests a possible alternative reduction from everywhere-correct $f$-routing strategies to the containment $f\in\QSZK(T)$.
On the other hand, our approach exposes a direct relationship between the mid-protocol state and the computational complexity of $f$, by identifying a quantity that determines $f(x,y)$ and prove computational bounds on computing it (in terms of the adversaries' success parameters). We find these bounds as of independent interest, as they quantify the information amount Alice/Bob possess about the routed qudit. Later on, we turn this criterion into efficient algorithms for evaluating $f$ from explicit Pauli-sparse and $(q,t)$-uniform Clifford+$T$ strategies, translating the adversaries' computational models into security reductions

Another recent paper \cite{AsadiEtAl2025CDQS} also related CDQS to Honest-Verifier $\QSZK$ in \emph{communication complexity}. Their cost measures communication in the proof and its simulation, without bounding the local computation time. These communication bounds are therefore not directly comparable to our time-bounded $\QSZK(T)$ statements, which explicitly account for uniform circuit generation and resource-state preparation.

\section{Proof Overview}
For simplicity, throughout this overview we suppose that all circuits have size $\poly(q)$, where $q$ bounds the number of qubits.
We begin with the fidelity bounds that underlie our reductions, describing first the protocol's state development (see \cref{sec:setup} for more details).
Initially, Alice and Bob share the state the following state with Alice holding system $\mathtt L$ and Bob holding $\mathtt R$:
\[
\cI_{\tt{LR} \to \tt{LR}}\ps{\ketbraa{0^{2q}}_{\mathtt{LR}}}.
\]
Here $\cI$ is a quantum channel describing the preparation of the initial state from the all-zero ancilla.

Then, the Verifiers $V_0,V_1$ prepare the EPR state $\Phi_{\tt{\Qb Q}}$, and sample $x,y \inr \powern$, respectively. $V_0$ sends Alice the challenge register $\Phi_{\mathtt Q}$ and $x$, whereas $V_1$ sends $y$ to Bob.
Alice and Bob now apply their local strategies, each relying only on their own input.
Specifically, Alice applies the channel $\cU^x_{\mathtt{QL}\to \tt{A_1 B_2}}$ and Bob applies $\cV^y_{\tt{R} \to \tt{B_1 A_2}}$, therefore preparing the mid-protocol state
\[
\rhopre_{\ol{\mathtt Q}\mathtt{A_1B_2 B_1A_2}}
=\ps{\operatorname{id}_{\ol{\mathtt Q}}\otimes\cU^x\otimes\cV^y}
\ps{\Phi_{\ol{\mathtt Q}\mathtt Q}\otimes\cI^n\ps{\ketbraa{0^{2q}}_{\mathtt{LR}}}}.
\]
Here $\tt{B_2}$ and $\tt{A_2}$ are the subsystems Alice and Bob intend to send to each other, respectively. Each chooses the partition into retained and outgoing subsystems using only their local input.
After the exchange, we denote Alice's and Bob's systems by $\tt{A}=\tt{A_1 A_2}$ and $\tt{B}=\tt{B_1 B_2}$, respectively, and write the mid-protocol state as $\rhopre_{\tt{\Qb AB}}$.

After exchanging their systems and classical inputs, Alice and Bob both know $(x,y)$. We allow them to use $f(x,y)$ in their final local computation, just as the honest prover uses it to choose where to send the challenge. They then apply their final decoding channels $\cU^{x,y,f(x,y)}_{\mathtt A\to\mathtt{Q}_0}$ and $\cV^{x,y,f(x,y)}_{\mathtt B\to\mathtt{Q}_1}$, respectively, producing the final-protocol state
\[
\rho^{xyf(x,y)}_{\ol{\mathtt Q}\mathtt{Q}_0\mathtt{Q}_1} = 
\ps{ \mathrm{id}_{\tt{\Qb}} \otimes \cU^{x,y,f(x,y)}_{\mathtt A\to\mathtt{Q}_0} \otimes \cV^{x,y,f(x,y)}_{\mathtt B\to\mathtt{Q}_1} }
(\rhopre_{\ol{\mathtt Q}\mathtt{AB}}).
\]
Correctness requires that depending on the value $f(x,y)$, either Alice or Bob should be able to output a qudit that is highly entangle with the reference system, that is:
\begin{align*}
    f(x,y) = 0 &\qlq F(\rhopost_{\tt{\Qb Q_0}} , \Phi_{\tt{\Qb Q_0}}) \ge 1-\eps \\
    f(x,y) = 1 &\qlq F(\rhopost_{\tt{\Qb Q_1}} , \Phi_{\tt{\Qb Q_1}}) \ge 1-\eps.
\end{align*}
We emphasize that each adversary's pre-communication channel ($\cU^x, \cV^y$), including which registers to exchange with each other (i.e., the partition $\tt{A_1,B_2}$ and $\tt{B_1,A_2}$), depends only on its local input $x,y$. Thus, given $(x,y)$, we can construct a circuit preparing $\rhopre$ without computing $f(x,y)$ itself.

Now, suppose the strategy is $(1-\eps)$-correct on input $(x,y)$. We compare each adversary's joint state with the reference to the product of its marginals by defining:
\[
F_0\defeqq F\ps{\rho_{\ol{\mathtt Q}\mathtt A},\rho_{\ol{\mathtt Q}}\otimes\rho_{\mathtt A}},
\qquad
F_1\defeqq F\ps{\rho_{\ol{\mathtt Q}\mathtt B},\rho_{\ol{\mathtt Q}}\otimes\rho_{\mathtt B}}.
\]
Intuitively, $F_0$ and $F_1$ measure how close each adversary's joint state with the reference is to the product of its marginals. We show that, depending on the value of $f(x,y)$, one quantity is close to 1 while the other is bounded away from 1. The intended recipient can recover entanglement with the reference, while the other adversary's joint state with the reference is approximately a product state.
For sufficiently small $\eps$, these bounds leave a constant gap, so approximating $F_0$ to sufficiently small constant additive error determines $f(x,y)$. This reduces computing $f$ to estimating the fidelity of states prepared from the strategy.

\subsection{Fidelity Bounds}\label{sec:overview-fidelity-bounds}
\paragraph{Lower bound.}
Suppose $f(x,y)=1$ and the adversaries route the qubit correctly with probability at least $1-\eps$. Bob can then recover a qubit that is nearly maximally entangled with the reference. We show that this forces Alice's joint state with the reference to be approximately a product state, namely 
\[
\rhopre_{\ol{\mathtt{Q}} \mathtt{A}} \approx \rhopre_{\ol{\mathtt{Q}}} \otimes \rhopre_{\mathtt{A}}.\footnote{Throughout this overview, we say that two states are approximately the same $\Psi_1 \approx_\delta \Psi_2$ if their fidelity is at least $F(\Psi_1 , \Psi_2) \ge 1- \delta$.}
\]
The key point is that an EPR pair is pure, so Bob's recovery of an approximate EPR pair leaves little room for correlations with Alice. We make this precise by applying Bob's decoder and using Uhlmann's theorem. Tracing out Bob's system then gives a product approximation to Alice's unchanged joint state with the reference. 

We now formalize this argument. We first apply  Bob's post-communication decoding channel $\cV_{\mathtt B\to\mathtt{Q}_1}$ to the mid-protocol state (leaving Alice's system and reference untouched):
\begin{align*}
\sigma_{\ol{\mathtt Q}\mathtt{Q}_1\mathtt A}
&\defeqq
\ps{\operatorname{id}_{\ol{\mathtt Q}\mathtt A}\otimes\cV_{\mathtt B\to\mathtt{Q}_1}}
\ps{\rhopre_{\ol{\mathtt Q}\mathtt{AB}}}.
\end{align*}
Correctness guarantees that $\sigma_{\ol{\mathtt Q}\mathtt{Q}_1}$ is close to an EPR pair, so Uhlmann's theorem implies that $\sigma$ is close to this EPR pair tensored with some state $\tau_{\mathtt A}$ on Alice's register:
\[
\sigma_{\ol{\mathtt{Q}} \mathtt{Q}_1\mathtt{A}}
\approx_\eps
\Phi_{\ol{\mathtt{Q}} \mathtt{Q}_1}\otimes\tau_{\mathtt{A}}.
\]
Since Bob's decoding channel is trace preserving and acts only on his system, it leaves Alice's joint state with the reference unchanged. Thus, $\sigma$ and $\rho$ have the same reduced state on $\ol{\mathtt Q}\mathtt A$, and discarding Bob's decoded register in the comparison above shows that this state is close to a product state:
\[
\rho_{\ol{\mathtt{Q}} \mathtt{A}}
    =
    \tr_{\mathtt{Q}_1}\!\left(\sigma_{\ol{\mathtt{Q}} \mathtt{Q}_1\mathtt{A}}\right)
    \approx_\eps
    \tr_{\mathtt{Q}_1}\!\left(
        \Phi_{\ol{\mathtt{Q}} \mathtt{Q}_1}\otimes\tau_{\mathtt{A}}
    \right)
    =
    \rho_{\ol{\mathtt{Q}}}\otimes\tau_{\mathtt{A}}.
\]
Taking another marginal and discarding the reference system $\ol{\mathtt{Q}}$ shows that $\tau_{\mathtt{A}}$ must itself be close to Alice's actual marginal $\rho_{\mathtt{A}}$:
\[
\rho_{\mathtt{A}}
    =
    \tr_{\ol{\mathtt{Q}}}\!\left(\rho_{\ol{\mathtt{Q}} \mathtt{A}}\right)
    \approx_\eps
    \tr_{\ol{\mathtt{Q}}}\!\left(
        \rho_{\ol{\mathtt{Q}}}\otimes\tau_{\mathtt{A}}
    \right)
    =
    \tau_{\mathtt{A}}.
\]
Thus we may replace $\tau_{\mathtt{A}}$ by $\rho_{\mathtt{A}}$, and the fidelity-angle triangle inequality shows that $\rho_{\ol{\mathtt{Q}} \mathtt{A}}$ is close to the product of its marginals:
\[
\rho_{\ol{\mathtt{Q}}}\otimes\tau_{\mathtt{A}}
    \approx_\eps
    \rho_{\ol{\mathtt{Q}}}\otimes\rho_{\mathtt{A}}.
\]
In short, Bob's successful recovery of the EPR pair forces Alice to be approximately decoupled from $\ol{\mathtt{Q}}$, and combining everything together yields
\begin{align*}
    \rho_{\ol{\mathtt{Q}} \mathtt{A}}
    \approx_{4\eps}
    \rho_{\ol{\mathtt{Q}}}\otimes\rho_{\mathtt{A}}.
\end{align*}

\paragraph{Upper Bound.}
We now complement the result with an upper bound on the fidelity of Bob's joint system. Suppose that $f(x,y)=1$, and suppose that the adversaries successfully route the qudit to Bob. We would like to show that Bob's mid-protocol state $\rho_{\ol{\mathtt{Q}} \mathtt{B}} = \tr_{\mathtt{A}}\pss{\rhopre_{\ol{\mathtt{Q}} \mathtt{AB}}}$ is highly non-product across the cut $\ol{\mathtt{Q}}:\mathtt{B}$, in the sense that
\begin{align*}\label{eq:refefad}
    F\ps{
        \rho_{\ol{\mathtt{Q}} \mathtt{B}} \; , \;
        \rho_{\ol{\mathtt{Q}}}\otimes\rho_{\mathtt{B}}
    }
    \leq
    \frac{1}{d}+\sqrt{2\eps}.
\end{align*}
To see this, apply Bob's decoder to both the ``true state'' $\rho_{\ol{\mathtt{Q}} \mathtt{B}}$ and the corresponding ``product state'' $\rho_{\ol{\mathtt{Q}}}\otimes\rho_{\mathtt{B}}$. On the true state, correctness says that the decoder recovers a state $\sigma_{\ol{\mathtt{Q}} \mathtt{Q}_1}$ that has high fidelity with the EPR state:
\[
    \rho_{\ol{\mathtt{Q}} \mathtt{B}}
    \quad
    \xrightarrow{\quad \cV^{xyb}_{\mathtt{B}\to \mathtt{Q}_1}\quad}
    \quad 
    \sigma_{\ol{\mathtt{Q}} \mathtt{Q}_1}
    \approx_{\eps}
    \Phi_{\ol{\mathtt{Q}} \mathtt{Q}_1}.
\]
Consequently, an EPR measurement on $\sigma_{\ol{\mathtt{Q}} \mathtt{Q}_1}$ accepts with high probability
\begin{align}\label{eq:nice f1}
    \tr(\Phi\sigma)
    =
    F(\sigma,\Phi)^2
    \geq
    (1-\eps)^2
    \geq
    1-2\eps.
\end{align}
Now apply the same decoder to the product state. Because Bob's decoder acts only on $\mathtt{B}$, it cannot create correlations with the untouched reference register $\ol{\mathtt{Q}}$. Hence its output remains a product state, denoted $\omega$:
\[
    \rho_{\ol{\mathtt{Q}}}\otimes\rho_{\mathtt{B}}
    \quad
    \xrightarrow{\quad \cV^{xyb}_{\mathtt{B}\to \mathtt{Q}_1}\quad}
    \quad
    \omega_{\ol{\mathtt{Q}} \mathtt{Q}_1}
    =
    \rho_{\ol{\mathtt{Q}}}\otimes\omega_{\mathtt{Q}_1}
    =
    \frac{I_{\ol{\mathtt{Q}}}}{d}\otimes\omega_{\mathtt{Q}_1}.
\]
Such a product state passes the EPR measurement with probability exactly
\begin{align}\label{eq:nice f2}
    \tr(\Phi\omega)
    =
    \frac{1}{d^2}.
\end{align}
Consequently, the two decoded states behave very differently under the EPR measurement (\cref{eq:nice f1,eq:nice f2}), and hence their fidelity is bounded $F\pss{\sigma, \omega} \le \frac{1}{d} + \sqrt{2\eps}.$
To transfer this upper bound to the mid-protocol states, we use the data-processing inequality for fidelity. Applying Bob's decoder to both states cannot decrease their fidelity, so
\begin{align*}
F\ps{\rho_{\ol{\mathtt{Q}} \mathtt{B}} \;, \; \rho_{\ol{\mathtt{Q}}} \otimes \rho_{\mathtt{B}}}
\le  F\pss{\sigma, \omega}
\le \frac{1}{d} + \sqrt{2\eps}.
\end{align*}
That implies the true state cannot be close to the product of its marginals: successful recovery by Bob requires substantial correlations between $\mathtt{B}$ and the reference register $\ol{\mathtt{Q}}$.

\subsection{Faster Algorithms for Computationally Bounded Adversaries}
In \cref{sec:overview-fidelity-bounds}, we reduced computing $f(x,y)$ from a successful strategy to estimating the mid-protocol fidelity $F_0$. The two states in this fidelity can be prepared by running the adversaries' resource preparation and pre-communication circuits, with only polynomial overhead. For arbitrary channels, this gives a $\QSZK$ protocol whose verifier time is polynomial in the strategy's resources, including the time needed to generate its circuits (\cref{thm:routing-to-qszk}).

Our goal now is to obtain tighter security bounds by exploiting restrictions on these circuits to compute the fidelity directly. We consider Pauli-sparse and $(q,t)$-uniform Clifford+$T$ strategies, which restrict the implementations of the resource preparation $\cI$ and the local channels $\cU^x,\cV^y$. In each case, their structure lets us reduce the fidelity computation to a small matrix. The resulting algorithm, combined with the fidelity gap, gives a time bound for computing $f$ from a successful strategy, and hence security for functions that require more time.

\paragraph{Clifford+$T$ Adversaries.}
We consider $(q,t)$-uniform Clifford+$T$ strategies. To analyze these strategies, we study states prepared by Clifford+$T$ circuits on $q$ qubits, with at most $t$ magic gates (recall we suppose these circuits have size $\poly(q)$ for simplicity).
% 
% To begin with, we emphasize that the rank bound from the Pauli-sparse case no longer applies: even with no magic gates, a Clifford circuit can prepare many EPR pairs across the cut, giving exponentially many Schmidt coefficients. Nevertheless, the bound on the number of magic gates lets us separate out fixed qubits, leaving a residual state on only $O(t)$ qubits. 
% 
At a high level, we separate out fixed factors and reduce the fidelity computation to a calculation on a logical state of only $O(t)$ qubits. This gives a deterministic algorithm running in time $\poly(2^t,q)$ for constant additive error.

To obtain this separation and compute the residual state, we introduce an $(m,\nu)$-stabilizer decomposition for a Clifford+$T$-prepared state $\ket\psi$, consisting of the following two kinds of known decompositions.
\begin{enumerate}
    \item
    First, we represent $\ket\psi$ using an $m$-stabilizer decomposition, following the approach of \citet{BravyiEtAl2019}. Such a decomposition expresses the state as a linear combination of $m$ stabilizer states,
\[
    \ket{\psi}=\sum_{j=1}^{m}\alpha_j\ket{s_j}.
\]
With at most $t$ magic gates, we can construct such a decomposition with $m\leq2^t$. Indeed, each magic gate is a linear combination of two Clifford gates, and each Clifford gate maps a stabilizer state to a stabilizer state. Thus each magic gate at most doubles the number of stabilizer terms, while Clifford gates leave it unchanged (see \cref{lem:cliffordT-nullity}).

\item
Second, we use the stabilizer notion nullity introduced by \citet{BeverlandEtAl2020}. The Pauli operators that fix $\ket\psi$ form its stabilizer group. For a $q$-qubit stabilizer state, this group has $q$ independent generators. However, an arbitrary pure state may have fewer independent generators. This deficiency is called stabilizer nullity. We record $q-\nu$ independent Pauli operators satisfying
\[
    P_j\ket\psi=\ket\psi,
    \qquad j\in[q-\nu].
\]
With at most $t$ magic gates, we can efficiently find such operators with $\nu\leq t$. Clifford gates preserve the number of independent stabilizers, while each magic gate loses at most one (\cref{lem:cliffordT-nullity}).
\end{enumerate}

We use this decomposition as follows: we efficiently construct a state $\ket \chi$, that on the one hand admits $(m,\nu)$-decomposition, and on the other, its Schmidt coefficients capture the target fidelity. Let $\ket{\rho}_{\mathtt{AE}_\rho}$ and $\ket{\sigma}_{\mathtt{AE}_\sigma}$ be pure states prepared by Clifford+$T$ circuits, each with at most $t$ magic gates. We want to compute the fidelity of their reduced states on $\mathtt A$,
\[
    \rho=\tr_{\mathtt E_\rho}\ketbraa{\rho},
    \qquad
    \sigma=\tr_{\mathtt E_\sigma}\ketbraa{\sigma}.
\]
Using the identity $F(\rho,\sigma)=\norm {\tr_{\mathtt A}\ps{\ketbra{\sigma}{\rho}}}_1,$ we reduce computing the fidelity to computing the singular values of
\[
M = \tr_{\mathtt A}\ps{\ketbra{\sigma}{\rho}}.
\]
The matrix $M$ can still be exponentially large in $q$. To this end, we regard its entries as the amplitudes of a bipartite vector, whose Schmidt coefficients are exactly the singular values of $M$, by defining
\[
    \ket{\chi}_{\mathtt E_\sigma\mathtt E_\rho}
    \defeqq\sum_{a,b}\langle b|M|a\rangle
    \ket b_{\mathtt E_\sigma}\ket a_{\mathtt E_\rho},
\]
where $\{\ket a\}$ and $\{\ket b\}$ are computational bases of the environment systems $\mathtt E_\rho$ and $\mathtt E_\sigma$, respectively.

The partial trace defining $M$ corresponds to projecting the two $\mathtt A$ registers of $\ket\sigma\otimes\overline{\ket\rho}$ onto EPR pairs and keeping the environment registers. 
In \cref{lem:bell-contraction-data}, we use the decompositions of the two purifications, and show how to construct an explicit $(m_\chi,\nu_\chi)$-decomposition with $m_\chi\leq2^{2t}$ and $\nu_\chi\leq2t$.

Our next goal is to reduce the fidelity computation to a small logical state. We use the two parts of the decomposition in turn. The nullity bound $\nu_\chi\leq2t$ will let us isolate this logical state, and the expansion into $m_\chi$ stabilizer states will then let us compute its amplitudes.

To see why small nullity helps, write $q_\chi$ for the number of qubits of $\ket\chi$. The decomposition gives $q_\chi-\nu_\chi$ independent Pauli constraints on this state. A suitable Clifford change of basis turns each constraint into the requirement that a distinct qubit be $\ket0$ (\cref{thm:clifford-reduction}), leaving only $\nu_\chi$ logical qubits to describe. However, this change of basis may act jointly on the two environment registers and alter the Schmidt coefficients of $\ket\chi$ across $\mathtt E_\sigma:\mathtt E_\rho$, whose sum is the target fidelity. To recover the fidelity, we need a reduction that preserves these coefficients.

We therefore use Clifford circuits acting separately on the two environment registers. Such local changes of basis preserve the Schmidt coefficients. In \cref{lem:low-nullity-form}, we show that they separate out fixed states, leaving a logical state on at most $2\nu_\chi\leq4t$ qubits. This gives efficiently computable circuits $D_{\mathtt E_\sigma},D_{\mathtt E_\rho}$ acting on the respective sides, with
\begin{align}\label{eq:overview-chi-normal-form}
    (D_{\mathtt E_\sigma}\otimes D_{\mathtt E_\rho})\ket{\chi}
    &=\ket{0}^{\otimes r}\otimes\ket{\Phi^+}^{\otimes p}\otimes\ket{\gamma},
\end{align}
where $\ket{\gamma}$ occupies only $O(t)$ qubits, and $\ket{\Phi^+}$ is the EPR state.

The zero states and EPR pairs contribute known factors, so it remains to compute the sum of the Schmidt coefficients of $\ket\gamma$.

The small size of $\ket\gamma$ is useful only if we can also compute its amplitudes. For this final step, we return to the explicit expansion of $\ket\chi$ into at most $m_\chi \le 2^{2t}$ stabilizer states. We use stabilizer calculations on each term to recover the amplitudes of $\ket\gamma$ in time $\poly(2^t,q)$.

Finally, since $\ket\gamma$ occupies only $O(t)$ qubits, its coefficient matrix across the $\mathtt E_\sigma:\mathtt E_\rho$ cut has size $2^{O(t)}$. Approximating its singular values gives the Schmidt coefficients to the required precision in time $\poly(2^t,q)$.
Combining their sum with the contribution of the fixed factors in \cref{eq:overview-chi-normal-form} gives the desired fidelity estimate within the same time bound.

\paragraph{Pauli Sparse Unitaries.}
We consider unitaries with sparse Pauli decomposition.
Let $U,V$ be two $(q,s)$-Pauli-sparse unitaries on $q$ qubits, and suppose they both act on two subsystems $\tt{A, \tt{\ol A}}$. Let $\rho,\sigma$ be the reduced states on $\mathtt A$ of $U\ket{0^q},V\ket{0^q}$, respectively. Our goal is to reduce the fidelity computation $F(\rho, \sigma)$ to a matrix whose size depends on $s$. We first assume that the Pauli decompositions of $U,V$ are given explicitly, and later obtain them by learning. 

To see how sparsity helps, decompose $U$ into a linear sum of Pauli matrices:
\[
    U=\sum_{j=1}^{s}\alpha_j P_j.
\]
Observe that each Pauli $P_j$ maps $\ket{0^q}$ to a computational-basis vector up to a phase. Absorbing these phases into coefficients $c_j$ gives
\[
    U\ket{0^q}=\sum_{j=1}^{s}c_j
    \ket{a_j}_{\mathtt A}\ket{b_j}_{\bar{\mathtt A}}.
\]
This is a sum of at most $s$ product vectors, so its Schmidt rank across $\mathtt A:\bar{\mathtt A}$ is at most $s$. To turn this rank bound into an explicit factorization of $\rho$, we group terms with the same $b_j$, giving
\[
    U\ket{0^q}=\sum_b\ket{u_b}_{\mathtt A}\ket b_{\bar{\mathtt A}},
    \qquad \ket{u_b}=\sum_{j:b_j=b}c_j\ket{a_j}.
\]
Tracing out $\bar{\mathtt A}$ now gives $\rho=XX^\dagger$, where the columns of $X$ are the at most $s$ vectors $\ket{u_b}$. Applying the same construction to $V$ gives $\sigma=YY^\dagger$ with at most $s$ columns as well. These factorizations reduce the fidelity computation to
\[
    F(\rho,\sigma)=\norm{X^\dagger Y}_1.
\]
The sparse descriptions let us construct the at-most-$s\times s$ matrix $X^\dagger Y$ in time $\poly(q,s)$. Approximating the sum of its singular values therefore gives the desired fidelity estimate in roughly the same time.

When the Pauli decompositions are not given explicitly, we use a quantum learning algorithm to identify the significant Pauli coefficients and estimate their amplitudes. After discarding small coefficients and normalizing the resulting vectors, we obtain explicit sparse descriptions to which the same fidelity computation applies. Choosing the learning error sufficiently small gives a constant-additive-error fidelity estimate with bounded error in time $\poly(q,s)$.

\section{Preliminaries}

\subsection{Quantum Information}

\begin{lemma}[{Fuchs--van de Graaf; \cite[Eq.~(9.110)]{NC00}}]\label{lem:fvg}
For states $\rho,\sigma$,
\begin{align*}
1 - F\ps{\rho,\sigma} \le \frac{1}{2}\norm{\rho-\sigma}_1 \le \sqrt{1 - F\ps{\rho,\sigma}^2} .
\end{align*}
\end{lemma}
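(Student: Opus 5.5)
The plan is to derive both inequalities from the standard characterizations of fidelity and trace distance, with $F(\rho,\sigma)=\norm{\sqrt\rho\sqrt\sigma}_1$. The lemma is a textbook fact, cited as \cite[Eq.~(9.110)]{NC00}. I give the route that reduces everything to the pure-state case plus monotonicity.

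\emph{Upper bound.} By Uhlmann's theorem, choose purifications $\ket\psi,\ket\phi$ of $\rho,\sigma$ with $\abs{\braket{\psi}{\phi}}=F(\rho,\sigma)$. For pure states, restrict to the two-dimensional span of $\ket\psi,\ket\phi$. The operator $\ketbraa{\psi}-\ketbraa{\phi}$ is then traceless and rank two, with eigenvalues $\pm\lambda$. Computing $\tr\bigl[(\ketbraa\psi-\ketbraa\phi)^2\bigr]=2-2\abs{\braket\psi\phi}^2=2\lambda^2$ gives $\frac12\norm{\ketbraa\psi-\ketbraa\phi}_1=\lambda=\sqrt{1-\abs{\braket\psi\phi}^2}$. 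Partial trace is a channel, so the trace distance is monotone under it. Hence $\frac12\norm{\rho-\sigma}_1\le\sqrt{1-F(\rho,\sigma)^2}$.

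\emph{Lower bound.} Use the measurement characterizations. Let $\{E_m\}$ be a POVM attaining the fidelity, $F(\rho,\sigma)=\sum_m\sqrt{p_m q_m}$ with $p_m=\tr(E_m\rho)$ and $q_m=\tr(E_m\sigma)$. Such a POVM exists because classical fidelity of measurement outcomes minimizes to the quantum fidelity, which is Fuchs--Caves. Monotonicity of trace distance gives $\frac12\norm{\rho-\sigma}_1\ge\frac12\sum_m\abs{p_m-q_m}$. Classically,
\[
\sum_m\abs{p_m-q_m}=\sum_m\abs{\sqrt{p_m}-\sqrt{q_m}}\bigl(\sqrt{p_m}+\sqrt{q_m}\bigr)\ge\sum_m\bigl(\sqrt{p_m}-\sqrt{q_m}\bigr)^2=2-2\sum_m\sqrt{p_mq_m}.
\]
Dividing by two yields $1-F(\rho,\sigma)\le\frac12\norm{\rho-\sigma}_1$.

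\emph{Main obstacle.} The only nontrivial ingredient is the existence of a fidelity-attaining measurement in the lower bound. I would construct it explicitly by measuring in the eigenbasis of $\sigma^{-1/2}\bigl(\sigma^{1/2}\rho\,\sigma^{1/2}\bigr)^{1/2}\sigma^{-1/2}$, assuming $\sigma$ is invertible. The singular case then follows by continuity. Alternatively, I would simply invoke the cited reference, as the paper does.
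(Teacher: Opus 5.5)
Your proof is correct and follows essentially the standard argument of the cited source \cite[Eq.~(9.110)]{NC00}, which is all the paper offers for this lemma. It uses Uhlmann purifications plus monotonicity of trace distance under partial trace for the upper bound, and a fidelity-attaining measurement with $(\sqrt{p_m}-\sqrt{q_m})^2\le\abs{p_m-q_m}$ for the lower bound; your explicit measurement in the eigenbasis of $M=\sigma^{-1/2}(\sigma^{1/2}\rho\,\sigma^{1/2})^{1/2}\sigma^{-1/2}$ works because $\rho=M\sigma M$ gives $\sum_m\sqrt{p_mq_m}=\tr(M\sigma)=F(\rho,\sigma)$, and continuity covers singular $\sigma$.
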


% Triangle inequality for the fidelity:
\begin{theorem}[{\cite[Sec.~9.2.2]{NC00}}]\label{thm:fidelity triangle}
For all states $\rho, \omega, \sigma$,
\begin{align*}
    F\ps{\rho,\sigma} \ge F\ps{\rho,\omega} F\ps{\omega,\sigma} - \sqrt{1 - F\ps{\rho,\omega}^2} \sqrt{1 - F\ps{\omega,\sigma}^2}.
\end{align*}
\end{theorem}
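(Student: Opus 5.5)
The plan is to reduce to pure states with Uhlmann's theorem and then prove the inequality by a direct computation with vectors; this avoids going through the angle metric $\arccos F$. Write $a\defeqq F(\rho,\omega)$ and $b\defeqq F(\omega,\sigma)$. If $ab-\sqrt{1-a^2}\sqrt{1-b^2}\le 0$ there is nothing to prove, since $F\ge 0$, so the content is in the case where the right-hand side is positive.

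\textbf{Step 1 (purifications).} Fix a purification $\ket{w}$ of $\omega$ on a reference space large enough to purify all three states. By Uhlmann's theorem, $F(\rho,\omega)=\max\abs{\braket{r}{w}}$ over purifications $\ket r$ of $\rho$, with $\ket w$ held fixed (all purifications of $\omega$ on the same space are related by a unitary on the reference). Choose $\ket r$ attaining this maximum, so $\abs{\braket{r}{w}}=a$. Likewise choose a purification $\ket s$ of $\sigma$ with $\abs{\braket{w}{s}}=b$. Multiplying $\ket r$ and $\ket s$ by global phases, we may assume $\braket{w}{r}=a$ and $\braket{w}{s}=b$ are real and nonnegative. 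Uhlmann's theorem again, now as an inequality, gives $F(\rho,\sigma)\ge\abs{\braket{r}{s}}$. It therefore suffices to lower bound $\abs{\braket{r}{s}}$.

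\textbf{Step 2 (orthogonal decomposition).} Decompose
\[
\ket r=a\ket w+\sqrt{1-a^2}\,\ket{r^\perp},\qquad \ket s=b\ket w+\sqrt{1-b^2}\,\ket{s^\perp},
\]
where $\ket{r^\perp},\ket{s^\perp}$ are unit vectors orthogonal to $\ket w$. The coefficients are correct because $\ket r$ and $\ket s$ are unit vectors and their overlaps with $\ket w$ are $a$ and $b$. If $a=1$ or $b=1$, the corresponding perpendicular vector is simply absent. Expanding the inner product and using the orthogonality kills the cross terms:
\[
\braket{r}{s}=ab+\sqrt{1-a^2}\sqrt{1-b^2}\,\braket{r^\perp}{s^\perp}.
\]

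\textbf{Step 3 (conclude).} By the reverse triangle inequality and Cauchy--Schwarz, $\abs{\braket{r^\perp}{s^\perp}}\le1$, so
\[
\abs{\braket{r}{s}}\ge ab-\sqrt{1-a^2}\sqrt{1-b^2}.
\]
Combining this with Step 1 gives the claimed inequality.

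The only delicate point is Step 1. Both optimizations must be done against the \emph{same} fixed purification $\ket w$ of $\omega$; this is legitimate because Uhlmann's maximum can be taken over purifications of one argument while the other is held fixed. The reference space must also be chosen large enough, for instance of dimension equal to that of the system, so that every state involved admits a purification on it.
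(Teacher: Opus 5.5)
Your proof is correct, but it takes a different route from the paper. The paper takes the triangle inequality for the fidelity angle, $\arccos F(\rho,\sigma)\le\arccos F(\rho,\omega)+\arccos F(\omega,\sigma)$, as a known fact. It then converts this into the stated form using that cosine is decreasing on $[0,\pi]$ and the cosine addition formula.

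You instead work directly from Uhlmann's theorem. You optimize both purifications $\ket{r},\ket{s}$ against one fixed purification $\ket{w}$ of the middle state, then split $\ket{r},\ket{s}$ into components along and orthogonal to $\ket{w}$. The bound then follows from a single inner-product expansion, with no trigonometry and no appeal to the metric property of the angle.

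The trade-off is this. The paper's argument is shorter, but all of its content sits in the cited angle inequality. Yours is self-contained apart from Uhlmann's theorem, and it actually implies the angle inequality. Write $a=F(\rho,\omega)$ and $b=F(\omega,\sigma)$. If $\arccos a+\arccos b\le\pi/2$, your bound reads $\cos(\arccos F(\rho,\sigma))\ge\cos(\arccos a+\arccos b)$. Otherwise the angle inequality is trivial, since $\arccos F(\rho,\sigma)\le\pi/2$.

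Your reduction to a common purification of the middle state is the same step used in the standard proof of the angle inequality in the cited source. The difference is that you replace the pure-state angle inequality with a direct computation.
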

\begin{proof}
Define the fidelity angles
\[
    a:=\arccos F(\rho,\omega),\qquad
    b:=\arccos F(\omega,\sigma),\qquad
    c:=\arccos F(\rho,\sigma).
\]
Since fidelity takes values in $[0,1]$, we have
$a,b,c\in[0,\pi/2]$. The triangle inequality for the fidelity angle
gives
\[
    c\leq a+b.
\]
Since $a+b\in[0,\pi]$ and cosine is decreasing on this interval,
\[
    F(\rho,\sigma)
    =
    \cos c
    \geq
    \cos(a+b).
\]
Using the cosine addition formula concludes the claim:
\begin{align*}
    \cos(a+b)
    &=
    \cos a\cos b-\sin a\sin b\\
    &=
    F(\rho,\omega)F(\omega,\sigma)
    -
    \sqrt{1-F(\rho,\omega)^2}
    \sqrt{1-F(\omega,\sigma)^2}.
\end{align*}
\end{proof}

\begin{fact}[{\cite[Theorem~3.27]{Watrous18TQI}}]\label{thm:fidelity data processing}
    The Fidelity of two states is nondecreasing under any quantum channel $\cE$:
    \[
    F\ps{\cE(\rho), \cE(\sigma)} \geq F\ps{\rho, \sigma}.
    \]
\end{fact}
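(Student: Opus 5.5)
The statement just before this point is \cref{thm:fidelity data processing}: for any channel $\cE$, $F(\cE(\rho),\cE(\sigma))\ge F(\rho,\sigma)$. I will prove it with Uhlmann's theorem and the Stinespring dilation.

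First, I reduce to the case of an isometry followed by a partial trace. By Stinespring, there is an isometry $W\colon \mathcal H_{\mathtt X}\to\mathcal H_{\mathtt Y}\otimes\mathcal H_{\mathtt E}$ with
\[
\cE(\cdot)=\tr_{\mathtt E}\ps{W(\cdot)W^\dagger}.
\]
Fidelity is invariant under isometries, since $\sqrt{\sqrt{W\rho W^\dagger}\,W\sigma W^\dagger\sqrt{W\rho W^\dagger}}=W\sqrt{\sqrt\rho\sigma\sqrt\rho}\,W^\dagger$ and the trace is unchanged. Hence $F(W\rho W^\dagger,W\sigma W^\dagger)=F(\rho,\sigma)$. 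It therefore suffices to show that fidelity does not decrease under a partial trace, that is, $F(\tr_{\mathtt E}\rho',\tr_{\mathtt E}\sigma')\ge F(\rho',\sigma')$ for states $\rho',\sigma'$ on $\mathtt{YE}$.

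For the partial trace, I use Uhlmann's theorem: $F(\alpha,\beta)=\max\abs{\braket{a}{b}}$, where the maximum is over purifications $\ket a,\ket b$ of $\alpha,\beta$ on a common purifying space. Choose purifications $\ket{a},\ket{b}$ of $\rho',\sigma'$ on $\mathtt{YER}$ that attain $F(\rho',\sigma')=\abs{\braket ab}$. Viewing $\mathtt{ER}$ as the purifying system, the same vectors also purify $\tr_{\mathtt E}\rho'$ and $\tr_{\mathtt E}\sigma'$. Uhlmann's maximum over this larger set of purifications then gives
\[
F(\tr_{\mathtt E}\rho',\tr_{\mathtt E}\sigma')\ge\abs{\braket ab}=F(\rho',\sigma').
\]
Combining this with the isometry step proves the fact.

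The only delicate step is a dimension issue in Uhlmann's theorem: the purifying space must be large enough. I handle it by taking $\mathtt R$ of dimension at least $\dim(\mathtt{YE})$, which makes the maximum attained.
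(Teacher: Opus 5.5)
Your proof is correct; the paper gives no argument of its own here and simply cites Watrous's Theorem~3.27, whose standard proof follows the same Stinespring-plus-Uhlmann route you take. The only difference is cosmetic: that proof absorbs the isometry step by applying $W\otimes I_{\mathtt R}$ directly to optimal purifications of $\rho,\sigma$, which preserves their inner product and yields purifications of $\cE(\rho),\cE(\sigma)$ with purifying system $\mathtt{ER}$, so your separate square-root computation for isometry invariance is not needed.
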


\begin{fact}[{\cite[Eq. (1.132)]{Watrous18TQI}}]\label{thm:epr measurement}
    For any two operators $A,B$:
    \begin{align*}
    \bra\Phi \ps{A \otimes B} \ket\Phi
    = \frac1d \sum_{i,j} \bra{i} A \ket{j} \bra{i} B \ket{j}
    = \frac1d \sum_{i,j} A_{ij} \ps{B^{T}}_{ji}
    = \frac1d \tr\ps{A B^{T}}.
    \end{align*}
\end{fact}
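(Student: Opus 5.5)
The statement to prove is the last fact: for operators $A,B$ on $\C^d$ and $\ket\Phi=\frac{1}{\sqrt d}\sum_{k}\ket{k}\ket{k}$, we have $\bra\Phi(A\otimes B)\ket\Phi=\frac1d\tr(AB^T)$. The plan is a direct computation in the computational basis, proving the three displayed equalities in order.

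First, expand both the bra and the ket of $\ket\Phi$ over the standard basis. This gives
\[
\bra\Phi(A\otimes B)\ket\Phi=\frac1d\sum_{i,j}\bra{i}\bra{i}(A\otimes B)\ket{j}\ket{j}.
\]
Next, use the tensor-product rule $\bra{i}\bra{i}(A\otimes B)\ket{j}\ket{j}=\bra iA\ket j\,\bra iB\ket j$. This yields the first equality, and the summand equals $A_{ij}B_{ij}$.

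For the second equality, apply the definition of the transpose, $B_{ij}=(B^T)_{ji}$, so the sum becomes $\frac1d\sum_{i,j}A_{ij}(B^T)_{ji}$. For the third equality, sum over $j$ first: $\sum_j A_{ij}(B^T)_{ji}=(AB^T)_{ii}$. Summing over $i$ then gives $\tr(AB^T)$.

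There is no real obstacle. The only point that needs care is the convention that $\ket\Phi$ is the standard maximally entangled state in a fixed basis. The transpose is taken in that same basis, so the statement is basis-dependent exactly through this choice.
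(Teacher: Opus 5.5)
Your proposal is correct and follows the same route as the paper. The paper cites this fact from Watrous without a separate proof, and the displayed chain is exactly your computation: expand $\ket\Phi$ in the basis, factor $\bra{i}\bra{i}(A\otimes B)\ket{j}\ket{j}=A_{ij}B_{ij}$, rewrite $B_{ij}=(B^T)_{ji}$, and sum over $j$ to get the diagonal of $AB^T$; your caveat that the transpose is taken in the same basis that defines $\ket\Phi$ is the right one.
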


\subsection{Numerical Analysis}

We make use the following standard fact about computing singular values.
\begin{theorem}[{\cite{WikipediaSVD}}]\label{thm:svd}
    There is a deterministic algorithm that, given $a, b \in \N$, an accuracy $\delta \in (0,1)$, and the entries of a matrix $M \in \C^{a \times b}$ as rationals of bit length at most $\ell$, outputs the singular values of $M$, and in particular $\norm{M}_1$, up to additive error $\delta$, using $\poly\ps{a, b, \ell, \log\frac{1}{\delta}}$ bit operations.
\end{theorem}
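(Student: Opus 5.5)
The plan is to reduce the singular value computation to exact rational arithmetic plus bisection. The Hermitian structure lets us count eigenvalues below any threshold exactly, so no iterative numerical method or convergence analysis is needed.

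\textbf{Setup.} First I would remove complex numbers. Replace $M=R+iI$ by the real matrix
\[
\widetilde M=\begin{pmatrix}R&-I\\ I&R\end{pmatrix}\in\mathbb{Q}^{2a\times 2b}.
\]
Its singular values are exactly those of $M$, each repeated twice, so it suffices to treat real rational matrices. Next form the Gram matrix $G=\widetilde M^{T}\widetilde M$ exactly. It is a symmetric positive semidefinite $2b\times 2b$ rational matrix with entries of bit length $\poly(a,\ell)$. Its eigenvalues $\lambda_1\ge\dots\ge\lambda_{2b}\ge0$ are the squared singular values. They all lie in $[0,B]$ with $B=\tr G\le 4ab\,2^{2\ell}$.

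\textbf{Exact characteristic polynomial.} I would compute $\chi_G(z)=\det(zI-G)$ exactly, using a division-free method such as Berkowitz's algorithm, or Bareiss elimination over $\mathbb{Q}$. Its coefficients then have bit length $\poly(a,b,\ell)$. For a threshold $\theta\in\mathbb{Q}$, the shifted polynomial $\chi_{G}(z+\theta)=\det((z+\theta)I-G)$ is again computable exactly with polynomially bounded bit length (Taylor shift). Since $G$ is symmetric, this polynomial is real-rooted, with roots $\lambda_k-\theta$. For real-rooted polynomials, Descartes' rule of signs is exact: the number of sign changes in the coefficient sequence equals the number of positive roots counted with multiplicity. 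So we get an exact, deterministic count
\[
N(\theta)=\#\{k:\lambda_k>\theta\}.
\]
This multiplicity-aware count is what makes the approach clean. Sturm sequences would only count distinct roots.

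\textbf{Bisection and square roots.} For each $k$, I would locate $\lambda_k$ to within $\eta$ by bisection on $[0,B]$, querying $N(\theta)$ at dyadic points. This takes $O(\log(B/\eta))=\poly(\ell,\log a,\log b,\log\frac1\eta)$ steps per eigenvalue. The dyadic thresholds keep all bit lengths polynomial.

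To convert to singular values, use $|\sqrt x-\sqrt y|\le\sqrt{|x-y|}$. Taking $\eta=\delta^2$ then gives each $\sigma_k=\sqrt{\lambda_k}$ to additive error $\delta$. For $\norm{M}_1=\tfrac12\sum_k\sqrt{\lambda_k}$, I would take $\eta=(\delta/2b)^2$ so that the summed error is at most $\delta$. The square roots themselves can be computed by rational bisection to the same accuracy.

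The total cost is $\poly(a,b,\ell,\log\frac1\delta)$ bit operations. The step I expect to need the most care is bounding the bit length of the shifted characteristic polynomials, so that every Descartes count is polynomial-time. I would handle this with the standard Hadamard-type bound on minors of $\theta I-G$, together with the division-free evaluation of $\chi_G$.
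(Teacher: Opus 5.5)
Your proof is correct, and it takes a genuinely different route from the paper. The paper does not actually prove the algorithmic claim. It treats polynomial-time SVD as folklore and only adds a perturbation step: it truncates each entry to absolute error $\eta=\delta/(2r\sqrt{ab})$ with $r=\min(a,b)$, and uses $\abs{\sigma_i(M)-\sigma_i(\wt M)}\le\norm{M-\wt M}_F\le\delta/(2r)$ to show that $O(\log(ab/\delta))$ fractional bits per entry suffice, before running an unspecified numerical SVD to accuracy $\delta/(2r)$. You instead give a self-contained exact-arithmetic algorithm (realification, exact Gram matrix and characteristic polynomial, eigenvalue counting by Descartes' rule, bisection), and each step checks out. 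Descartes' count is exact for real-rooted polynomials, with multiplicity, and this holds even when $\theta$ is itself an eigenvalue. Passing through $G$ costs only a factor $2$ in $\log(1/\delta)$, thanks to $\abs{\sqrt x-\sqrt y}\le\sqrt{\abs{x-y}}$. Your route buys an actual proof of the polynomial bit-complexity bound, with no convergence analysis of iterative eigensolvers. The paper's perturbation step buys robustness to inexact entries, which is what its applications need: $2^{p/2}G$ involves $\sqrt2$, and the entries of $X^\dagger Y$ are only known to finite precision. So downstream, your exact algorithm would be run on the truncated rational matrix, just as in the paper's accounting. Two cosmetic points. First, budget the error of the final square-root step separately, e.g.\ $\eta=\delta^2/4$ and roots to within $\delta/2$. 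Second, read off the $\min(a,b)$ singular values of $M$ from the sorted eigenvalues of $G$ by taking every other one and discarding the $2(b-\min(a,b))$ structural zeros.
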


The SVD polynomial time complexity  is folklore in the numerical linear algebra literature. We complement it with a brief accounting for the bit complexity of truncation and the resulting approximation error.

Let $r=\min(a,b)$, and let $\wt M$ be obtained by truncating each entry of $M$ to absolute error at most $\eta=\delta/(2r\sqrt{ab})$. Writing $\sigma_i$ for the $i$th largest singular value, we have, for every $i\leq r$,
\[
    |\sigma_i(M)-\sigma_i(\wt M)|
    \leq \sqrt{\sum_{j=1}^{a}\sum_{k=1}^{b}|M_{jk}-\wt M_{jk}|^2}
    \leq \sqrt{ab}\,\eta
    =\frac{\delta}{2r},
\]
Thus, computing each $\sigma_i(\wt M)$ to additive error $\delta/(2r)$ gives error at most $\delta/r$ per singular value and at most $\delta$ in their sum, $\norm{M}_1$. The truncation requires only $O(\log(ab/\delta))$ fractional bits per entry.

\subsection{Setup}\label{sec:setup}

We first describe the routing strategy and the notation for its state at each stage of the protocol.  This setup is common to all of our results.  The computational assumptions on the adversaries' circuits will be specified separately in the corresponding sections.

\begin{definition}[Uniform circuit-generated routing strategies]\label{def:uniform-routing-strategy}
Let $d,q,L,\tau\colon\N\rightarrow\N$ be time-constructible, nondecreasing functions, with $d(n)\geq2$.  All registers are $d(n)$-dimensional qudits.  For every $n$, the strategy fixes systems
\[
\ol{\mathtt{Q}},\mathtt{Q},\mathtt{Q}_0,\mathtt{Q}_1,\mathtt{L},\mathtt{R},\mathtt{A}_1,\mathtt{A}_2,\mathtt{B}_1,\mathtt{B}_2,
\]
and we write
\[
\mathtt{A}\defeq\mathtt{A}_1\mathtt{A}_2, \qquad \text{and,} \qquad \mathtt{B}\defeq\mathtt{B}_1\mathtt{B}_2.
\]
A uniform $(q,L,\tau)$-bounded $f$-routing strategy is specified by five fixed, uniform, deterministic Turing machines which generate the resource preparation and Alice's and Bob's pre- and post-communication circuits, denoted by
\[
    (\genresource,\genalice,\genbob,\genalicepost,\genbobpost).
\]
The generators satisfy the following properties:
\begin{enumerate}
    \item \emph{Resource-state preparation.}
    For every $n$, machine $\genresource$ outputs the description of a quantum channel $\cI^{n}$ preparing the adversaries' initial entangled resource state.
    \[
        \genresource(1^n)\mapsto\cI^n_{\mathtt{LR}\to\mathtt{LR}}.
    \]

    \item \emph{Adversarial circuits.}
    For every $n\in \N$ and $x,y\in \powern$, the machines $\genalice$ and $\genbob$ output descriptions of Alice's and Bob's pre-communication channels, respectively:
    \[
        \genalice(1^n,x)\mapsto\cU^x_{\mathtt{QL}\to\mathtt{A}_1\mathtt{B}_2},
        \qquad
        \genbob(1^n,y)\mapsto\cV^y_{\mathtt{R}\to\mathtt{B}_1\mathtt{A}_2}.
    \]

    \item \emph{Post-communication channels.}
    For every $n\in\N$, $x,y\in\powern$, and $b\in\power{}$, the machines $\genalicepost$ and $\genbobpost$ output descriptions of Alice's and Bob's post-communication channels, respectively:
    \[
        \genalicepost(1^n,x,y,b)\mapsto
        \cU^{xyb}_{\mathtt{A}\to\mathtt{Q}_0},
        \qquad
        \genbobpost(1^n,x,y,b)\mapsto
        \cV^{xyb}_{\mathtt{B}\to\mathtt{Q}_1}.
    \]
    \item \emph{Resource bounds.}
    For every $n\in \N$, each of the five generator machines runs in time at most $\tau(n)$.  Alice's and Bob's pre- and post-communication channels act on at most $q(n)$ qudits each, while the resource channel acts on at most $2q(n)$ qudits.
    The combined description size of all five generated channels is bounded by
    \[
        \max_{\substack{x,y\in \powern\\b\in\power{}}}
        \set{
        |\cI^n|+|\cU^x|+|\cV^y|
        +|\cU^{xyb}|+|\cV^{xyb}|
        }
        \leq L(n).
    \]
\end{enumerate}
\end{definition}

\paragraph{The Protocol's State.}
Let us now describe the protocol's state development. Fix $n$. See also \Cref{fig:main-nlqc} for illustration.
\begin{enumerate}
    \item The initial state consists of an EPR pair and the adversaries' resource state, prepared on registers $\mathtt{L}$ and $\mathtt{R}$:
    \[
    \rhoinit_{\ol{\mathtt{Q}}\mathtt{QLR}} = \Phi_{\ol{\mathtt{Q}}\mathtt{Q}} \otimes
    \cI^n_{\mathtt{LR}\to\mathtt{LR}} \ps{\ketbraa{0^{2q}}_{\mathtt{LR}}}.
    \]
    
    \item The challenges $x,y\inr\powern$ are sampled.
    
    \item Alice and Bob locally apply their pre-communication channels, producing the mid-protocol state
    \[
    (\rhopre)_{\ol{\mathtt{Q}}\mathtt{AB}} =
    \ps{\operatorname{id}_{\ol{\mathtt{Q}}}\otimes
    \cU_{\mathtt{QL}\to\mathtt{A}_1\mathtt{B}_2}^{x}
    \otimes \cV^{y}_{\mathtt{R}\to\mathtt{B}_1\mathtt{A}_2}}\ps{\rhoinit}.
    \]
    Alice sends $\mathtt{B}_2$ to Bob, and Bob sends $\mathtt{A}_2$ to Alice.  After this exchange, Alice holds $\mathtt{A}=(\mathtt{A}_1,\mathtt{A}_2)$ and Bob holds $\mathtt{B}=(\mathtt{B}_1,\mathtt{B}_2)$.

    \item 
    For $b=f(x,y)$, Alice and Bob apply the corresponding local post-communication channels, producing the final state
    \[
    (\rhopost)_{\ol{\mathtt{Q}}\mathtt{Q}_0\mathtt{Q}_1}=
    \ps{\operatorname{id}_{\ol{\mathtt{Q}}}\otimes
    \cU^{xyb}_{\mathtt{A}_1\mathtt{A}_2\to\mathtt{Q}_0} \otimes
    \cV^{xyb}_{\mathtt{B}_1\mathtt{B}_2\to\mathtt{Q}_1}}\ps{(\rhopre)_{\ol{\mathtt{Q}}\mathtt{AB}}}.
    \]
\end{enumerate}

\begin{figure}[H]
\centering
\begin{tikzpicture}[
  x=0.68cm,y=0.39cm,>=stealth,font=\footnotesize,
  channel/.style={draw,fill=white,minimum width=0.96cm,minimum height=0.48cm,inner sep=1.3pt},
  wire/.style={->,semithick,shorten >=2pt}
]
  \draw[->] (-1.3,-1.35) -- (-1.3,14.55) node[above] {time};
  \draw[->] (-1.3,-1.35) -- (11.1,-1.35) node[right] {space};
  \foreach \x in {0,10} {
    \draw[gray,densely dotted] (\x,-0.1) -- (\x,14.1);
  }
  % Position guides stop at the operation layers to keep the wires distinct.
  \foreach \x in {3,7} {
    \draw[gray,densely dotted] (\x,-0.1) -- (\x,3.3);
    \draw[gray,densely dotted] (\x,11.1) -- (\x,14.1);
  }
  \node[below] at (0,-0.15) {$V_0$};
  \node[below] at (3,-0.15) {Alice};
  \node[below] at (7,-0.15) {Bob};
  \node[below] at (10,-0.15) {$V_1$};

  \node[channel] (alicepre) at (3,4.1) {$\cU^x$};
  \node[channel] (bobpre) at (7,4.1) {$\cV^y$};
  \node[channel] (alicepost) at (3,10.4) {$\cU^{xyb}$};
  \node[channel] (bobpost) at (7,10.4) {$\cV^{xyb}$};

  % Separate ports for the challenge and the entangled resource.
  \coordinate (aQ) at ([xshift=-0.23cm]alicepre.south);
  \coordinate (aL) at ([xshift=0.23cm]alicepre.south);
  \coordinate (bR) at ([xshift=-0.23cm]bobpre.south);
  \coordinate (by) at ([xshift=0.23cm]bobpre.south);
  \coordinate (aKeep) at ([xshift=-0.23cm]alicepre.north);
  \coordinate (aSend) at ([xshift=0.23cm]alicepre.north);
  \coordinate (bSend) at ([xshift=-0.23cm]bobpre.north);
  \coordinate (bKeep) at ([xshift=0.23cm]bobpre.north);
  \coordinate (aKeepIn) at ([xshift=-0.23cm]alicepost.south);
  \coordinate (aReceive) at ([xshift=0.23cm]alicepost.south);
  \coordinate (bReceive) at ([xshift=-0.23cm]bobpost.south);
  \coordinate (bKeepIn) at ([xshift=0.23cm]bobpost.south);

  \fill (0,0.6) circle (1.2pt);
  \fill (10,0.6) circle (1.2pt);
  \draw[wire] (0,0.6) -- (aQ)
    node[pos=0.43,above left,inner sep=1pt] {$x,\mathtt Q$};
  \draw[wire] (10,0.6) -- (by)
    node[pos=0.43,above right,inner sep=1pt] {$y$};
  % The retained reference propagates upward along the verifier worldline.
  \draw[->,semithick] (0,0.6) -- (0,12.9);
  \node[anchor=east] at (-0.12,7.3) {$\ol{\mathtt Q}$};

  \draw[semithick] (aL)
    .. controls (3.34,0.15) and (6.66,0.15) .. (bR);
  \node[anchor=east] at (3.4,2.55) {$\mathtt L$};
  \node[anchor=west] at (6.6,2.55) {$\mathtt R$};

  \draw[wire] (aKeep) -- (aKeepIn)
    node[pos=0.43,left,inner sep=2pt] {$\mathtt A_1$};
  \draw[wire] (bKeep) -- (bKeepIn)
    node[pos=0.43,right,inner sep=2pt] {$\mathtt B_1$};
  % Shorten the white underlay so it clears the channel border.
  \draw[semithick] (aSend)
    .. controls (3.34,5.35) and (6.66,7.6) .. (6.66,8.25);
  \draw[semithick,preaction={draw,white,line width=7pt,shorten <=5pt}] (bSend)
    .. controls (6.66,5.35) and (3.34,7.6) .. (3.34,8.25);
  \node at (4.25,5.05) {$\mathtt B_2,x$};
  \node at (5.75,5.05) {$\mathtt A_2,y$};
  \draw[wire] (3.34,8.25) -- (aReceive);
  \draw[wire] (6.66,8.25) -- (bReceive);

  % Leave space between the state cut and the channel input arrowheads.
  \draw[densely dotted,thin] (-0.3,3) -- (10.3,3);
  \node[anchor=west] at (10.45,3)
    {$\rhoinit_{\ol{\mathtt Q}\mathtt{QLR}}$};
  \draw[densely dotted,thin] (-0.3,9) -- (10.3,9);
  \node[anchor=west] at (10.45,9)
    {$\rhopre_{\ol{\mathtt Q}\mathtt{AB}}$};
  \draw[densely dotted,thin] (-0.3,12) -- (10.3,12);
  \node[anchor=west] at (10.45,12)
    {$\rhopost_{\ol{\mathtt Q}\mathtt Q_0\mathtt Q_1}$};
  \draw[wire,dashed] (alicepost.north) -- (0,13.9)
    node[pos=0.55,above right,inner sep=1pt] {$\mathtt Q_0$};
  \draw[wire,dashed] (bobpost.north) -- (10,13.9)
    node[pos=0.55,above left,inner sep=1pt] {$\mathtt Q_1$};
  \fill (0,13.9) circle (1.2pt);
  \fill (10,13.9) circle (1.2pt);
\end{tikzpicture}
\caption{Spacetime view of an $f$-routing attack and the evolution of the state. The lower arc denotes pre-shared entanglement, and $V_0$ retains $\ol{\mathtt Q}$. For $b=f(x,y)$, the verifiers test $\mathtt Q_b$.}
\label{fig:main-nlqc}
\end{figure}

We say that the strategy is $(1-\eps)$-correct on input $(x,y)$ if
\[
    F\ps{(\rhopost)_{\ol{\mathtt{Q}}\mathtt{Q}_{f(x,y)}} \; , \; \Phi_{\ol{\mathtt{Q}}\mathtt{Q}}} \ge 1-\eps.
\]

\section{The Universal Fidelity Gap}\label{sec:fidelity bounds}
Our main theorem in this section is a universal fidelity gap for any $f$-routing strategy that is $(1-\eps)$-correct on an input $(x,y)$. 

\begin{theorem}\label{thm:universal-fidelity-gap}
Let $\eps>0$.  Consider an $f$-routing strategy that is $(1-\eps)$-correct on an input $(x,y)$, and assume $f(x,y)=1$. Let $\rho_{\ol{\mathtt{Q}} \mathtt{AB}}=(\rhopre)_{\ol{\mathtt{Q}} \mathtt{AB}}$ be its mid-protocol state, and define
\begin{align*}
    F_0 &\defeqq F\ps{\rho_{\ol{\mathtt{Q}} \mathtt{A}} \; , \; \rho_{\ol{\mathtt{Q}}}\otimes\rho_{\mathtt{A}}},\\
    F_1 &\defeqq F\ps{\rho_{\ol{\mathtt{Q}} \mathtt{B}} \; , \; \rho_{\ol{\mathtt{Q}}}\otimes\rho_{\mathtt{B}}}.
\end{align*}
Then
\[
    F_0\geq 1-4\eps,
    \qquad
    F_1\leq\frac1d+\sqrt{2\eps}.
\]
\end{theorem}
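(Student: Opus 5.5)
The plan is to treat the two bounds separately, following the outline of \cref{sec:overview-fidelity-bounds}. Throughout, let $\cV\defeqq\cV^{xy1}_{\mathtt B\to\mathtt Q_1}$ denote Bob's post-communication channel. Correctness with $f(x,y)=1$ says that $\sigma\defeqq(\operatorname{id}_{\ol{\mathtt Q}\mathtt A}\otimes\cV)(\rho_{\ol{\mathtt Q}\mathtt{AB}})$ satisfies $F(\sigma_{\ol{\mathtt Q}\mathtt Q_1},\Phi)\ge1-\eps$. Note that Alice's post-communication channel does not affect the $\ol{\mathtt Q}\mathtt Q_1$ marginal. Since $\cV$ is trace preserving and acts only on $\mathtt B$, we have $\sigma_{\ol{\mathtt Q}\mathtt A}=\rho_{\ol{\mathtt Q}\mathtt A}$.

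\textbf{Lower bound on $F_0$.} Since $\Phi$ is pure, Uhlmann's theorem gives an extension $\tau$ with $F(\sigma_{\ol{\mathtt Q}\mathtt Q_1\mathtt A},\Phi\otimes\tau_{\mathtt A})=F(\sigma_{\ol{\mathtt Q}\mathtt Q_1},\Phi)\ge1-\eps$. Concretely, take a purification of $\sigma$, note that any purification of $\Phi$ has the form $\ket\Phi\otimes\ket{\phi}$, maximize over $\ket\phi$, and trace out the extra purifying system; this gives a product extension. Tracing out $\mathtt Q_1$ (\cref{thm:fidelity data processing}) yields $F(\rho_{\ol{\mathtt Q}\mathtt A},\rho_{\ol{\mathtt Q}}\otimes\tau_{\mathtt A})\ge1-\eps$, using $\Phi_{\ol{\mathtt Q}}=I/d=\rho_{\ol{\mathtt Q}}$. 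Tracing out $\ol{\mathtt Q}$ as well gives $F(\rho_{\mathtt A},\tau_{\mathtt A})\ge1-\eps$. Fidelity is multiplicative on tensor products, so $F(\rho_{\ol{\mathtt Q}}\otimes\tau_{\mathtt A},\rho_{\ol{\mathtt Q}}\otimes\rho_{\mathtt A})\ge1-\eps$. Finally, \cref{thm:fidelity triangle} with $F(\rho,\omega),F(\omega,\sigma)\ge1-\eps$ gives
\[
F_0\ge(1-\eps)^2-\bigl(1-(1-\eps)^2\bigr)=2(1-\eps)^2-1\ge1-4\eps.
\]
The right-hand side of \cref{thm:fidelity triangle} is monotone in each fidelity on $[0,1]$, so substituting the lower bounds is legitimate.

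\textbf{Upper bound on $F_1$.} Apply $\cV$ to both $\rho_{\ol{\mathtt Q}\mathtt B}$ and $\rho_{\ol{\mathtt Q}}\otimes\rho_{\mathtt B}$, obtaining $\sigma_{\ol{\mathtt Q}\mathtt Q_1}$ and $\omega=\tfrac{I}{d}\otimes\omega_{\mathtt Q_1}$. By \cref{thm:fidelity data processing}, $F_1\le F(\sigma,\omega)$. Two facts about the EPR projector are then needed:
\[
\tr(\Phi\sigma)=F(\sigma,\Phi)^2\ge(1-\eps)^2\ge1-2\eps,
\qquad
\tr(\Phi\omega)=\tfrac{1}{d^2},
\]
the second by \cref{thm:epr measurement} with $A=I/d$ and $\tr\omega_{\mathtt Q_1}=1$. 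The remaining task is to turn this gap in measurement statistics into a fidelity bound.

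\textbf{Main obstacle.} The delicate step is converting this gap into $F(\sigma,\omega)\le\tfrac1d+\sqrt{2\eps}$ with exactly these constants. My approach is to apply data processing to the two-outcome measurement $\{\Phi,I-\Phi\}$. This gives
\[
F(\sigma,\omega)\le\sqrt{pq'}+\sqrt{(1-p)(1-q')},
\]
where $p=\tr(\Phi\sigma)$ and $q'=\tr(\Phi\omega)=1/d^2$. Bounding $\sqrt{pq'}\le\sqrt{q'}=1/d$ and $\sqrt{(1-p)(1-q')}\le\sqrt{1-p}\le\sqrt{2\eps}$ then yields $F_1\le\tfrac1d+\sqrt{2\eps}$.
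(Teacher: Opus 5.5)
Your proposal is correct and follows essentially the same route as the paper. For $F_0\geq 1-4\eps$ you use Uhlmann's theorem to obtain a product extension $\Phi\otimes\tau_{\mathtt A}$, then monotonicity under partial traces, multiplicativity, and the fidelity-angle triangle inequality; for $F_1\leq\frac1d+\sqrt{2\eps}$ you use data processing through Bob's decoder followed by the two-outcome measurement $\{\Phi, I-\Phi\}$. Your explicit construction of the product Uhlmann extension and your remark that the triangle-inequality bound is monotone in each argument fill in small details the paper leaves implicit.
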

\begin{proof}
    The two bounds follow from \cref{thm:soundness product state X,thm:completeness product state}, respectively.
\end{proof}

For routing with perfect 1-sided error, namely when $\eps=0$ whenever $f(x,y)=1$, \citet[Lemma~5]{AsadiCulfMay2025} established that $F_0 = 1$ and $F_1 < 1$. Our theorem gives quantitative fidelity bounds that remain valid for nonzero error.

\subsection{Soundness Separability}

\begin{lemma}\label{thm:soundness product state X}
Let $\eps > 0.$
Consider an $f$-routing strategy that is $\ps{1-\eps}$-correct on an input $\ps{x,y}$ satisfying $f\ps{x,y}=1$, and let $\rho_{\ol{\mathtt{Q}} \mathtt{AB}}=\rhopre_{\ol{\mathtt{Q}} \mathtt{AB}}$ be its mid-protocol state. Then
\begin{align*}
    F_0
    \defeqq
    F\ps{\rho_{\ol{\mathtt{Q}} \mathtt{A}} \; , \; \rho_{\ol{\mathtt{Q}}}\otimes\rho_{\mathtt{A}}}
    \ge 1-4\eps.
\end{align*}
\end{lemma}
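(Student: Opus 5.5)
Following the outline in \cref{sec:overview-fidelity-bounds}, we first apply Bob's post-communication channel $\cV^{xy1}_{\mathtt B\to\mathtt Q_1}$ to the mid-protocol state, and define
\[
\sigma_{\ol{\mathtt Q}\mathtt Q_1\mathtt A}\defeqq\ps{\operatorname{id}_{\ol{\mathtt Q}\mathtt A}\otimes\cV^{xy1}}\ps{\rho_{\ol{\mathtt Q}\mathtt{AB}}}.
\]
Because Alice's post-communication channel acts only on $\mathtt A$, the marginal $\sigma_{\ol{\mathtt Q}\mathtt Q_1}$ equals $(\rhopost)_{\ol{\mathtt Q}\mathtt Q_1}$. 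Correctness therefore gives $F(\sigma_{\ol{\mathtt Q}\mathtt Q_1},\Phi)\geq1-\eps$. Since $\cV^{xy1}$ is trace preserving and acts only on $\mathtt B$, we also have $\sigma_{\ol{\mathtt Q}\mathtt A}=\rho_{\ol{\mathtt Q}\mathtt A}$.

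The key step uses Uhlmann's theorem together with the purity of $\Phi$. Purify $\sigma_{\ol{\mathtt Q}\mathtt Q_1\mathtt A}$ to $\ket{\psi}_{\ol{\mathtt Q}\mathtt Q_1\mathtt A\mathtt E}$, which is also a purification of $\sigma_{\ol{\mathtt Q}\mathtt Q_1}$. Every purification of the pure state $\Phi$ has the form $\ket\Phi\otimes\ket{\theta}_{\mathtt{AE}}$. By Uhlmann's theorem there is such a $\ket\theta$ with $|\braket{\psi}{\Phi\otimes\theta}|=F(\sigma_{\ol{\mathtt Q}\mathtt Q_1},\Phi)\geq1-\eps$. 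Tracing out $\mathtt E$ and using monotonicity (\cref{thm:fidelity data processing}), we get $F(\sigma_{\ol{\mathtt Q}\mathtt Q_1\mathtt A},\Phi\otimes\tau_{\mathtt A})\geq1-\eps$ with $\tau=\tr_{\mathtt E}\ketbraa\theta$. Tracing out $\mathtt Q_1$ then gives $F(\rho_{\ol{\mathtt Q}\mathtt A},\rho_{\ol{\mathtt Q}}\otimes\tau_{\mathtt A})\geq1-\eps$, where $\Phi_{\ol{\mathtt Q}}=I/d=\rho_{\ol{\mathtt Q}}$. A further trace over $\ol{\mathtt Q}$ gives $F(\rho_{\mathtt A},\tau_{\mathtt A})\geq1-\eps$. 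Fidelity is multiplicative under tensor products, so $F(\rho_{\ol{\mathtt Q}}\otimes\tau_{\mathtt A},\rho_{\ol{\mathtt Q}}\otimes\rho_{\mathtt A})\geq1-\eps$.

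Finally, we combine these two bounds with the triangle inequality \cref{thm:fidelity triangle}, taking $\omega=\rho_{\ol{\mathtt Q}}\otimes\tau_{\mathtt A}$. Write $a=1-\eps$ for the lower bound on both fidelities. The right-hand side of \cref{thm:fidelity triangle} is increasing in each fidelity (this holds whenever both lie in $[0,1]$ and their angles sum to at most $\pi/2$, which is true for small $\eps$). Hence
\[
F_0\geq a^2-(1-a^2)=2a^2-1=1-4\eps+2\eps^2\geq1-4\eps.
\]
If $\eps$ is large enough that this monotonicity could fail, i.e.\ $1-4\eps\leq0$, the bound is trivial because fidelity is nonnegative. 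In the intermediate range we still have $2a^2-1=\cos(2\arccos a)$, and we can argue directly with angles: $\arccos F_0\leq 2\arccos(1-\eps)$.

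I expect the Uhlmann step to be the main obstacle. One has to check carefully that a purification of $\Phi\otimes(\cdot)$ really factorizes, so that the optimal purification has the product form $\Phi\otimes\theta$. One also has to be consistent about the fidelity convention: this paper uses the root fidelity, so Uhlmann gives $|\langle\cdot|\cdot\rangle|$ directly. The remaining steps are routine applications of data processing and the triangle inequality.
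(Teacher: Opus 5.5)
Your proposal is correct and matches the paper's proof step for step. Both use Uhlmann's theorem with the pure target $\Phi$ to obtain $\tau_{\mathtt A}$, then monotonicity under $\tr_{\mathtt Q_1}$ and $\tr_{\ol{\mathtt Q}}$ together with multiplicativity to bound both intermediate fidelities by $1-\eps$, and finally the fidelity triangle inequality through $\rho_{\ol{\mathtt Q}}\otimes\tau_{\mathtt A}$. Your case split on monotonicity is unnecessary: $\alpha\beta-\sqrt{1-\alpha^2}\sqrt{1-\beta^2}=\cos(\arccos\alpha+\arccos\beta)$ is nondecreasing in each argument on all of $[0,1]^2$, because the angle sum always lies in $[0,\pi]$, where cosine is decreasing.
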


\begin{proof}
We begin with the triangle inequality that will drive the proof. For a
state $\tau_{\mathtt{A}}$ to be chosen below, define
\begin{align*}
    \alpha 
    &\defeqq
    F\ps{\rho_{\ol{\mathtt{Q}} \mathtt{A}} \; , \; \rho_{\ol{\mathtt{Q}}}\otimes\tau_{\mathtt{A}}},
    &
    \beta
    &\defeqq
    F\ps{
        \rho_{\ol{\mathtt{Q}}}\otimes\tau_{\mathtt{A}} \; , \;
        \rho_{\ol{\mathtt{Q}}}\otimes\rho_{\mathtt{A}}
    }.
\end{align*}
Applying \cref{thm:fidelity triangle} through the intermediate product
state $\rho_{\ol{\mathtt{Q}}}\otimes\tau_{\mathtt{A}}$ gives
\begin{align*}
    F_0
    &=
    F\ps{\rho_{\ol{\mathtt{Q}} \mathtt{A}} \; , \; \rho_{\ol{\mathtt{Q}}}\otimes\rho_{\mathtt{A}}} \\
    &\ge
    \alpha \beta-\sqrt{1-\alpha ^2}\sqrt{1-\beta^2}.
\end{align*}
We will later choose $\tau_{\mathtt{A}}$ such that $\alpha ,\beta\ge 1-\eps$. Substituting
these two bounds into the above will then give the desired lower bound on $F_0$:
\begin{align*}
    F_0
    &\ge
    \ps{1-\eps}^2
    -
    \sqrt{1-\ps{1-\eps}^2}
    \sqrt{1-\ps{1-\eps}^2}
    \\
    &=
    \ps{1-\eps}^2-\ps{2\eps-\eps^2}
    \\
    &=
    1-4\eps+2\eps^2
    \\
    &\ge
    1-4\eps.
\end{align*}
We move forward to choosing $\tau_{\mathtt{A}}$. Since the strategy is $\ps{1-\eps}$-correct and $f\ps{x,y}=1$, Bob's
post-communication decoder $\cV^{xy1}_{\mathtt{B}\to \mathtt{Q}_1}$ produces the state
\begin{align*}
    \sigma_{\ol{\mathtt{Q}} \mathtt{Q}_1\mathtt{A}}
    \defeqq
    \ps{\operatorname{id}_{\ol{\mathtt{Q}} \mathtt{A}}\otimes\cV^{xy1}_{\mathtt{B}\to \mathtt{Q}_1}}
    \ps{\rho_{\ol{\mathtt{Q}} \mathtt{AB}}},
\end{align*}
Since Alice's post-communication channel is trace preserving and acts only on $\mathtt{A}$, applying it does not change the marginal on $\ol{\mathtt{Q}} \mathtt{Q}_1$.  Hence, the correctness condition from \cref{sec:setup} implies that its marginal satisfies
\begin{align*}
    F\ps{\sigma_{\ol{\mathtt{Q}} \mathtt{Q}_1},\Phi_{\ol{\mathtt{Q}} \mathtt{Q}_1}}
    \ge 1-\eps.
\end{align*}
Because $\Phi_{\ol{\mathtt{Q}} \mathtt{Q}_1}$ is pure, Uhlmann's theorem provides a state
$\tau_{\mathtt{A}}$ such that
\begin{align}\label{eq:uhlmann extension}
    F\ps{
        \sigma_{\ol{\mathtt{Q}} \mathtt{Q}_1\mathtt{A}} \; ,\;
        \Phi_{\ol{\mathtt{Q}} \mathtt{Q}_1}\otimes\tau_{\mathtt{A}}
    }
    &=
    F\ps{\sigma_{\ol{\mathtt{Q}} \mathtt{Q}_1} \; ,\; \Phi_{\ol{\mathtt{Q}} \mathtt{Q}_1}}
    \ge 1-\eps.
\end{align}
We first bound $\alpha $. Since $\cV^{xy1}_{\mathtt{B}\to \mathtt{Q}_1}$ is trace-preserving
and acts only on $\mathtt{B}$,
\begin{equation}\label{eq:firsteq}
    \begin{aligned}
        \tr_{\mathtt{Q}_1}\ps{\sigma_{\ol{\mathtt{Q}} \mathtt{Q}_1\mathtt{A}}}
    &=
    \tr_{\mathtt{Q}_1}\!\left[
        \ps{\operatorname{id}_{\ol{\mathtt{Q}} \mathtt{A}}\otimes\cV^{xy1}_{\mathtt{B}\to \mathtt{Q}_1}}
        \ps{\rho_{\ol{\mathtt{Q}} \mathtt{AB}}}
    \right] 
    \\
    &=
    \tr_{\mathtt{B}}\ps{\rho_{\ol{\mathtt{Q}} \mathtt{AB}}}
    \\
    &=
    \rho_{\ol{\mathtt{Q}} \mathtt{A}}.
    \end{aligned}
\end{equation}
Moreover, since the register $\ol{\mathtt{Q}}$ is untouched throughout the protocol,
\[
    \rho_{\ol{\mathtt{Q}}}=\tr_{\mathtt{Q}}\ps{\Phi_{\ol{\mathtt{Q}} \mathtt{Q}}}=\frac{I_{\ol{\mathtt{Q}}}}{d}.
\]
Therefore,
\begin{align*}
    \alpha 
    &=
    F\ps{\rho_{\ol{\mathtt{Q}} \mathtt{A}} \; , \; \rho_{\ol{\mathtt{Q}}}\otimes\tau_{\mathtt{A}}} \\
    &=
    F\ps{
        \tr_{\mathtt{Q}_1}\ps{\sigma_{\ol{\mathtt{Q}} \mathtt{Q}_1\mathtt{A}}} \; , \;
        \rho_{\ol{\mathtt{Q}}} \otimes \tau_{\mathtt{A}}
    }
    && \text{(\cref{eq:firsteq})} \\
    &=
    F\ps{
        \tr_{\mathtt{Q}_1}\ps{\sigma_{\ol{\mathtt{Q}} \mathtt{Q}_1\mathtt{A}}} \; , \;
        \tr_{\mathtt{Q}_1}\ps{\Phi_{\ol{\mathtt{Q}} \mathtt{Q}_1}\otimes\tau_{\mathtt{A}}}
    }
    && \text{($\rho_{\ol{\mathtt{Q}}}=I_{\ol{\mathtt{Q}}}/d$)} \\
    &\ge
    F\ps{
        \sigma_{\ol{\mathtt{Q}} \mathtt{Q}_1\mathtt{A}} \;, \;
        \Phi_{\ol{\mathtt{Q}} \mathtt{Q}_1}\otimes\tau_{\mathtt{A}}
    }
    && \text{(monotonicity under $\tr_{\mathtt{Q}_1}$)}
    \notag\\
    &\ge 1-\eps.
    && \text{(\cref{eq:uhlmann extension})}
\end{align*}
We next bound $\beta$. 
Using multiplicativity of fidelity for tensor product, it follows that
\begin{align*}
    \beta
    &=
    F\ps{
        \rho_{\ol{\mathtt{Q}}}\otimes\tau_{\mathtt{A}} \;, \;
        \rho_{\ol{\mathtt{Q}}}\otimes\rho_{\mathtt{A}}
    }
    \notag\\
    &=
    F\ps{\rho_{\ol{\mathtt{Q}}} \; , \; \rho_{\ol{\mathtt{Q}}}} \cdot F\ps{\tau_{\mathtt{A}} \; , \; \rho_{\mathtt{A}}} \\
    &=
    1 \cdot F\ps{\tau_{\mathtt{A}} \;, \; \rho_{\mathtt{A}}},
\end{align*}
because for any state $\omega$, its fidelity with itself is $F(\omega, \omega) = 1$.

Finally, using the monotonicity of fidelity, 
\begin{align*}
    \beta &= 
    F\ps{\rho_{\mathtt{A}} \; , \; \tau_{\mathtt{A}}} \\
    &=
    F\ps{
        \tr_{\ol{\mathtt{Q}}}\ps{\rho_{\ol{\mathtt{Q}} \mathtt{A}}} \; , \; \tr_{\ol{\mathtt{Q}}} \ps{\rho_{\ol{\mathtt{Q}}} \otimes\tau_{\mathtt{A}}}
    } \\
    &\ge
    F\ps{\rho_{\ol{\mathtt{Q}} \mathtt{A}} \; ,\; \rho_{\ol{\mathtt{Q}}}\otimes\tau_{\mathtt{A}}}
    && \text{(monotonicity under $\tr_{\ol{\mathtt{Q}}}$)} \\
    &=
    \alpha  \\
    &\ge 1-\eps.
\end{align*}
\end{proof}

\subsection{Completeness inseparability}

\begin{lemma}\label{thm:completeness product state}
    Consider an $f$-routing strategy that is $(1-\eps)$-correct on an input $(x,y)$ satisfying $f(x,y)=1$, and let $\rho_{\ol{\mathtt{Q}} \mathtt{AB}}=(\rhopre)_{\ol{\mathtt{Q}} \mathtt{AB}}$ be its mid-protocol state.  Then
    \begin{align*}
    F_1\defeqq F\pss{\rho_{\ol{\mathtt{Q}} \mathtt{B}} \; , \; \rho_{\ol{\mathtt{Q}}} \otimes \rho_{\mathtt{B}}} \leq \frac{1}{d} + \sqrt{2\eps}.
    \end{align*}
\end{lemma}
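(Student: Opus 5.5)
The plan is to push both states in $F_1$ through Bob's post-communication decoder $\cV^{xy1}_{\mathtt{B}\to\mathtt{Q}_1}$, so that the question becomes one about two explicit states on $\ol{\mathtt{Q}}\mathtt{Q}_1$. Define
\[
\sigma_{\ol{\mathtt{Q}}\mathtt{Q}_1}\defeqq\ps{\operatorname{id}_{\ol{\mathtt{Q}}}\otimes\cV^{xy1}}\ps{\rho_{\ol{\mathtt{Q}}\mathtt{B}}},
\qquad
\omega_{\ol{\mathtt{Q}}\mathtt{Q}_1}\defeqq\ps{\operatorname{id}_{\ol{\mathtt{Q}}}\otimes\cV^{xy1}}\ps{\rho_{\ol{\mathtt{Q}}}\otimes\rho_{\mathtt{B}}}.
\]
By \cref{thm:fidelity data processing}, $F_1\le F(\sigma,\omega)$, so it suffices to bound $F(\sigma,\omega)$.

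First I identify $\sigma$ with the tested marginal. Alice's post-communication channel is trace preserving and acts only on $\mathtt{A}$. Hence $\sigma_{\ol{\mathtt{Q}}\mathtt{Q}_1}$ equals the marginal $(\rhopost)_{\ol{\mathtt{Q}}\mathtt{Q}_1}$ of the final state. Correctness with $f(x,y)=1$ then gives $F(\sigma,\Phi)\ge 1-\eps$.

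Next I compute $\omega$. The register $\ol{\mathtt{Q}}$ is never touched, so $\rho_{\ol{\mathtt{Q}}}=I/d$. Bob's decoder acts only on $\mathtt{B}$, so $\omega=\frac{I}{d}\otimes\omega_{\mathtt{Q}_1}$ for some state $\omega_{\mathtt{Q}_1}$. By \cref{thm:epr measurement},
\[
\tr(\Phi\omega)=\tfrac1d\cdot\tfrac1d\tr\ps{\omega_{\mathtt{Q}_1}^T}=\tfrac1{d^2}.
\]
Since $\Phi$ is pure, this gives $F(\omega,\Phi)=\sqrt{\tr(\Phi\omega)}=1/d$.

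The remaining and central step is turning these two fidelities with $\Phi$ into an upper bound on $F(\sigma,\omega)$. I would use the fidelity-angle triangle inequality, the same one behind \cref{thm:fidelity triangle}, but in reverse form. Set $a=\arccos F(\omega,\Phi)$, $b=\arccos F(\sigma,\Phi)$ and $c=\arccos F(\sigma,\omega)$, all lying in $[0,\pi/2]$. The triangle inequality gives $a\le b+c$ and $b\le a+c$, so $c\ge|a-b|$. Because cosine is decreasing on $[0,\pi/2]$,
\[
F(\sigma,\omega)=\cos c\le\cos(a-b)=\cos a\cos b+\sin a\sin b\le \tfrac1d+\sqrt{1-F(\sigma,\Phi)^2}.
\]
Here I used $\cos b\le1$ and $\sin a\le 1$. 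Finally, $1-F(\sigma,\Phi)^2\le 1-(1-\eps)^2\le 2\eps$, which yields $F_1\le\frac1d+\sqrt{2\eps}$.

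The only delicate point is justifying $c\ge|a-b|$ in both orderings of $a$ and $b$. This is handled by applying the angle triangle inequality twice, as above.
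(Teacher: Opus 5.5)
Your proof is correct, and it follows the paper's outline: data processing through Bob's decoder $\cV^{xy1}$, then $F(\sigma,\Phi)\ge 1-\eps$ from correctness and $\tr(\Phi\omega)=1/d^2$ from $\rho_{\ol{\mathtt Q}}=I/d$.

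The difference is in how you combine these two facts. The paper applies data processing a second time, to the two-outcome measurement $\{\Phi,I-\Phi\}$. It then evaluates the resulting classical fidelity $\sqrt{\tr(\Phi\sigma)\tr(\Phi\omega)}+\sqrt{\tr((I-\Phi)\sigma)\tr((I-\Phi)\omega)}$. You instead use the reverse triangle inequality $c\ge|a-b|$ for the fidelity angle.

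Because $\Phi$ is pure, $\tr(\Phi\tau)=F(\tau,\Phi)^2$, so the paper's expression is exactly your $\cos a\cos b+\sin a\sin b$. The two routes therefore give the same intermediate bound and apply the same relaxation afterwards.

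The paper's measurement argument is effectively a self-contained proof of the special case of the angle triangle inequality in which the middle state is pure. It needs only monotonicity of fidelity and the formula for the fidelity of commuting states. Your version instead relies on the metric property of the fidelity angle, which the paper already uses in proving \cref{thm:fidelity triangle}. In exchange, it makes the geometry explicit: $\sigma$ lies within angle $\arccos(1-\eps)$ of $\Phi$, while $\omega$ sits at angle $\arccos(1/d)$ from it, so the two states must be far apart.
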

\begin{proof}
To bound the target fidelity, we apply Bob's decoding channel $\cV^{xy1}_{\mathtt{B}\to \mathtt{Q}_1}$ on both states and observe how they relate to each other. Denote
\begin{align*}
    \sigma_{\ol{\mathtt{Q}} \mathtt{Q}_1} \defeq
    \ps{\operatorname{id}_{\ol{\mathtt{Q}}}\otimes\cV^{xy1}_{\mathtt{B}\to \mathtt{Q}_1}}\ps{\rho_{\ol{\mathtt{Q}} \mathtt{B}}}, \\
    \omega_{\ol{\mathtt{Q}} \mathtt{Q}_1} \defeq
    \ps{\operatorname{id}_{\ol{\mathtt{Q}}}\otimes\cV^{xy1}_{\mathtt{B}\to \mathtt{Q}_1}}
    \ps{\rho_{\ol{\mathtt{Q}}} \otimes \rho_{\mathtt{B}}}.
\end{align*}
By the data-processing inequality for fidelity (\cref{thm:fidelity data processing}):
\begin{align*} %\label{eq:fpb-dpi}
F\ps{\rho_{\ol{\mathtt{Q}} \mathtt{B}} \;, \; \rho_{\ol{\mathtt{Q}}} \otimes \rho_{\mathtt{B}}}
\leq F\ps{\ps{\operatorname{id}_{\ol{\mathtt{Q}}}\otimes\cV^{xy1}_{\mathtt{B}\to \mathtt{Q}_1}}\ps{\rho_{\ol{\mathtt{Q}} \mathtt{B}}} \;, \;  \ps{\operatorname{id}_{\ol{\mathtt{Q}}}\otimes\cV^{xy1}_{\mathtt{B}\to \mathtt{Q}_1}}\ps{\rho_{\ol{\mathtt{Q}}} \otimes \rho_{\mathtt{B}}}}
= F\pss{\sigma, \omega}.
\end{align*}
To bound the fidelity $F(\sigma,\omega)$, we compare the probabilities with which the two states pass the EPR test. Consider the channel $\cM$ that measures in the $\set{\Phi, I - \Phi}$ basis, namely for every state $\tau$ it acts as follows
\[
\mathcal M(\tau) = \tr\ps{\Phi \tau} \ketbraa{0} + \tr\ps{(I - \Phi)\tau}\ketbraa{1}.
\]
Since the outputs of $\cM$ are classical states, their fidelity can be computed directly:
\begin{spliteq}\label{eq:fpb-two-outcome}
F\ps{\sigma \; , \; \omega}
&\leq F\ps{\cM(\sigma) \; , \;\cM(\omega)} \\
&= \tr \sqrt{\cM(\sigma) \cM(\omega)} \\
&= \sqrt{\tr\ps{\Phi \sigma}\tr\ps{\Phi \omega}} + \sqrt{\tr\ps{(I - \Phi)\sigma}\tr\ps{(I - \Phi)\omega}} \\
&\leq \sqrt{1 \cdot \tr(\Phi \omega)} + \sqrt{\tr((I- \Phi) \sigma) \cdot 1} \\
&=\sqrt{\tr(\Phi \omega)} + \sqrt{1 - \tr(\Phi \sigma)}. 
\end{spliteq}
We bound these quantities in \cref{thm:trace phi sigma,thm:trace phi omega} and deduce:

\begin{align*}
F\ps{\rho_{\ol{\mathtt{Q}} \mathtt{B}} \;, \; \rho_{\ol{\mathtt{Q}}} \otimes \rho_{\mathtt{B}}}
\le \frac1{d} + \sqrt{2\eps}.
\end{align*}
\end{proof}

\begin{lemma}\label{thm:trace phi sigma}
    Under the hypotheses of \cref{thm:completeness product state}, let
    \[
        \sigma_{\ol{\mathtt{Q}} \mathtt{Q}_1}
        \defeq
        \ps{\operatorname{id}_{\ol{\mathtt{Q}}}\otimes\cV^{xy1}_{\mathtt{B}\to \mathtt{Q}_1}}
        \ps{\rho_{\ol{\mathtt{Q}} \mathtt{B}}}.
    \]
    Then $\tr(\Phi \sigma) \ge 1-2\eps$.
\end{lemma}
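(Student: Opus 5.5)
The plan is to link $\sigma_{\ol{\mathtt{Q}}\mathtt{Q}_1}$ to the final-protocol state, apply the correctness guarantee, and then convert fidelity into an acceptance probability using the purity of $\Phi$.

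\textbf{Step 1: identify $\sigma$ with the final marginal.} We show that $\sigma_{\ol{\mathtt{Q}}\mathtt{Q}_1}$ equals the marginal $(\rhopost)_{\ol{\mathtt{Q}}\mathtt{Q}_1}$ of the final-protocol state for $b=f(x,y)=1$. The final state is obtained from $\rho_{\ol{\mathtt{Q}}\mathtt{AB}}$ by applying $\cU^{xy1}_{\mathtt{A}\to\mathtt{Q}_0}\otimes\cV^{xy1}_{\mathtt{B}\to\mathtt{Q}_1}$. Since $\cU^{xy1}$ is trace preserving and acts only on $\mathtt{A}$, tracing out $\mathtt{Q}_0$ is the same as tracing out $\mathtt{A}$ before applying $\cU^{xy1}$. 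Therefore
\[
(\rhopost)_{\ol{\mathtt{Q}}\mathtt{Q}_1}=\ps{\operatorname{id}_{\ol{\mathtt{Q}}}\otimes\cV^{xy1}}\ps{\tr_{\mathtt{A}}\rho_{\ol{\mathtt{Q}}\mathtt{AB}}}=\sigma_{\ol{\mathtt{Q}}\mathtt{Q}_1}.
\]
This is the same argument already used in the proof of \cref{thm:soundness product state X}. Combined with $(1-\eps)$-correctness, it gives $F(\sigma,\Phi)\ge 1-\eps$.

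\textbf{Step 2: fidelity with a pure state.} Because $\Phi$ is pure, $F(\sigma,\Phi)=\sqrt{\bra{\Phi}\sigma\ket{\Phi}}=\sqrt{\tr(\Phi\sigma)}$, using the paper's convention that fidelity is not squared, which is consistent with \cref{lem:fvg}. Squaring gives
\[
\tr(\Phi\sigma)=F(\sigma,\Phi)^2\ge(1-\eps)^2=1-2\eps+\eps^2\ge 1-2\eps.
\]

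The only point requiring care is Step 1: we must use trace preservation of Alice's decoder to justify that the correctness condition, which is stated on the final joint state, applies to $\sigma$. Even this is routine and mirrors \cref{eq:firsteq}.
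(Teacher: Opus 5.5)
Your proposal is correct and follows the paper's proof essentially verbatim: trace preservation of Alice's decoder transfers the correctness condition to $\sigma$, and purity of $\Phi$ gives $\tr(\Phi\sigma)=F(\sigma,\Phi)^2\ge(1-\eps)^2\ge 1-2\eps$. The only difference is that you write out the identification $\sigma_{\ol{\mathtt Q}\mathtt Q_1}=(\rhopost)_{\ol{\mathtt Q}\mathtt Q_1}$ explicitly, where the paper states it in one line.
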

\begin{proof}
Since Alice's post-communication channel is trace preserving and acts only on $\mathtt{A}$, the correctness condition implies
\begin{align*}
F\ps{\sigma \; , \;\Phi_{\ol{\mathtt{Q}} \mathtt{Q}_1}} \geq 1 - \eps.
\end{align*}
Therefore
\begin{align*}
\tr\ps{\Phi \sigma} = F\ps{\sigma,\Phi}^2 \ge (1-\eps)^2 \ge 1-2\eps.
\end{align*}
\end{proof}

\begin{lemma}\label{thm:trace phi omega}
Under the hypotheses of \cref{thm:completeness product state}, let
\[
    \omega_{\ol{\mathtt{Q}} \mathtt{Q}_1}
    \defeq
    \ps{\operatorname{id}_{\ol{\mathtt{Q}}}\otimes\cV^{xy1}_{\mathtt{B}\to \mathtt{Q}_1}}
    \ps{\rho_{\ol{\mathtt{Q}}}\otimes\rho_{\mathtt{B}}}.
\]
Then $\tr\ps{\Phi\omega} = \frac{1}{d^2}$.
\end{lemma}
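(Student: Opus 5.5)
Since $\rho_{\ol{\mathtt{Q}}}\otimes\rho_{\mathtt{B}}$ is a product state and $\cV^{xy1}_{\mathtt{B}\to\mathtt{Q}_1}$ acts only on $\mathtt{B}$, the first step is to note that the output stays product:
\[
    \omega_{\ol{\mathtt{Q}}\mathtt{Q}_1}=\rho_{\ol{\mathtt{Q}}}\otimes\cV^{xy1}(\rho_{\mathtt{B}}).
\]
Writing $\omega_{\mathtt{Q}_1}\defeqq\cV^{xy1}(\rho_{\mathtt{B}})$, trace preservation of the decoder gives $\tr\omega_{\mathtt{Q}_1}=1$. As established in the proof of \cref{thm:soundness product state X}, the reference marginal is maximally mixed, $\rho_{\ol{\mathtt{Q}}}=I_{\ol{\mathtt{Q}}}/d$, because $\ol{\mathtt{Q}}$ is never touched.

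The second step is to evaluate the EPR overlap with \cref{thm:epr measurement}, taking $A=I/d$ and $B=\omega_{\mathtt{Q}_1}$:
\[
    \tr(\Phi\omega)=\bra\Phi\ps{\tfrac{I}{d}\otimes\omega_{\mathtt{Q}_1}}\ket\Phi=\frac1d\tr\ps{\tfrac{I}{d}\,\omega_{\mathtt{Q}_1}^{T}}=\frac{1}{d^2}\tr\omega_{\mathtt{Q}_1}=\frac1{d^2}.
\]
Here we use that the transpose preserves the trace.

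There is no real obstacle; the only points needing care are the two bookkeeping facts the computation relies on. First, the decoder must be applied as $\operatorname{id}_{\ol{\mathtt{Q}}}\otimes\cV$, so that it cannot create correlations with the reference. Second, $\ol{\mathtt{Q}}$ and $\mathtt{Q}_1$ must have equal dimension $d$, so that $\Phi$ is well defined on them and \cref{thm:epr measurement} applies.
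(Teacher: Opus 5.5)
Your proposal is correct and follows the paper's proof essentially step for step. Because the decoder acts only on $\mathtt{B}$, the output stays the product $\rho_{\ol{\mathtt{Q}}}\otimes\omega_{\mathtt{Q}_1}$ with $\rho_{\ol{\mathtt{Q}}}=I/d$, and \cref{thm:epr measurement} together with $\tr\omega_{\mathtt{Q}_1}=1$ then gives $1/d^2$.
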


\begin{proof}
Since $\cV^{xy1}_{\mathtt{B} \to \mathtt{Q}_1}$ acts only on $\mathtt{B}$,
\[
\omega
= \rho_{\ol{\mathtt{Q}}}\otimes\cV^{xy1}_{\mathtt{B} \to \mathtt{Q}_1}\ps{\rho_{\mathtt{B}}}
\defeq \rho_{\ol{\mathtt{Q}}}\otimes\omega_{\mathtt{Q}_1}.
\]
The register $\ol{\mathtt{Q}}$ is the reference half of $\Phi_{\ol{\mathtt{Q}} \mathtt{Q}}$ and is untouched by every map of the protocol, hence its marginal is exactly the maximally mixed state
\[
\rho_{\ol{\mathtt{Q}}} = \tr_{\mathtt{Q}}\ps{\Phi_{\ol{\mathtt{Q}} \mathtt{Q}}} = \frac{I_{\ol{\mathtt{Q}}}}{d}.
\]
Applying \cref{thm:epr measurement} with $A = \rho_{\ol{\mathtt{Q}}}$ and $B = \omega_{\mathtt{Q}_1}$, and using $\tr\ps{\omega_{\mathtt{Q}_1}} = 1$,
\[
\tr\ps{\Phi\omega} = \bra{\Phi} \rho_{\ol{\mathtt{Q}}} \otimes \omega_{\mathtt{Q}_1} \ket{\Phi}
= \frac{1}{d} \tr\ps{\rho_{\ol{\mathtt{Q}}} \omega_{\mathtt{Q}_1}^T}
= \frac{1}{d^2} \tr\ps{\omega_{\mathtt{Q}_1}^T}
= \frac{1}{d^2}.
\]
\end{proof}

\section{Security from Time-Bounded QSZK Hardness}
\newlang{\QSD}{QSD}
\newlang{\coQSD}{coQSD}

\newcommand{\simulator}{\mathsf{Sim}}
\newcommand{\prover}{\mathsf{P}}
\newcommand{\verifier}{\mathsf{Ver}}
\newcommand{\HV}{\mathrm{HV}}

A successful uniform routing strategy produces two efficiently preparable states whose fidelity determines $f(x,y)$.  Converting the fidelity gap into a trace-distance gap gives a reduction to Quantum State Distinguishability, and therefore places $f$ in time-bounded QSZK.

We use the following time-bounded version of Quantum Statistical Zero Knowledge \cite{Watrous02,Watrous09}.
Throughout this section, all time bounds $T\colon\N\to\N$ are time-constructible and satisfy $T(n)\geq n$.

\begin{definition} \label{def:qszk-time}
The complexity class $\QSZK(T)$ consists of all languages having a Quantum Statistical Zero-Knowledge proof whose honest verifier $\verifier$ runs in time at most $T(n)$ and for which every verifier $\verifier^*$ running in time $S(n)$ has a simulator $\simulator_{\verifier^*}$ running in time $\poly(n + S(n))$ on inputs of length $n$. The statistical simulation error is negligible in $n$.
\end{definition}

For a family of Boolean functions
\[
    f = \set{ f_n\colon\powern \times\powern \rightarrow\power{}} ,
\]
we write $f\in\QSZK(T)$ when the associated language of $f$ belongs to the class, namely 
\[
L_f \defeqq \sett{x\circ y}{f_n(x,y)=1} \in \QSZK(T).
\]
Observe that the usual class $\QSZK$ is recovered by taking the union over polynomially bounded functions $T$.

Our main theorem in this section is a security reduction, from a uniform $f$-routing strategy to a time bounded $\QSZK$ algorithm computing $f$.
\begin{theorem}\label{thm:routing-to-qszk}
There are a universal polynomial $p$ and a constant
$\varepsilon_0>0$ with the following property.

If a uniform $(q,L,\tau)$-bounded $f$-routing strategy is $(1-\varepsilon)$-correct on every input, where $0\leq\varepsilon\leq\varepsilon_0$, then $f\in\QSZK(T_{\mathrm{attack}})$, where
\[
    T_{\mathrm{attack}}(n) \defeq p\left( n+q(n)+\log d(n)+L(n)+\tau(n) \right).
\]
\end{theorem}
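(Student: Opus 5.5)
The plan is a polynomial-time Karp reduction from $L_f$ to the complement of Quantum State Distinguishability, followed by Watrous's QSZK protocol for $\coQSD$, with every cost tracked in terms of $n+q+\log d+L+\tau$. Given an input $x\circ y$, the reduction first runs $\genresource(1^n)$, $\genalice(1^n,x)$ and $\genbob(1^n,y)$, in time $O(\tau(n))$. From their outputs it builds two circuits on $\ol{\mathtt Q}\mathtt{QLR}$, of size $\poly(L(n),q(n),\log d(n))$, with designated output register $\ol{\mathtt Q}\mathtt A$. The first circuit $C_0$ prepares $\Phi_{\ol{\mathtt Q}\mathtt Q}$, applies $\cI^n$, then $\cU^x\otimes\cV^y$, and traces out $\mathtt B$; it outputs $\rho_{\ol{\mathtt Q}\mathtt A}$. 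The second circuit $C_1$ prepares two independent copies of the same state, keeps $\ol{\mathtt Q}$ from the first copy and $\mathtt A$ from the second, and so outputs $\rho_{\ol{\mathtt Q}}\otimes\rho_{\mathtt A}$. We can also use $\rho_{\ol{\mathtt Q}}=I/d$ directly. None of this uses $f(x,y)$, since the partition into retained and sent registers is fixed by $x$ and by $y$ separately.

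Next, the fidelity gap of \cref{thm:main-fidelity-gap} (equivalently \cref{thm:universal-fidelity-gap} and its mirror case $f=0$, with Alice as the recipient and $F_0\le 1/d+\sqrt{2\eps}\le 1/2+\sqrt{2\eps}$) is converted to trace distance via \cref{lem:fvg}. If $f(x,y)=1$, then $\tfrac12\norm{\rho_{\ol{\mathtt Q}\mathtt A}-\rho_{\ol{\mathtt Q}}\otimes\rho_{\mathtt A}}_1\le\sqrt{1-(1-4\eps)^2}\le\sqrt{8\eps}$. If $f(x,y)=0$, the same distance is at least $1-F_0\ge 1/2-\sqrt{2\eps}$. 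Choose $\eps_0$ small, for example $\eps_0=10^{-3}$. Then the thresholds are $a\le 0.09$ and $b\ge 0.45$, so $a<b^2$ (about $0.2$). This is exactly the gap condition in Watrous's $\QSD_{a,b}$ setting. If needed, we can first apply the polarization / amplification lemma, which is polynomial in size for constant $a<b^2$, to move the thresholds to $(2^{-n},1-2^{-n})$. Thus $f(x,y)=1$ maps to a YES instance of $\coQSD$ and $f(x,y)=0$ maps to a NO instance.

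Finally, we run Watrous's honest-verifier QSZK protocol for $\coQSD$ on the pair $(C_0,C_1)$, and then apply the HVQSZK$=$QSZK transformation \cite{Watrous09}. The main obstacle is the time bookkeeping, not the mathematics, because \cref{def:qszk-time} requires concrete bounds. The verifier's work consists of the generator calls, about $\poly(n)$ polarized copies of the circuits, and swap/measurement tests. This costs $p(n+q+\log d+L+\tau)$ for a universal polynomial $p$. The key point is that Watrous's protocol and his general-verifier simulation both run in time polynomial in the instance circuit size and the cheating verifier's running time $S$. This gives simulator time $\poly(n+S(n))$ once we note $T_{\mathrm{attack}}\le S$ or absorb it, and negligible error after amplification. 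I would check carefully that the universal polynomial does not depend on the strategy, and that the qudit circuits compile to qubit gates with only $\poly(\log d)$ overhead per gate.
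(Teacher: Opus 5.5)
Your proposal is correct and follows the paper's proof. The ingredients are the same: the deterministic reduction to $\mathrm{coQSD}_{\alpha,\beta}$ via the circuits outputting $\rho_{\ol{\mathtt Q}\mathtt A}$ and $\rho_{\ol{\mathtt Q}}\otimes\rho_{\mathtt A}$, the fidelity gap (including the mirrored $f=0$ case) converted by Fuchs--van de Graaf with $\eps_0$ fixed so that $\alpha<\beta^2$, and then Watrous's protocol. Your polarization and honest-verifier-to-general-verifier steps are exactly what the paper packages into \cref{thm:qsd-resource-in-qszkt}.

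Your hedge on the simulator is apt. The paper also never justifies a $\poly(n+S(n))$ simulator bound when $S(n)<T_{\mathrm{attack}}(n)$. The defensible reading is $\poly(T_{\mathrm{attack}}(n)+S(n))$, which matches the paper's informal description of $\QSZK(T)$.
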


We also observe that $\QSZK(T)$ can be simulated in deterministic space, which can be read as the quantitative form of $\QIP=\PSPACE$.
Formally, for every time bound $T$, let $\QIP(T)$ denote the class of languages having a Quantum Interactive Proof system whose verifier runs in time at most $T(n)$.  Then,
\begin{theorem}[\cite{JainJiUpadhyayWatrous11}]\label{thm:qip-time-space}
There is a constant $a\geq 1$ such that, for every time-constructible $T$,
\[
    \QIP(T) \subseteq \DSPACE\left(T^a\right).
\]
\end{theorem}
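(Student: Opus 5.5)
The plan is to prove the containment by a padding argument that reduces every time bound $T$ to a single fixed base case: verifiers running in linear time. The base case is where the universal exponent $a$ comes from, and it will be taken from the quantitative content of \cite{JainJiUpadhyayWatrous11}.

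\textbf{Base case.} First I would establish that there is a constant $a$ such that every language with a QIP verifier running in time $O(m)$ lies in $\DSPACE(m^{a})$. The steps are:
\begin{enumerate}
\item Apply the parallelization of Kitaev--Watrous to the linear-time verifier. This gives a $3$-message protocol whose verifier circuit has size polynomial in $m$, with a fixed exponent.
\item Write the optimal acceptance probability of this protocol as a semidefinite program on matrices of dimension $2^{O(m^{b})}$, for a fixed $b$.
\item Invoke the parallel matrix multiplicative weights algorithm of \cite{JainJiUpadhyayWatrous11}. It distinguishes acceptance probability $\ge 2/3$ from $\le 1/3$ using $\poly(m^{b})$ iterations, each consisting of matrix exponentials and products of $2^{O(m^b)}$-dimensional matrices.
\item Use that NC circuits over matrices of dimension $N$ can be evaluated in space $\poly\log N$. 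This yields space $\poly(m^{b}) = O(m^{a})$, where $a$ depends only on the fixed polynomial overheads of the construction and not on the language.
\end{enumerate}
The key point to verify is that none of these overheads depends on $T$.

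\textbf{Padding.} Given $L\in\QIP(T)$, define
\[
L' \defeqq \sett{x\#1^{T(|x|)}}{x\in L}.
\]
A verifier for $L'$ on input $w$ of length $m$ proceeds as follows:
\begin{enumerate}
\item Parse $w$ and check that the padding has length exactly $T(|x|)$. Time-constructibility of $T$ makes this take $O(m)$ time.
\item Run the original verifier on $x$, which takes time $T(|x|)\le m$.
\end{enumerate}
Hence $L'\in\QIP(O(m))$, and by the base case $L'\in\DSPACE(m^{a})$.

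\textbf{Deciding $L$ in small space.} On input $x$ of length $n$, simulate the space-bounded decider for $L'$ on the virtual input $x\#1^{T(n)}$, without writing it down. Each queried symbol of the virtual input is produced from a position counter: compare the counter with $n$ and $T(n)$, where $T(n)$ is computed once in $O(T(n))$ space. The total space is
\[
O\bigl((n+T(n))^{a}\bigr)+O(T(n)) = O\bigl(T(n)^{a}\bigr),
\]
using $T(n)\ge n$, and constants are absorbed by increasing $a$ by one.

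\textbf{Main obstacle.} The hard step is the base case. \cite{JainJiUpadhyayWatrous11} is stated as $\QIP\subseteq\PSPACE$, not with an explicit exponent. I would therefore re-read their algorithm to confirm two things:
\begin{enumerate}
\item The NC depth and the required numerical precision are polynomial in the verifier circuit size with a fixed degree.
\item The Kitaev--Watrous parallelization and error reduction incur only fixed polynomial blowups.
\end{enumerate}
Once a single linear-time class is shown to sit inside one fixed polynomial space bound, padding handles every $T$ uniformly.
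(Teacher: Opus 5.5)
Your argument is correct, but it obtains the crucial uniform exponent $a$ differently from the paper, which never opens up \cite{JainJiUpadhyayWatrous11}. Instead, the paper fixes one promise problem: triples $(V,x,1^t)$ where $V$ is a verifier running in time $t$, with maximum acceptance probability $\ge 2/3$ (yes) versus $\le 1/3$ (no). A universal verifier simulates $V$ in time polynomial in $t$, so this single problem lies in $\QIP$. The black-box statement $\QIP\subseteq\PSPACE$, applied to promise problems, then places it in $\DSPACE(N^a)$ for some specific $a$. Every $L\in\QIP(T)$ maps to it via $x\mapsto(V_L,x,1^{T(|x|)})$, an instance of length $O(T(n))$, which gives $\DSPACE(T^a)$. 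The instance can even be written down explicitly, since $O(T)$ space is within $O(T^a)$.

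Your padded language and on-the-fly simulation of $x\#1^{T(n)}$ play the same role as this reduction. Your ``base case'', however, secures uniformity of $a$ over all linear-time languages by re-auditing the Kitaev--Watrous parallelization, the SDP formulation, the multiplicative-weights algorithm, and the NC-to-space simulation. That audit does go through, since each overhead there is a fixed polynomial, and it buys you an explicit numerical value of $a$. But the ``main obstacle'' you flag is avoidable. The existence of some $a$ already follows from applying $\QIP\subseteq\PSPACE$ to one fixed complete problem. You could insert this into your own proof by observing that every padded language $L'\in\QIP(O(m))$ reduces, with linear blow-up, to the universal problem above.

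One minor wording point: in your base case, what must be checked is that the overheads do not depend on the language or verifier, not that they do not depend on $T$.
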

\begin{proof}
For a (uniform) QIP verifier $V$, denote by $\omega(V,x)$ the maximum acceptance probability of $V$ in an interaction. Consider the following natural complete promise problem for $\QIP$:
\begin{align*}
    L_Y &= \{ (V, x, 1^t) : \text{$V$ runs in time $t$ and $\omega(V,x) \geq 2/3$} \} \\
    L_N &= \{ (V, x, 1^t) : \text{$V$ runs in time $t$ and $\omega(V,x) \leq 1/3$} \}
\end{align*}
Clearly $(L_Y,L_N) \in \QIP$, so there exists $a$ such that $(L_Y,L_N) \in \DSPACE(n^a)$. Any $L \in \QIP$ reduces to $(L_Y,L_N)$ with instance length $O(T(n))$, from which it follows that $L \in \DSPACE(T^a)$.
\end{proof}

One advantage of our uniform techniques is that the deterministic space hierarchy theorem provides unconditional choices of hard functions.
Indeed, observe that simply by ignoring the zero-knowledge requirement, we have that for some universal constant $a\ge 1$, for every time bound $T$, we have that $\QSZK(T)\subseteq\QIP(T) \subseteq \DSPACE(T^a)$.
Therefore by \cref{thm:routing-to-qszk}:

\begin{corollary}\label{cor:qszk-hardness-security}
There is a family of total Boolean functions $f = \set{f_n \colon \powern \times \powern\to\power{}}_{n \in\N}$ for which no uniform $(q,L,\tau)$-bounded $f$-routing strategy is $(1-\varepsilon)$-correct on every input for any $0\leq\varepsilon\leq\varepsilon_0$, where
\[
    f\in \DSPACE \left(T_{\mathrm{attack}}^{a+1}\right) \setminus \QSZK\left(T_{\mathrm{attack}}\right).
\]
\end{corollary}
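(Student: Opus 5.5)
The plan is to combine the deterministic space hierarchy theorem with the containment chain $\QSZK(T)\subseteq\QIP(T)\subseteq\DSPACE(T^a)$ from \cref{thm:qip-time-space}, and then take the contrapositive of \cref{thm:routing-to-qszk}.

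First, note that $T_{\mathrm{attack}}$ is time-constructible and satisfies $T_{\mathrm{attack}}(n)\geq n$. This holds because $q,d,L,\tau$ are time-constructible and nondecreasing and $p$ is a fixed polynomial; if necessary, replace $p$ by a larger polynomial so that composition preserves constructibility. Write $S_1=T_{\mathrm{attack}}^a$ and $S_2=T_{\mathrm{attack}}^{a+1}$. Since $S_1=o(S_2)$ whenever $T_{\mathrm{attack}}\to\infty$, which holds because $T_{\mathrm{attack}}\geq n$, and both are space-constructible, the deterministic space hierarchy theorem gives a language
\[
L\in\DSPACE(S_2)\setminus\DSPACE(S_1).
\]
Because $L_f$ has input length $2n$ rather than $n$, we need to be careful about the indexing. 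I would apply the hierarchy to a language $L'\subseteq\set{0,1}^*$, restrict attention to even-length inputs, and define $f_n(x,y)=\mathbf 1[x\circ y\in L']$. This makes $f$ total and gives $L_f=L'\cap\set{z:|z|\ \text{even}}$. The hierarchy diagonalization can be run only on even lengths, so that the separation survives this restriction. The constant factor in $T_{\mathrm{attack}}(2n)$ versus $T_{\mathrm{attack}}(n)$ is absorbed by interpreting the bounds at the appropriate length, or by enlarging $p$.

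Next, suppose $L_f\in\QSZK(T_{\mathrm{attack}})$. Then $L_f\in\QIP(T_{\mathrm{attack}})\subseteq\DSPACE(T_{\mathrm{attack}}^a)$ by \cref{thm:qip-time-space}, which contradicts the choice of $L$. Hence
\[
f\in\DSPACE(T_{\mathrm{attack}}^{a+1})\setminus\QSZK(T_{\mathrm{attack}}).
\]
Finally, if some uniform $(q,L,\tau)$-bounded strategy were $(1-\eps)$-correct on every input with $\eps\leq\eps_0$, then \cref{thm:routing-to-qszk} would give $f\in\QSZK(T_{\mathrm{attack}})$, which is a contradiction.

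I expect the main obstacle to be bookkeeping rather than any new idea. One issue is matching the input length $2n$ of $L_f$ with the length parameter used in $T_{\mathrm{attack}}$. The other is ensuring the space bounds are constructible, so that the hierarchy theorem applies with the gap $T^a$ versus $T^{a+1}$. I would handle both by stating everything as functions of the total input length $m=2n$, and by absorbing the constant rescaling into $p$ and $a$.
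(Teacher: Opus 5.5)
Your proposal is correct and is the same argument as the paper's: the space hierarchy theorem separates $\DSPACE(T_{\mathrm{attack}}^{a+1})$ from $\DSPACE(T_{\mathrm{attack}}^{a})\supseteq\QIP(T_{\mathrm{attack}})\supseteq\QSZK(T_{\mathrm{attack}})$, and the contrapositive of \cref{thm:routing-to-qszk} then rules out the strategy; your bookkeeping on the $2n$ input length and on constructibility fills in details the paper leaves implicit. One small caution: enlarging $p$ does not in general absorb the gap between $T_{\mathrm{attack}}(2n)$ and $T_{\mathrm{attack}}(n)$ when $q,L,\tau$ grow very fast, so rely on your other option of indexing every bound by the half-length $n=m/2$, which needs no rescaling at all.
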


\subsection{Reduction to Quantum State Distinguishability}

For any two density matrices $\rho, \sigma$, we use the distance metric
\[
    \Delta(\rho,\sigma) \defeq \frac12 \norm{\rho - \sigma}_1.
\]

\begin{definition}[Quantum State Distinguishability]
\label{def:resource-bounded-qsd}
Let $\alpha,\beta$ be constants such that $0 \leq \alpha <\beta \leq 1$. The promise problem $\QSD_{\alpha,\beta}$ consists of pairs of quantum circuits $(C_0,C_1)$ having designated output registers of the same dimension. Let $\sigma_b$ be the reduced state on the output register obtained by running $C_b$ on the all-zero state and tracing out the other registers. The promise is:
\begin{enumerate}
    \item \emph{Yes:}
    $\Delta(\sigma_0,\sigma_1)\geq\beta$.
    \item \emph{No:}
    $\Delta(\sigma_0,\sigma_1)\leq\alpha$.
\end{enumerate}
\end{definition}
Similarly, we define the complement problem of $\QSD_{\alpha,\beta}$ as $\coQSD_{\alpha,\beta}$. We use the following result of \cite{Watrous02,Watrous09}.
\begin{theorem}\label{thm:qsd-resource-in-qszkt}
For any constants $0\leq\alpha<\beta^2\leq1$,
\[
    \coQSD_{\alpha,\beta} \in \QSZK.
\]
\end{theorem}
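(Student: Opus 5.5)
The plan is to derive the statement from the two main structural results of \cite{Watrous02,Watrous09} rather than build a new protocol. The argument has three steps: amplify the promise gap by polarization, give an honest-verifier protocol for the amplified complement problem, and remove the honest-verifier restriction.

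\textbf{Step 1: polarization.} Given circuits $(C_0,C_1)$ with $\Delta(\sigma_0,\sigma_1)\leq\alpha$ or $\geq\beta$, where $\alpha<\beta^2$, I would apply Watrous's quantum polarization lemma. It alternates the XOR construction, which drives the distance toward $0$ in the small case, with direct tensor powers, which drive it toward $1$ in the large case. The output is, in polynomial time, a new pair $(C_0',C_1')$ of size polynomial in $|C_0|+|C_1|$. This pair satisfies $\Delta\leq 2^{-k}$ in the NO case of $\QSD$ and $\Delta\geq 1-2^{-k}$ in the YES case, for a security parameter $k$ polynomial in the input length. The condition $\alpha<\beta^2$ is exactly what is needed here. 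It comes from the Fuchs--van de Graaf relations (\cref{lem:fvg}) used to compare the XOR step with tensoring, so this is the one place where the hypothesis enters.

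\textbf{Step 2: an honest-verifier protocol for the amplified $\coQSD$.} Here the goal is to convince the verifier that the two states are close. The verifier runs $C_0'$ on $\ket{0}$, producing a purification $\ket{\psi_0}$ of $\sigma_0'$ on output register $\mathtt O$ and garbage register $\mathtt G$. It keeps $\mathtt O$ and sends $\mathtt G$ to the prover. The prover applies the Uhlmann unitary on $\mathtt G$ that maps $\ket{\psi_0}$ as close as possible to $\ket{\psi_1}=C_1'\ket0$ and returns $\mathtt G$. The verifier then applies $(C_1')^\dagger$ and accepts iff it obtains $\ket{0}$. By Uhlmann's theorem the acceptance probability is $F(\sigma_0',\sigma_1')^2$. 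This is at least $1-2^{-k+1}$ in the close case by \cref{lem:fvg}, and at most $2^{-k+1}$ in the far case. Honest-verifier statistical zero knowledge holds because the simulator can produce the verifier's view by itself. Before the prover acts, the view is $\ket{\psi_0}$. After the prover acts, it is approximately $\ket{\psi_1}$, which the simulator can prepare by running $C_1'$. The two differ by only $O(2^{-k/2})$ in the close case.

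\textbf{Step 3: from honest-verifier to general $\QSZK$.} I would invoke Watrous's result that honest-verifier quantum statistical zero knowledge equals $\QSZK$ \cite{Watrous09}, which handles arbitrary malicious verifiers. As an alternative route to the same conclusion, one could note that $\QSD$ is complete for $\QSZK$ and that $\QSZK$ is closed under complement \cite{Watrous02}.

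The main obstacle is Step 3. Simulating malicious verifiers requires Watrous's quantum rewinding technique, which is far beyond the scope of this paper. I would therefore cite \cite{Watrous09} for this step and make only Steps 1 and 2 explicit. In doing so I would check that all circuit sizes are polynomial, so the same argument also gives the time-bounded version needed later in \cref{thm:routing-to-qszk}.
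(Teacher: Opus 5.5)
Your proposal is correct and takes the same route as the paper, which gives no argument of its own and simply imports this result from \cite{Watrous02,Watrous09}. Your three steps are the content of those references: polarization, where $\alpha<\beta^2$ is used; the Uhlmann-based two-message honest-verifier protocol for the close case, with maximum acceptance probability $F(\sigma_0',\sigma_1')^2$; and $\mathbf{HVQSZK}=\mathbf{QSZK}$ from \cite{Watrous09}.
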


\noindent In what follows, fix an arbitrary input length $n$ and write $q=q(n)$, $d=d(n)$, $L=L(n)$, and $\tau=\tau(n)$.

\begin{lemma}\label{thm:routing-to-qsd}
There are a constant $\varepsilon_0>0$ and a universal polynomial
$p_0$ such that the following holds.

Suppose that an $f$-routing strategy is a uniform
$(q,L,\tau)$-bounded strategy that is $(1-\varepsilon)$-correct on
every input, where $0\leq\varepsilon\leq\varepsilon_0$.
Then, for some universal constants
$0\leq\alpha<\beta^2\leq1$, there is a deterministic many-one
reduction $R$ computable in time $T_{\mathrm{reduction}}(n) \defeqq p_0(n+ q+\log d+ L+ \tau)$ from $L_f$ to $\coQSD_{\alpha,\beta}$.
\end{lemma}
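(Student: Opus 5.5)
The reduction $R$ maps an input $x\circ y$ to a pair of circuits $(C_0,C_1)$. The first circuit prepares the mid-protocol marginal $\rho_{\ol{\mathtt Q}\mathtt A}$ of $\rhopre$. The second prepares the product of marginals $\rho_{\ol{\mathtt Q}}\otimes\rho_{\mathtt A}$. By \cref{thm:universal-fidelity-gap} and its mirror image, the fidelity $F_0(x,y)$ between these two states is large exactly when $f(x,y)=1$. Converting this fidelity gap into a trace-distance gap via \cref{lem:fvg} places $(C_0,C_1)$ in the yes-set of $\coQSD_{\alpha,\beta}$ exactly when $x\circ y\in L_f$.

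\emph{Construction.} On input $x\circ y$ with $|x|=|y|=n$, the reduction runs $\genresource(1^n)$, $\genalice(1^n,x)$ and $\genbob(1^n,y)$, each in time $\tau$. It obtains descriptions of $\cI^n$, $\cU^x$ and $\cV^y$ of total size at most $L$. It compiles each channel into a unitary circuit on the all-zero state together with ancillas (a Stinespring dilation), encoding each $d$-dimensional qudit into $\lceil\log d\rceil$ qubits.

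$C_0$ does the following:
\begin{itemize}
\item prepares $\Phi_{\ol{\mathtt Q}\mathtt Q}$ from $\ket{0}$ by a qudit Fourier/CNOT-type circuit;
\item runs the dilation of $\cI^n$ on $\mathtt{LR}$;
\item applies the dilations of $\cU^x$ and $\cV^y$;
\item designates $\ol{\mathtt Q}\mathtt A_1\mathtt A_2$ as its output register. The partition into $\mathtt A_1,\mathtt A_2$ is read off the generated descriptions, which depend only on $x$ and $y$ respectively, so $f(x,y)$ is never computed.
\end{itemize}
$C_1$ runs two independent copies of this preparation. It outputs $\ol{\mathtt Q}$ from the first copy and $\mathtt A$ from the second, producing exactly $\rho_{\ol{\mathtt Q}}\otimes\rho_{\mathtt A}$. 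Alternatively, it can output $\ol{\mathtt Q}$ of a fresh EPR pair, since $\rho_{\ol{\mathtt Q}}=I/d$. Both circuits have size polynomial in $n+q+\log d+L$, and the whole reduction runs in time $p_0(n+q+\log d+L+\tau)$ for a universal polynomial $p_0$.

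\emph{Gap analysis.*} The two cases are as follows.
\begin{itemize}
\item If $f(x,y)=1$, \cref{thm:soundness product state X} gives $F_0\ge 1-4\eps$. Then \cref{lem:fvg} gives $\Delta(\sigma_0,\sigma_1)\le\sqrt{1-(1-4\eps)^2}\le\sqrt{8\eps}$.
\item If $f(x,y)=0$, I apply \cref{thm:completeness product state} with the roles of Alice and Bob swapped, using Alice's decoder $\cU^{xy0}_{\mathtt A\to\mathtt Q_0}$ in place of Bob's. That proof uses only that the decoder acts locally on the intended recipient's register and that $\rho_{\ol{\mathtt Q}}=I/d$, so it transfers verbatim. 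It gives $F_0\le 1/d+\sqrt{2\eps}\le 1/2+\sqrt{2\eps}$. Then \cref{lem:fvg} gives $\Delta\ge 1-F_0\ge 1/2-\sqrt{2\eps}$.
\end{itemize}
I fix $\eps_0$ (for instance $\eps_0=10^{-3}$) and set $\alpha\defeqq\sqrt{8\eps_0}$ and $\beta\defeqq 1/2-\sqrt{2\eps_0}$. These are universal constants, and the bounds above are monotone in $\eps\le\eps_0$. A numerical check gives $\alpha\approx0.089<\beta^2\approx0.207$. So $x\circ y\in L_f$ maps to a no-instance of $\QSD_{\alpha,\beta}$, which is a yes-instance of $\coQSD_{\alpha,\beta}$, and conversely.

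The main obstacle I expect is not the gap but the accounting in the construction. The generated "descriptions of quantum channels" must be turned uniformly into circuits on the all-zero state with designated output registers, in time polynomial in $L$ and $\log d$. This includes dilating general channels and correctly identifying which output wires form $\mathtt A_1$ and $\mathtt A_2$. I would handle this by fixing, as part of the model, a standard encoding of channels as circuits with ancillas and traced-out wires. I would then check that composing these circuits and relabeling registers is a polynomial-time syntactic operation.
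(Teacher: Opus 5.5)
Your proposal is correct and matches the paper's proof essentially step for step. You build the same two circuits: $C_0$ runs one copy of the preparation and outputs $\ol{\mathtt Q}\mathtt A$, and $C_1$ runs two copies and outputs $\ol{\mathtt Q}$ from one and $\mathtt A$ from the other. You use the same fidelity gap, convert it to a trace-distance gap via Fuchs--van de Graaf, and choose the same constants up to your harmless simplification $\alpha=\sqrt{8\eps_0}\geq\sqrt{1-(1-4\eps_0)^2}$. Your explicit use of the Alice--Bob mirror of \cref{thm:completeness product state} for the $f(x,y)=0$ case (which the paper invokes only implicitly) and your care about dilating the generated channels into circuits on the all-zero state are clarifications, not departures.
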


\begin{proof}
On input $x\circ y$, where $x,y\in\powern$, the deterministic reduction
first generates the resource preparation and Alice's and Bob's
pre-communication channels,
\[
    \cI^n\gets\genresource(1^n),
    \qquad
    \cU^x\gets\genalice(1^n,x),
    \qquad
    \cV^y\gets\genbob(1^n,y),
\]
and then constructs a circuit preparing the mid-protocol state
\[
    \rhoinit_{\tt{\Qb QLR}}
    =
    \Phi_{\tt{\Qb Q}}\otimes
    \cI^n_{\tt{LR}\to\tt{LR}}\ps{\ketbraa{0^{2q}}_{\tt{LR}}},
\]
\[
    (\rhopre)_{\tt{\Qb AB}}
    =
    \ps{\operatorname{id}_{\tt{\Qb}}\otimes
    \cU^x_{\tt{QL}\to\tt{A_1B_2}}\otimes
    \cV^y_{\tt{R}\to\tt{B_1A_2}}}\ps{\rhoinit}.
\]
By \cref{sec:setup}, the generators run in time at most $\tau$,
the generated channels have total description length at most $L$,
and each adversarial side uses at most $q$ qudits. Additionally, the reduction needs to prepare the $d$-dimensional EPR pair, which takes $\log d$ time. Thus, for a universal polynomial $p_0$, the reduction runs in time at most
\[
    T_{\mathrm{reduction}}(n)
    =
    p_0\left(n+ q+\log d+ L+ \tau \right).
\]
We can therefore deterministically generate
descriptions of two state-preparation circuits $C_0^{x,y}$ and
$C_1^{x,y}$ as follows:
\begin{enumerate}
    \item The circuit $C_0^{x,y}$ prepares $(\rhopre)_{\tt{\Qb AB}}$ once, discards $\tt{B}$,
    and outputs
    \[
        \sigma^{xy}_0
        =
        \rho^{xy}_{\tt{\Qb A}}.
    \]

    \item The circuit $C_1^{x,y}$ runs two independent copies of
    the same preparation, keeps $\tt{\Qb}$ from the first copy and $\tt{A}$ from the second
    copy, and outputs
    \[
        \sigma^{xy}_1
        =
        \rho^{xy}_{\tt{\Qb}}\otimes\rho^{xy}_{\tt{A}}.
    \]
\end{enumerate}
The two circuit descriptions have total length
$O(T_{\mathrm{reduction}}(n))$ and can be generated deterministically in
time $O(T_{\mathrm{reduction}}(n))$. Enlarging $p_0$ by a universal
constant factor, if necessary, we may bound both quantities by
$T_{\mathrm{reduction}}(n)$.

By the routing fidelity bounds in \cref{sec:fidelity bounds},
using $1/d\leq1/2$,
\begin{align*}
    f(x,y)=1
    &\quad\Longrightarrow\quad
    F\left(\sigma^{xy}_0,\sigma^{xy}_1\right)
        \geq1-4\varepsilon,\\
    f(x,y)=0
    &\quad\Longrightarrow\quad
    F\left(\sigma^{xy}_0,\sigma^{xy}_1\right)
        \leq\frac12+\sqrt{2\varepsilon}.
\end{align*}
Then by Fuchs--van de Graaf inequalities (\cref{lem:fvg}) give
\begin{align*}
    f(x,y)=1
    &\quad\Longrightarrow\quad
    \Delta\left(\sigma^{xy}_0,\sigma^{xy}_1\right)
        \leq
        \sqrt{1-(1-4\varepsilon)^2},\\
    f(x,y)=0
    &\quad\Longrightarrow\quad
    \Delta\left(\sigma^{xy}_0,\sigma^{xy}_1\right)
        \geq
        \frac12-\sqrt{2\varepsilon}.
\end{align*}
We may fix a universal constant $\varepsilon_0>0$ such that
\[
    \alpha
    :=
    \sqrt{1-(1-4\varepsilon_0)^2}
    <
    \left(
        \frac12-\sqrt{2\varepsilon_0}
    \right)^2
    =:
    \beta^2.
\]
Thus, for every $\varepsilon\leq\varepsilon_0$,
\begin{align*}
    f(x,y)=1
    &\quad\Longrightarrow\quad
    \Delta\left(\sigma^{xy}_0,\sigma^{xy}_1\right)
        \leq\alpha,\\
    f(x,y)=0
    &\quad\Longrightarrow\quad
    \Delta\left(\sigma^{xy}_0,\sigma^{xy}_1\right)
        \geq\beta.
\end{align*}
Consequently, the deterministic map $(x,y) \longmapsto \left(1^n,C_0^{x,y},C_1^{x,y}\right)$
maps yes-instances of $L_f$ to yes-instances of
$\coQSD_{\alpha,\beta}$ and no-instances of $L_f$ to no-instances.
\end{proof}

\begin{proof}[Proof of \cref{thm:routing-to-qszk}]
We specify a $\QSZK(T)$ verifier $V$ for $L_f$. Given input $(x,y)$, compute $(C_0,C_1) = R(x,y)$ as given by \cref{thm:routing-to-qsd}. Then run the $\QSZK$ verifier for $\coQSD$ given by \cref{thm:qsd-resource-in-qszkt}. The total running time of $V$ is at most polynomial in $T_\mathrm{reduction}$; we can therefore choose a polynomial $p$ such that its running time is at most $T_\mathrm{attack}$.
\end{proof}

\section{Security against Clifford+$T$ Adversaries}\label{sec:cliffordT}

\subsection{Main Results}

In this section, we focus on Clifford+$T$ adversaries.
\begin{definition}
    A Clifford+$T$ circuit $W$ is an $(L,t)$-Clifford+$T$ circuit if it contains at most $L$ gates in total, of which at most $t$ are Magic gates.
\end{definition}
\begin{definition}
Let $q,L,t,\tau\colon\N\to\N$.
A uniform $(q,L,\tau)$-bounded $f$-routing strategy is a $(L,t)$-Clifford+$T$ strategy if the resource generated channels $\cI^n, \cU^x, \cV^y$ are all $(L(n),t(n))$-Clifford+$T$ circuits, for every $x,y\in\powern$.
\end{definition}

For the remainder of this section, fix an arbitrary input length $n$, and write $d=d(n)$, $q=q(n)$, $L=L(n)$, $t=t(n)$, and $\tau=\tau(n)$.

Fix two $(L,t)$-Clifford+$T$ circuits $W_\rho$ and $W_\sigma$ on registers $\mathtt{AE}_\rho$ and $\mathtt{AE}_\sigma$ comprising $q_\rho$ and $q_\sigma$ qubits, respectively, and write
\[
\ket{\rho}_{\mathtt{AE}_\rho}\defeqq W_\rho\ket{0^{q_\rho}},
\qquad
\ket{\sigma}_{\mathtt{AE}_\sigma}\defeqq W_\sigma\ket{0^{q_\sigma}},
\]
\[
\rho_{\mathtt{A}}\defeqq\tr_{\mathtt{E}_\rho}\ketbraa{\rho},
\qquad
\sigma_{\mathtt{A}}\defeqq\tr_{\mathtt{E}_\sigma}\ketbraa{\sigma}.
\]
The following claim bounds the cost of preparing a product of marginals of states, each of which is prepared by a Clifford+$T$ circuit.
\begin{claim}\label{lem:cliffordT-routing-preparations}
If $\rho_{\tt{\Qb A}}$ is a reduced state of a $(L,t)$-Clifford+$T$ circuit on $q$ qubits, then $\rho_{\tt{\Qb}}\otimes\rho_{\tt{A}}$ is a reduced state of a $(2L,2t)$-Clifford+$T$ circuit on $2q$ qubits.
\end{claim}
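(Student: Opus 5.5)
The plan is to run two independent copies of the preparation circuit and keep complementary subsystems. Let $W$ be the $(L,t)$-Clifford+$T$ circuit on $q$ qubits. Its output $\ket{\psi}=W\ket{0^q}$ lives on registers $\ol{\mathtt Q}\mathtt A\mathtt E$, and $\rho_{\ol{\mathtt Q}\mathtt A}=\tr_{\mathtt E}\ketbraa{\psi}$ (if $W$ is a channel described with ancillas and discards, we absorb these into $\mathtt E$). Define $W' \defeqq W\otimes W$ acting on $2q$ qubits, split into two disjoint blocks of $q$ qubits each, with output $\ket\psi_{\ol{\mathtt Q}^{(1)}\mathtt A^{(1)}\mathtt E^{(1)}}\otimes\ket\psi_{\ol{\mathtt Q}^{(2)}\mathtt A^{(2)}\mathtt E^{(2)}}$. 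As a gate sequence, $W'$ is the concatenation of the gates of the two copies, each copy acting on its own block. It therefore has at most $2L$ gates, of which at most $2t$ are magic gates, so it is a $(2L,2t)$-Clifford+$T$ circuit on $2q$ qubits.

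Next we designate the output registers as $\ol{\mathtt Q}^{(1)}$ and $\mathtt A^{(2)}$, and trace out $\mathtt A^{(1)}\mathtt E^{(1)}\ol{\mathtt Q}^{(2)}\mathtt E^{(2)}$. Since the global state is a tensor product across the two blocks, the partial trace factorizes:
\[
\tr_{\mathtt A^{(1)}\mathtt E^{(1)}}\ketbraa{\psi}\otimes\tr_{\ol{\mathtt Q}^{(2)}\mathtt E^{(2)}}\ketbraa{\psi}=\rho_{\ol{\mathtt Q}}\otimes\rho_{\mathtt A}.
\]
This is exactly the construction of $C_1^{x,y}$ in the proof of \cref{thm:routing-to-qsd}, specialized to the Clifford+$T$ gate set.

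I expect no real obstacle. The only point needing care is bookkeeping: the two copies must act on disjoint qubits (no gates are shared or reused), and the counts of total gates and of magic gates are simply additive. The qubit count is exactly $2q$ because each copy uses the same $q$ qubits as the original circuit.
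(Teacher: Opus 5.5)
Your proposal is correct and follows essentially the same approach as the paper: run $W\otimes W$ on two disjoint blocks of $q$ qubits, keep $\ol{\mathtt Q}$ from the first copy and $\mathtt A$ from the second, and note that the qubit, gate, and magic-gate counts simply add up. The factorization of the partial trace across the two independent blocks is exactly the step the paper uses.
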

\begin{proof}
Let $W$ be the given circuit and, abusing notation, write $\ket{\rho}_{\tt{\Qb AE}}=W\ket{0^q}$ for a purification of $\rho_{\tt{\Qb A}}$. The circuit $W\otimes W$ prepares $\ket{\rho}_{\tt{\Qb AE}}\otimes\ket{\rho}_{\tt{\Qb' A' E'}}$, and tracing out all registers except $\tt{\Qb A'}$ gives $\rho_{\tt{\Qb}}\otimes\rho_{\tt{A'}}$, yielding the desired state. The circuit acts on $2q$ qubits, has size at most $2L$, and contains at most $2t$ magic gates.
\end{proof}

Our main result approximates the fidelity of reduced states prepared by Clifford+$T$ circuits:

\begin{theorem}\label{thm:cliffordT-state-fidelity}
For every $\delta\in(0,1)$, given $W_\rho$ and $W_\sigma$, a deterministic classical algorithm outputs $\wt{F}$ satisfying
\[
\abs{\wt{F}-F(\rho_{\mathtt{A}},\sigma_{\mathtt{A}})}\leq\delta.
\]
The algorithm runs in deterministic time $\poly(2^{t}, L, q_\rho+q_\sigma,\log(1/\delta))$.
\end{theorem}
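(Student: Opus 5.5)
The plan follows the route sketched in the proof overview. First, reduce the fidelity to a trace norm. With $\ket\rho_{\mathtt{AE}_\rho}$ and $\ket\sigma_{\mathtt{AE}_\sigma}$ as above, Uhlmann's theorem gives
\[
F(\rho_{\mathtt A},\sigma_{\mathtt A})=\norm{M}_1,\qquad M\defeqq\tr_{\mathtt A}\ps{\ketbra{\sigma}{\rho}}\colon \mathcal H_{\mathtt E_\rho}\to\mathcal H_{\mathtt E_\sigma}.
\]
We then encode $M$ as the bipartite vector $\ket\chi_{\mathtt E_\sigma\mathtt E_\rho}=\sum_{a,b}\bra b M\ket a\ket b\ket a$. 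Its Schmidt coefficients across $\mathtt E_\sigma:\mathtt E_\rho$ are exactly the singular values of $M$. Up to a normalization factor $2^{|\mathtt A|}$, $\ket\chi$ equals $(\bra{\Phi^+}^{\otimes|\mathtt A|}_{\mathtt A\mathtt A'}\otimes I)(\ket\sigma\otimes\overline{\ket\rho})$, where the bar denotes the complex conjugate. Since $\overline{W_\rho}$ is again a Clifford+$T$ circuit (with $T$ replaced by $T^\dagger$), $\ket\chi$ is obtained from $W_\sigma\otimes\overline{W_\rho}$ applied to $\ket{0}$ followed by a Bell-basis postselection on $\mathtt{AA'}$.

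Second, build an $(m_\chi,\nu_\chi)$-stabilizer decomposition of $\ket\chi$, in the spirit of \cref{lem:bell-contraction-data}. Sum-over-Cliffords expansions (\cref{lem:cliffordT-nullity}) give $\ket\sigma=\sum_{j\le 2^t}\alpha_j\ket{s_j}$, and similarly for $\ket\rho$. Each Bell projection of a product of stabilizer states is again a stabilizer state, up to a computable amplitude, by the standard tableau update for Pauli measurements. This yields $m_\chi\le 2^{2t}$ terms, each with an explicit tableau and a phase/norm computable in time $\poly(q_\rho+q_\sigma,L)$. For nullity, we track stabilizers through the circuits: a Clifford gate conjugates the generators, and a $T$ gate drops at most one generator after a Gaussian elimination step. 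This gives $q_\sigma+q_\rho-2t$ independent Paulis fixing $\ket\sigma\otimes\overline{\ket\rho}$. The postselection onto $\Phi^+$ on $\mathtt{AA'}$ corresponds to measuring commuting $XX$ and $ZZ$ Paulis. Combining these with the stabilizers by symplectic elimination, and keeping those that commute with the measured operators, leaves at least $|\mathtt E_\sigma|+|\mathtt E_\rho|-2t$ independent stabilizers of $\ket\chi$. So $\nu_\chi\le 2t$.

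Third, and this is the step I expect to be the main obstacle, apply \cref{lem:low-nullity-form} to find \emph{local} Clifford circuits $D_{\mathtt E_\sigma}\otimes D_{\mathtt E_\rho}$ that bring $\ket\chi$ to the form in \cref{eq:overview-chi-normal-form}, namely $\ket0^{\otimes r}\otimes\ket{\Phi^+}^{\otimes p}\otimes\ket\gamma$, with $\ket\gamma$ on at most $2\nu_\chi\le 4t$ qubits. The difficulty is that the stabilizer group $G$ of $\ket\chi$ is not a full stabilizer group, so the standard bipartite normal form for stabilizer states does not apply directly.

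My first attempt is to use the local-subgroup decomposition. Let $G_\sigma$ and $G_\rho$ be the elements of $G$ supported on each side. Perform local symplectic Gram–Schmidt on the projections of $G$ to each side. Generators of $G_\sigma$ and $G_\rho$ become single-qubit $Z$'s and give the $\ket0$ factors. Generator pairs whose projections anticommute locally but commute jointly give EPR pairs. Counting via the symplectic rank shows that the qubits not fixed this way number at most $2\nu_\chi$. Everything is computed by Gaussian elimination over $\mathbb F_2$ in $\poly(q)$ time.

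Finally, apply $D_{\mathtt E_\sigma}\otimes D_{\mathtt E_\rho}$ to each of the $m_\chi$ stabilizer terms, which remain stabilizer states with tracked tableaux and phases. We then read off the $2^{O(t)}$ amplitudes of $\ket\gamma$ by computing overlaps of stabilizer states with computational basis vectors on the fixed factors, at a cost of $\poly(q)$ per overlap. This gives a coefficient matrix of $\ket\gamma$ of size $2^{O(t)}$, with entries computed to $O(\log(2^t/\delta))$ bits.

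Since each $\ket{\Phi^+}$ factor contributes a known multiplicative constant to the trace norm, the target is
\[
F=\norm{M}_1=c\cdot 2^{p/2}\,\norm{\Gamma}_1,
\]
where $c$ is the normalization factor and $\Gamma$ is the coefficient matrix of $\ket\gamma$. We apply \cref{thm:svd} to $\Gamma$ with accuracy rescaled by the known prefactor. To ensure this rescaling does not blow up the required bit precision, we first normalize $\ket\gamma$ and use $F\le1$.

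The total running time is $\poly(2^{t},L,q_\rho+q_\sigma,\log(1/\delta))$, since every step is either polynomial in $q$ and $L$ or polynomial in $m_\chi\le 2^{2t}$ and the $2^{O(t)}$ dimension of $\ket\gamma$.
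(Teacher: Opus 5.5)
Your pipeline is the paper's: $F=\norm{M}_1$ via purifications, the vectorization $\ket{\chi}$ as a Bell contraction of $\ket{\sigma}\otimes\overline{\ket{\rho}}$ with $m_\chi\le 2^{2t}$ and $\nu_\chi\le 2t$, a local Clifford normal form $\ket{0}^{\otimes r}\otimes\ket{\Phi^+}^{\otimes p}\otimes\ket{\gamma}$, exact stabilizer computation of the amplitudes of $\ket{\gamma}$, and an SVD of its $2^{O(t)}$-size coefficient matrix. Citing \cref{lem:low-nullity-form}, as you do, makes the argument complete. The only real divergence is your own proof of that normal form.

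The paper purifies the code space of $P_1,\dots,P_{q-\nu}$ with a $\nu$-qubit reference $\mathtt R$. It then applies the tripartite GHZ/Bell/local decomposition (\cref{thm:tripartite-stabilizer}) and reads off the $2\nu$ bound from $\mathtt R$ being maximally mixed: each $\mathtt R$-qubit lies in an $\mathtt{AR}$ or $\mathtt{BR}$ Bell pair or a GHZ triple, so it touches at most two qubits of $\mathtt{AB}$. Your bipartite symplectic reduction directly on the partial group $G$ avoids that theorem and needs only $\mathbb{F}_2$ linear algebra. That is a reasonable trade, but as sketched your count does not close.

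Besides local generators and hyperbolic (EPR) pairs, $G$ can contain correlated generators that commute with all of $G$ on each side separately. An example is $Z\otimes Z$ for $\alpha\ket{00}+\beta\ket{11}$ with generic $\alpha,\beta$: here $\nu=1$, and there is neither a $\ket{0}$ factor nor an EPR factor. Define:
\begin{itemize}
\item $s_\sigma,s_\rho$: the ranks of $G_\sigma,G_\rho$;
\item $2p$: the rank of the induced symplectic form on $G/(G_\sigma G_\rho)$;
\item $\ell$: the dimension of the radical of that form.
\end{itemize}
Then $q_\chi-\nu_\chi=s_\sigma+s_\rho+2p+\ell$, so the unfixed register has exactly $\nu_\chi+\ell$ qubits. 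Your ``symplectic rank'' count gives $2\nu_\chi$ only once you also prove $\ell\le\nu_\chi$. That inequality does hold. The radical of the projection of $G$ onto $\mathtt E_\sigma$ has dimension $s_\sigma+\ell$, is isotropic, and is orthogonal to $p$ hyperbolic pairs in $\mathbb{F}_2^{2\abs{\mathtt E_\sigma}}$, so $s_\sigma+\ell+p\le\abs{\mathtt E_\sigma}$. Adding the analogous bound for $\mathtt E_\rho$ gives $2\ell\le\nu_\chi+\ell$. These $\ell$ generators are exactly the paper's GHZ factors, $\ell=m_{\mathtt{ABR}}$.

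Two smaller points. First, choose the hyperbolic pairs once in $G/(G_\sigma G_\rho)$ and transport them to both sides; the two local symplectic forms agree on $G$ because $G$ is abelian. Independent Gram--Schmidt on each side would not align the EPR pairs. Second, $\sum_u\bra{u}\bra{u}=2^{\abs{\mathtt A}/2}\bra{\Phi^+}^{\otimes\abs{\mathtt A}}$, so your prefactor should be $2^{\abs{\mathtt A}/2}$, not $2^{\abs{\mathtt A}}$. If you absorb it into the decomposition coefficients, as the paper does, then $c=1$ and $F=2^{p/2}\norm{\Gamma}_1$.
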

The proof is developed in \cref{sec:cliffordT-residual}.
Combining \cref{lem:cliffordT-routing-preparations,thm:cliffordT-state-fidelity,thm:universal-fidelity-gap} gives the following corollary.

\begin{corollary}\label{thm:cliffordT-security}
There are universal constants $c_0$ and $\eps_0>0$ such that the following holds. Suppose $d$ is a power of $2$.\footnote{The restriction is for simplicity.} Let $f=\set{f_n\colon\powern\times\powern\to\power{}}$ be a family of functions.

If there exists a uniform $(q,L,\tau)$-bounded $(L,t)$-Clifford+$T$ $f$-routing strategy that is $(1-\eps)$-correct on every input at every input length for some $0\leq\eps\leq\eps_0$, then
\[
    f\in\DTIME\ps{\bigl(n+\tau+ 2^t L d q\bigr)^{c_0}}.
\]
\end{corollary}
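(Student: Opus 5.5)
The plan is to give a deterministic algorithm for $L_f$ that, on input $x\circ y$, builds Clifford+$T$ circuits preparing purifications of the two states $\rho^{xy}_{\ol{\mathtt Q}\mathtt A}$ and $\rho^{xy}_{\ol{\mathtt Q}}\otimes\rho^{xy}_{\mathtt A}$. It then estimates their fidelity with \cref{thm:cliffordT-state-fidelity} and thresholds, using the gap of \cref{thm:universal-fidelity-gap}.

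\textbf{Step 1: build the mid-protocol circuit.} First run $\genresource(1^n)$, $\genalice(1^n,x)$ and $\genbob(1^n,y)$ in time $O(\tau)$ to obtain $\cI^n,\cU^x,\cV^y$. Since $d$ is a power of $2$, each qudit register is $\log d$ qubits. The $d$-dimensional EPR pair $\Phi_{\ol{\mathtt Q}\mathtt Q}$ is then $\log d$ qubit EPR pairs, preparable with $O(\log d)$ Hadamard and CNOT gates and no magic gates.

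Composing gives a single Clifford+$T$ circuit $W$ whose reduced state on $\ol{\mathtt Q}\mathtt A$ is $\rho^{xy}_{\ol{\mathtt Q}\mathtt A}$. Its parameters are:
\begin{itemize}
\item size $O(L+\log d)$;
\item at most $3t$ magic gates;
\item $O(q\log d)$ qubits.
\end{itemize}
Here I treat the generated channels as unitary circuits acting on ancillas initialized to $\ket 0$. Any non-unitary channel should be purified inside the Clifford+$T$ model, which I take to be implicit in the definition. The register $\mathtt A=\mathtt A_1\mathtt A_2$ is read off from the partition fixed by $\cU^x,\cV^y$; all remaining registers form $\mathtt E_\rho$.

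\textbf{Step 2: build the product-of-marginals circuit.} By \cref{lem:cliffordT-routing-preparations}, $W\otimes W$ prepares a purification of $\rho_{\ol{\mathtt Q}}\otimes\rho_{\mathtt A}$. It has at most $6t$ magic gates, size $O(L+\log d)$, and $O(q\log d)$ qubits. Set $W_\rho=W$ and $W_\sigma=W\otimes W$, with output register $\ol{\mathtt Q}\mathtt A$ in both cases.

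\textbf{Step 3: estimate and threshold.} Apply \cref{thm:cliffordT-state-fidelity} with accuracy $\delta$, a small universal constant, to get $\wt F$ with $|\wt F-F_0|\le\delta$. The running time is $\poly(2^{6t},L,q\log d)$.

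By \cref{thm:universal-fidelity-gap}, using that the $f=0$ case is symmetric (as stated in \cref{thm:main-fidelity-gap}) and $1/d\le 1/2$:
\begin{itemize}
\item if $f(x,y)=1$, then $F_0\ge 1-4\eps$;
\item if $f(x,y)=0$, then $F_0\le \frac12+\sqrt{2\eps}$.
\end{itemize}
Choose $\eps_0$ so that $1-4\eps_0-\frac12-\sqrt{2\eps_0}>2\delta$, for example $\delta=1/10$ with $\eps_0$ small. The algorithm outputs $1$ iff $\wt F\ge \frac34$, which is correct whenever $\eps\le\eps_0$.

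\textbf{Step 4: total time.} The total is $O(\tau)+\poly(2^t,L,q,\log d)$, since $2^{6t}=(2^t)^6$ and $\log d\le d$. This is bounded by $(n+\tau+2^tLdq)^{c_0}$ for a universal $c_0$. Reading the input and writing the circuits accounts for the $n$ term.

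\textbf{Main obstacle.} The nontrivial content lies entirely in \cref{thm:cliffordT-state-fidelity}, which is assumed here. Within this corollary, the delicate point is Step 1: making sure the composed circuit is a pure Clifford+$T$ circuit on $\ket{0\cdots0}$ whose marginal is exactly the mid-protocol state. This means verifying that register relabelling (the $\mathtt A_2,\mathtt B_2$ swap) costs no gates and that the EPR preparation uses no magic gates. If the generated channels involve measurements or discards, I would first replace them by their Stinespring dilations within Clifford+$T$, at constant overhead in size and no extra magic gates.
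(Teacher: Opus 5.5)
Your proposal is correct and follows the paper's own proof step for step. It prepares the challenge EPR pair with $O(\log d)$ Clifford gates and composes it with $\cI^n,\cU^x,\cV^y$ into a single circuit of size $O(L+\log d)$ on $O(q\log d)$ qubits with at most $3t$ magic gates. It then doubles this circuit via \cref{lem:cliffordT-routing-preparations}, estimates $F_0$ with \cref{thm:cliffordT-state-fidelity}, and thresholds using the constant gap of \cref{thm:universal-fidelity-gap}, with the $f=0$ case obtained by the Alice/Bob symmetry. The only difference is that you make explicit the accuracy $\delta$, the constant $\eps_0$, and the threshold $\tfrac34$, which the paper leaves implicit.
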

\begin{proof}
Since $d$ is a power of two, the challenge EPR pair is prepared using $O(\log d)$ Clifford gates and no magic gates. Composing this preparation with the three adversarial circuits gives a mid-protocol purification on $O(q\log d)$ qubits of circuit size $O(L+\log d)$ with at most $3t$ magic gates. Applying \cref{lem:cliffordT-routing-preparations,thm:cliffordT-state-fidelity} and the constant fidelity gap from \cref{thm:universal-fidelity-gap} gives the claimed running time, including the strategy-generation cost $\tau$.
\end{proof}

\subsection{Preliminaries}\label{sec:cliffordT-overview}
Recall that the \textbf{stabilizer group} of a nonzero pure vector (a possibly unnormalized pure state) $\ket{\psi}$ on $q$ qubits is
\[
\stabgroup(\psi)
\defeq
\sett{P\in \Pauli_q}{P\ket{\psi}=\ket{\psi}}.
\]
The rank of $\stabgroup(\psi)$ is the number of its independent generators. A \textbf{stabilizer state} is a pure state whose stabilizer group has rank $q$, namely, whose stabilizer group is generated by $q$ independent Pauli operators.
The \textbf{stabilizer nullity} \cite{BeverlandEtAl2020} counts the number of independent stabilizers missing from a full set of $q$ generators.

In this section, we also work with ``pure vectors'', namely non-normalized pure states.
\begin{definition}[\cite{BeverlandEtAl2020}]
The stabilizer nullity of a $q$-qubit pure vector $\ket \psi \neq 0$ is defined as
\[
\nu(\psi)
\defeq
q - \rank\bigl(\stabgroup(\psi)\bigr).
\]
\end{definition}

We make use of stabilizer decompositions.
\begin{definition}[{\cite{BravyiEtAl2019}}]
An $m$-stabilizer decomposition of a pure vector $\ket{\psi}$ is a linear combination of $m$ many stabilizer states $\ket{s_1},\ldots,\ket{s_m}$:
\[
\ket{\psi}=\sum_{j=1}^{m}\alpha_j\ket{s_j}.
\]
The decomposition is \emph{explicit}\footnote{For the Clifford+$T$ constructions below, coefficients lie in $\mathbb Q(\sqrt{2},i)$, since $e^{i\pi/4}=(1+i)/\sqrt{2}$. We encode their real and imaginary parts separately as $a+b\sqrt{2}$ and $c+d\sqrt{2}$, with rational $a,b,c,d$, and include their bit lengths in the input description length.} if the coefficients $\alpha_j$ and Clifford circuits $D_j$ preparing each $\ket{s_j}=D_j\ket{0^q}$ are given explicitly.
\end{definition}

We combine the expansion with a set of independent stabilizer generators.

\begin{definition}\label{def:mnu-stabilizer-decomposition}
Fix integers $q$, $m \ge 1$ and $\nu \in [0,q]$.
An $(m,\nu)$-stabilizer decomposition of a $q$-qubit vector $\ket{\psi}$ consists of:
\begin{enumerate}
\item an $m$-stabilizer decomposition $\ket{\psi}=\sum_{j=1}^m\alpha_j\ket{s_j}$;

\item $q-\nu$ independent commuting Hermitian Pauli operators $P_1,\ldots,P_{q-\nu}$ satisfying the following:
\begin{itemize}
    \item $-I\notin\langle P_1,\ldots,P_{q-\nu}\rangle$,
    \item $P_j\ket{\psi}=\ket{\psi}$ for all $j\in[q-\nu]$.
\end{itemize}
\end{enumerate}
The decomposition is \emph{explicit} if its $m$-stabilizer decomposition is explicit and all $P_j$'s are given explicitly.
\end{definition}

We use the following computational tool for partial projections of stabilizer states.

% \oren{ to verify}
% 
\begin{theorem}[See {\cite[Sections~2.2 and~4.1]{BravyiEtAl2019}}]\label{thm:stabilizer-computation}
    There exists a deterministic algorithm that, given two Clifford circuits $A,B$, preparing the stabilizer states $\ket a = A\ket{0^p}$ and $\ket b = B\ket{0^q}$, where $p\leq q$, and given a $p$-qubit subsystem $\mathtt S$ of $\ket b$, outputs a Clifford circuit $D$ and scalar $\alpha \in \C$ such that
    \begin{align*}
        \alpha D \ket {0^{q-p}} = (\bra a_{\mathtt S}\otimes I_{\bar{\mathtt S}})\ket b.
    \end{align*}
    The algorithm runs in deterministic time $\poly(q,p, \abs A, \abs B)$.
\end{theorem}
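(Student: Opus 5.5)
The plan is to reduce the partial overlap to a sequence of single-qubit postselected $Z$-measurements on one stabilizer state, tracked in a stabilizer formalism that also records the global phase.

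\emph{Reduction.} Write $\ket a = A\ket{0^p}$ and note that
\[
(\bra a_{\mathtt S}\otimes I_{\bar{\mathtt S}})\ket b
= (\bra{0^p}_{\mathtt S}\otimes I_{\bar{\mathtt S}})\,\ket{b'},
\qquad
\ket{b'} \defeqq (A^\dagger_{\mathtt S}\otimes I_{\bar{\mathtt S}})B\ket{0^q}.
\]
The vector $\ket{b'}$ is a stabilizer state. It is prepared by the Clifford circuit $B' = (A^\dagger\otimes I)B$, whose size is $\abs A+\abs B$; the circuit $A^\dagger$ is obtained by reversing $A$ and inverting each gate. This leaves one task: project the $p$ qubits of $\mathtt S$ of a given stabilizer state onto $\ket 0$, and describe the result as a scalar times a Clifford-prepared state.

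\emph{Sequential projection.} I would first simulate $B'$ gate by gate. For this I would use the CH-form of \cite[Section~4.1]{BravyiEtAl2019}, which stores a stabilizer state together with its global phase. A plain Aaronson--Gottesman tableau would suffice if phases were recovered separately, as described below. I then apply the projectors $\ketbraa{0}_k$ for $k\in\mathtt S$ one at a time. At each step there are two cases.
\begin{enumerate}
\item If $\pm Z_k$ lies in the current stabilizer group (checkable by Gaussian elimination over $\mathbb F_2$), the projection is deterministic. A sign of $+$ leaves the state unchanged. A sign of $-$ gives the zero vector, in which case we output $\alpha=0$ and any $D$.
\item Otherwise, the projection multiplies the norm by $2^{-1/2}$. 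The new state is again a stabilizer state, with $Z_k$ replacing an anticommuting generator, exactly as in the standard measurement update.
\end{enumerate}
After $p$ steps the state is $\ket{0^p}_{\mathtt S}\otimes\ket c$, where $\ket c$ is a stabilizer state on $\bar{\mathtt S}$, and its norm is $2^{-r/2}$ with $r$ the number of random steps. I would obtain the stabilizer tableau of $\ket c$ by row-reducing the generators so that $p$ of them are $Z_k$, $k\in\mathtt S$, and restricting the others to $\bar{\mathtt S}$. A Clifford circuit $D$ with $D\ket{0^{q-p}}\propto\ket c$ then follows from standard tableau synthesis in $\poly(q)$ time.

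\emph{Phase and main obstacle.} The delicate step is fixing the exact scalar $\alpha$ and not merely $\abs\alpha=2^{-r/2}$, because tableaux determine states only up to a global phase. My first approach is to keep the CH-form throughout: its explicit phase $\omega$ is updated under Clifford gates and postselected Pauli measurements in $\poly(q)$ time, as done in \cite{BravyiEtAl2019}. As a fallback I would recover the phase from amplitudes. Pick a basis string $z$ in the support of $\ket c$, read off from the tableau. Then compute both $\bra{0^p z}\ket{b'}$ and $\bra{z}D\ket{0^{q-p}}$ using the closed-form amplitude formula for stabilizer states. That formula writes each amplitude as $2^{-k/2}$ times a power of $i$, computed from the affine-subspace and quadratic-form description obtained by Gaussian elimination. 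Setting
\[
\alpha=\frac{\bra{0^p z}\ket{b'}}{\bra{z}D\ket{0^{q-p}}}
\]
then gives the exact scalar, and it lies in $\mathbb Q(\sqrt2,i)$. Every step is polynomial in $q$, $p$, $\abs A$ and $\abs B$, which gives the claimed deterministic running time.
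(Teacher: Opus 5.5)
Your main route is correct and is essentially what the paper relies on. The paper gives no proof of its own and simply cites \cite[Sections~2.2 and~4.1]{BravyiEtAl2019}, whose method you reproduce: absorb $A^\dagger$ into $B$, then postselect $Z_k$ for $k\in\mathtt S$ while the CH-form tracks the exact global phase and norm.

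Your fallback, however, does not work as written. Amplitudes read off from a tableau's affine-subspace and quadratic-form description are determined only up to a global phase. For example, the true amplitudes of $HSH\ket{0}$ are $2^{-1/2}e^{\pm i\pi/4}$, not $2^{-k/2}$ times a power of $i$. So $\braket{0^pz}{b'}$ and $\bra{z}D\ket{0^{q-p}}$ must themselves be computed in a phase-tracking way, for instance by running the CH-form simulator on the circuits $B'$ and $D$, or by a sum-over-paths exponential sum. Once those two amplitudes are computed that way, your ratio formula gives the correct $\alpha$.
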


We make use of the following formula to compute the fidelity.

\begin{theorem}[{\cite[Lemma~3.21]{Watrous18TQI}}]\label{thm:purification-trace-norm}
Let $\ket{\rho}_{\mathtt{AE}_\rho}$ and $\ket{\sigma}_{\mathtt{AE}_\sigma}$ be
purifications of $\rho_{\mathtt{A}}$ and $\sigma_{\mathtt{A}}$. Then
\[
\left\|\sqrt{\rho_{\mathtt{A}}}\sqrt{\sigma_{\mathtt{A}}}\right\|_1
= \left\| \tr_{\mathtt{A}} \left( \ket{\sigma}\!\bra{\rho} \right) \right\|_1.
\]
\end{theorem}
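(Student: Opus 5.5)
We will prove the identity from Uhlmann-type reasoning via Schmidt decompositions, working with the standard representation of purifications. Let $\{\ket{i}\}$ be an orthonormal basis of $\mathtt A$. Every vector on $\mathtt{AE}_\rho$ can be written as $\ket{\rho}=\sum_i \ket{i}_{\mathtt A}\otimes X\ket{i}$ for a unique linear operator $X\colon \mathtt A\to\mathtt E_\rho$; that is, $\ket\rho=\mathrm{vec}(X)$ in the transposed-vectorization convention. Similarly $\ket{\sigma}=\sum_i\ket{i}\otimes Y\ket{i}$ with $Y\colon\mathtt A\to \mathtt E_\sigma$. The first step is to express everything through $X,Y$. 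Tracing out the environment gives $\rho_{\mathtt A}=\sum_{i,j}\ket{i}\bra{j}\,\bra{j}X^\dagger X\ket{i}=(X^\dagger X)^{T}$, and likewise $\sigma_{\mathtt A}=(Y^\dagger Y)^{T}$. Tracing out $\mathtt A$ instead gives
\[
\tr_{\mathtt A}\ps{\ketbra{\sigma}{\rho}}=\sum_i Y\ket{i}\bra{i}X^\dagger = YX^\dagger ,
\]
an operator from $\mathtt E_\rho$ to $\mathtt E_\sigma$.

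The second step is to compare trace norms. Since transposition in a fixed basis preserves singular values, and $\sqrt{M^T}=(\sqrt M)^T$ for positive $M$, we get
\[
\norm{\sqrt{\rho_{\mathtt A}}\sqrt{\sigma_{\mathtt A}}}_1=\norm{\ps{\sqrt{X^\dagger X}}^T\ps{\sqrt{Y^\dagger Y}}^T}_1=\norm{\sqrt{Y^\dagger Y}\sqrt{X^\dagger X}}_1 ,
\]
where the last equality uses $(AB)^T=B^TA^T$ and invariance of $\norm{\cdot}_1$ under transposition.

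The third step uses polar decompositions $X=U_X\sqrt{X^\dagger X}$ and $Y=U_Y\sqrt{Y^\dagger Y}$, with $U_X,U_Y$ partial isometries whose initial spaces contain the supports of $\sqrt{X^\dagger X}$ and $\sqrt{Y^\dagger Y}$ respectively. Then
\[
YX^\dagger=U_Y\sqrt{Y^\dagger Y}\sqrt{X^\dagger X}\,U_X^\dagger .
\]
Multiplying on the left by a partial isometry that acts isometrically on the range of the operator, and on the right by the adjoint of such a map, preserves the nonzero singular values. Hence $\norm{YX^\dagger}_1=\norm{\sqrt{Y^\dagger Y}\sqrt{X^\dagger X}}_1$. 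Combining this with the second step yields the claim.

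The only delicate point is this partial-isometry argument. To keep it clean, I would embed $U_X,U_Y$ into unitaries (or isometries), which is possible after enlarging the environments if needed. This is harmless because padding with zero dimensions changes neither $\tr_{\mathtt A}(\ketbra{\sigma}{\rho})$ nor its trace norm. Unitary invariance of the trace norm then finishes the proof. Everything else is bookkeeping with transposes, so I expect no real obstacle beyond getting the conventions consistent.
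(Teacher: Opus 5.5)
Your proof is correct. The paper gives no argument of its own here and simply cites Watrous's Lemma~3.21, and your route (vectorize both purifications, reduce to comparing $\norm{YX^\dagger}_1$ with $\norm{\sqrt{Y^\dagger Y}\sqrt{X^\dagger X}}_1$, then strip off the polar-decomposition partial isometries) is the standard argument behind that lemma; the embedding/padding step is unnecessary, since for $M=\sqrt{Y^\dagger Y}\sqrt{X^\dagger X}$ one already has $\norm{M}_1=\norm{MU_X^\dagger U_X}_1\le\norm{MU_X^\dagger}_1\le\norm{M}_1$, and similarly $U_Y^\dagger U_Y$ acts as the identity on the range of $MU_X^\dagger$, which handles the left factor.
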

Therefore, defining
\begin{align*}
M &\defeqq \tr_{\mathtt{A}}\ps{\ketbra{\sigma}{\rho}},
\end{align*}
we have
\[
F(\rho_{\mathtt{A}},\sigma_{\mathtt{A}})=\norm M_1.
\]
Thus, it suffices to compute and sum the singular values of $M$. Choose computational bases $\{\ket{a}\}$ of $\mathtt{E}_\rho$ and $\{\ket{b}\}$ of $\mathtt{E}_\sigma$, and define the ``vectorization'' of $M$ as
\begin{align}\label{eq:chi-def}
    \ket{\chi}_{\mathtt{E}_\sigma \mathtt{E}_\rho}
    &\defeqq
    \sum_{a,b}\langle b|M|a\rangle\ket{b}_{\mathtt{E}_\sigma}\ket{a}_{\mathtt{E}_\rho}.
\end{align}
Now observe that the Schmidt coefficients of $\ket{\chi}$, namely $\bra a M \ket a$, are the singular values of $M$, and their sum is the target fidelity.

The algorithm has two parts. First, we give an explicit $(m,\nu)$-stabilizer decomposition for $\ket{\chi}$. Second, we use the decomposition to compute the fidelity.

\subsection{From Clifford+$T$ circuits to $(m,\nu)$-Stabilizer Decompositions}\label{sec:cliffordT-deferred}

Our main lemma in this subsection is the following description of $\ket{\chi}$.

\begin{lemma}\label{lem:bell-contraction-data}
There exists a deterministic algorithm that, given $W_\rho$ and $W_\sigma$, computes an explicit $(m_\chi,\nu_\chi)$-stabilizer decomposition of $\ket{\chi}$ with
\[
m_\chi\leq2^{2t},\qquad \nu_\chi\leq2t.
\]
The algorithm runs in time $\poly(2^{t}, L,q_\rho+q_\sigma)$.
\end{lemma}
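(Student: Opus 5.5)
First I will build $(m,\nu)$-stabilizer decompositions of the two purifications $\ket\rho=W_\rho\ket{0^{q_\rho}}$ and $\ket\sigma=W_\sigma\ket{0^{q_\sigma}}$ separately, by simulating each circuit gate by gate. I maintain an explicit stabilizer expansion $\sum_j\alpha_j D_j\ket{0}$ together with a list of independent commuting Hermitian Paulis $P_1,\dots,P_{q-\nu}$ fixing the current vector. Initially $m=1$, $\nu=0$, and the generators are $Z_1,\dots,Z_q$.

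A Clifford gate $C$ is handled by appending $C$ to every $D_j$ and conjugating each $P_i\mapsto CP_iC^\dagger$. This keeps $m$ and $\nu$ fixed and preserves independence and $-I\notin\langle\cdot\rangle$.

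A magic gate $T=\tfrac{1+e^{i\pi/4}}{2}I+\tfrac{1-e^{i\pi/4}}{2}Z$ on qubit $k$ is handled in two parts.
\begin{itemize}
\item \textbf{Expansion.} Each term splits into two, giving $m\mapsto 2m$ with coefficients in $\mathbb Q(\sqrt2,i)$.
\item \textbf{Generators.} Conjugation by $T$ maps $P_i$ to a Pauli exactly when $P_i$ commutes with $Z_k$. By Gaussian elimination over $\mathbb F_2$ on the symplectic representation, I rewrite the generators so that at most one of them, say $P_1$, anticommutes with $Z_k$. I then drop $P_1$ and keep $TP_iT^\dagger=P_i$ for the rest.
\end{itemize}
Thus $\nu$ increases by at most one per magic gate, matching the lemma the paper cites. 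The rewriting multiplies generators within the group, so it preserves both the stabilizing property and the condition $-I\notin\langle\cdot\rangle$. After $W_\rho$ and $W_\sigma$ we have explicit $(2^t,t)$-decompositions of each, computed in time $\poly(2^t,L,q)$.

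Next I pass to $\ket\chi$. Writing $\ket\sigma=\sum\ket{\sigma_{b}}_{\mathtt A}\ket b$ and $\ket\rho=\sum\ket{\rho_a}\ket a$, we get $\langle b|M|a\rangle=\langle\rho_a|\sigma_b\rangle$. Hence, with $d_A=2^{|\mathtt A|}$,
\[
\ket\chi=\sqrt{d_A}\,\bigl(\bra{\Phi}_{\mathtt A\mathtt A'}\otimes I\bigr)\bigl(\ket\sigma_{\mathtt{A}\mathtt E_\sigma}\otimes\overline{\ket\rho}_{\mathtt A'\mathtt E_\rho}\bigr).
\]
Here $\overline{\ket\rho}$ is the entrywise complex conjugate, and $\Phi$ is the standard EPR state (up to the orientation convention, which only transposes $M$). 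The conjugate $\overline{\ket\rho}$ is prepared by the conjugate circuit, which replaces $T$ by $T^\dagger$ and $S$ by $S^\dagger$ and stays Clifford+$T$. Its decomposition is obtained by conjugating coefficients and stabilizer circuits, and each generator $P$ goes to $\overline P$, which is again a Hermitian Pauli stabilizing $\overline{\ket\rho}$. The tensor product then has an explicit $(2^{2t},2t)$-decomposition on $q_\sigma+q_\rho$ qubits.

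For the expansion of $\ket\chi$, I apply \cref{thm:stabilizer-computation} to each product term $D^\sigma_j\ket0\otimes\overline{D^\rho_k}\ket0$. Here $\ket a=\ket{\Phi}^{\otimes|\mathtt A|}$ is prepared by an $O(|\mathtt A|)$-gate Clifford circuit, and $\mathtt S=\mathtt{AA'}$. This gives $m_\chi\le 2^{2t}$ explicit terms $\alpha D\ket0$.

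The step I expect to be the main obstacle is the generators of $\ket\chi$. The projection is onto the stabilizer state $\Phi$, whose stabilizers are $X_iX'_i$ and $Z_iZ'_i$. Let $G$ be the stabilizer group of the product vector, of rank $q_\sigma+q_\rho+2|\mathtt A|-\ldots$, that is, rank $n_{\mathrm{tot}}-2t$ on $n_{\mathrm{tot}}$ qubits. My plan is to find a generating set of $G$ adapted to the projection by symplectic Gaussian elimination.
\begin{itemize}
\item First I reduce $G$ to elements that commute with every $X_iX'_i$ and $Z_iZ'_i$; this is the centralizer of the EPR stabilizer group intersected with $G$.
\item Each such element equals, up to multiplication by EPR stabilizers, $I_{\mathtt{AA'}}\otimes Q$ times a phase. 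Since the EPR stabilizers act as $+1$ on $\bra\Phi$, the operator $Q$ on $\mathtt E_\sigma\mathtt E_\rho$ stabilizes $\ket\chi$.
\end{itemize}
A rank count should show that passing to the centralizer loses at most $2|\mathtt A|$ generators, and that further dividing by the EPR stabilizer group inside $G$ recovers them. Together these leave at least $q_\sigma+q_\rho-2t$ independent $Q$'s, so $\nu_\chi\le 2t$.

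Two degenerate cases need care.
\begin{itemize}
\item If the phase comes out as $-1$, or a $Q$ becomes $\pm I$, then $\ket\chi=0$. In that case $F=0$, which I handle by checking it directly; this case cannot arise from the fidelity application unless $F=0$ anyway.
\item Independence and $-I\notin\langle\cdot\rangle$ for the $Q$'s I will verify via the symplectic rank.
\end{itemize}
All of these steps are linear algebra over $\mathbb F_2$ on $O(q_\rho+q_\sigma)$-sized matrices, so the total running time is $\poly(2^t,L,q_\rho+q_\sigma)$.
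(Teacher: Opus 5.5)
Your first two stages match the paper, which packages the first as \cref{lem:cliffordT-nullity}: the gate-by-gate decompositions of $\ket\rho$ and $\ket\sigma$, the passage to $\overline{\ket\rho}$, and the term-by-term Bell contraction via \cref{thm:stabilizer-computation}. The generator step is where you take a genuinely different route, and it is correct.

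The paper writes each Bell contraction as a two-qubit Clifford followed by two single-qubit projections onto $\ket0$. It then applies \cref{lem:stabilizers-one-qubit-projection} $2|\mathtt A|$ times, losing at most one generator each. You instead do one global symplectic computation against $E=\langle X_iX_i',Z_iZ_i'\rangle$.

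The rank count you left as ``should show'' is the heart of your argument, so state it precisely. Take $G$ to be the supplied subgroup, of rank at least $q_\sigma+q_\rho-2t$, rather than the full stabilizer group.
\begin{itemize}
\item The pairing $G\times E\to\mathbb F_2$ has rank $2|\mathtt A|-\dim(E\cap G^{\perp})$.
\item Since $G$ is abelian, $G\cap E\subseteq E\cap G^{\perp}$. Hence $\dim(G\cap E^{\perp})\geq\dim G-2|\mathtt A|+\dim(G\cap E)$.
\item Because $E$ is Lagrangian on $\mathtt{AA}'$, every element of $E^{\perp}$ has its $\mathtt{AA}'$-part in $E$. So restriction to $\mathtt E_\sigma\mathtt E_\rho$ has kernel exactly $G\cap E$.
\item This leaves at least $\dim G-2|\mathtt A|$ independent operators $cQ$.
\end{itemize}
So the quotient does not ``recover'' generators. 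Rather, the centralizer step loses less by exactly what the quotient costs.

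Your degenerate-case remark also needs tightening. A sign $c=-1$ with $Q\neq I$ is harmless; just use $-Q$. The case $cQ=+I$ is simply the kernel. Only $cQ=-I$ forces $\ket\chi=0$, and then you may output a trivial decomposition.

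Your version gives a single linear-algebra argument with the sign bookkeeping and the $\ket\chi=0$ case made explicit. The paper's one-qubit lemma simply assumes the projected vector is nonzero. The paper's version is more modular, reducing everything to an elementary single-qubit case analysis.
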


To this end, we first construct $(m,\nu)$-stabilizer decompositions for Clifford+$T$ states. This construction follows from \cite{BravyiEtAl2019,GuOlivieroLeone2025}; we include a proof for completeness.

\begin{lemma}\label{lem:cliffordT-nullity}
There exists a deterministic algorithm that, given a $(L,t)$-Clifford+$T$ circuit $U$ on $q$ qubits, computes an explicit $(m,\nu)$-stabilizer decomposition of $\ket{\psi}=U\ket{0^q}$, where
\[
m\leq2^{t},\qquad \nu\leq\min\set{t,q}.
\]
The algorithm runs in deterministic time $\poly(2^{t},L,q)$.
\end{lemma}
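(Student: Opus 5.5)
We process the circuit $U=G_L\cdots G_1$ gate by gate, maintaining inductively an explicit $(m_k,\nu_k)$-stabilizer decomposition of $\ket{\psi_k}=G_k\cdots G_1\ket{0^q}$. We start from $\ket{\psi_0}=\ket{0^q}$, a single stabilizer state with the identity as its Clifford preparation circuit, so $m_0=1$. Its $q$ generators are $Z_1,\ldots,Z_q$, so $\nu_0=0$. The invariant is $m_k\leq 2^{t_k}$ and $\nu_k\leq t_k$, where $t_k$ is the number of magic gates among $G_1,\ldots,G_k$. Since the nullity is trivially at most $q$, the bound $\nu\leq\min\{t,q\}$ then follows.

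\emph{Clifford gates.} Let $G_k=C$ be Clifford. For the expansion we append $C$ to each preparation circuit, $D_j\mapsto CD_j$, and keep all coefficients, so $m$ is unchanged. For the generators we conjugate, $P_i\mapsto CP_iC^\dagger$; this is computable in time $\poly(q)$ by the standard tableau update. Conjugation by a unitary preserves Hermiticity, commutation, independence and the condition $-I\notin\langle\cdot\rangle$, and it gives $CP_iC^\dagger\, C\ket{\psi}=C\ket\psi$. Hence $\nu$ is unchanged.

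\emph{Magic gates.} Let $G_k=T$ act on qubit $a$.

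For the expansion we write $T=\tfrac{1+e^{i\pi/4}}{2}I+\tfrac{1-e^{i\pi/4}}{2}Z_a$. Each term $\alpha_j\ket{s_j}$ splits into two terms, with preparation circuits $D_j$ and $Z_aD_j$ and coefficients in $\mathbb{Q}(\sqrt2,i)$. This gives $m_k\leq 2m_{k-1}$, and the coefficient bit lengths grow by $O(1)$ per magic gate.

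For the generators we split the current set $\{P_1,\ldots,P_{q-\nu}\}$ according to whether each generator commutes with $Z_a$.
\begin{itemize}
\item Suppose every generator commutes with $Z_a$. Then each $P_i$ commutes with $T=\mathrm{diag}(1,e^{i\pi/4})_a$, because $T$ is a polynomial in $Z_a$. So $P_iT\ket\psi=TP_i\ket\psi=T\ket\psi$, and we keep all generators.
\item Otherwise, fix one generator $P_{i_0}$ that anticommutes with $Z_a$. Replace every other anticommuting $P_i$ by $P_iP_{i_0}$, which commutes with $Z_a$, and then drop $P_{i_0}$. After a phase correction, each product $P_iP_{i_0}$ is Hermitian, since the two factors commute as members of an abelian group. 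It still stabilizes $\ket\psi$ because both factors do. The new set is independent, commuting, and avoids $-I$, since it generates a subgroup of the old group minus one generator. The remaining $q-\nu-1$ generators all commute with $Z_a$, hence with $T$, and so stabilize $T\ket\psi$.
\end{itemize}
In either case $\nu$ grows by at most one. All of these operations are symplectic linear algebra over $\mathbb{F}_2$ with phase bookkeeping, costing $\poly(q)$ per gate.

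\emph{Running time.} Summing over $L$ gates, the expansion has at most $2^t$ terms, each prepared by a Clifford circuit of size at most $L$. Every update touches each term at most once, so the total cost is $\poly(2^t,L,q)$.

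\emph{Expected obstacle.} The main care lies in the magic-gate step for the generators. One must check that the products $P_iP_{i_0}$ stay Hermitian with correct sign, and that $-I$ does not enter the new group. Both follow because the group is abelian and already avoids $-I$, and passing to a subgroup preserves that; the remaining work is bookkeeping. A second point concerns gate sets. If the magic gates include $T^\dagger$ or other diagonal non-Clifford gates, the same argument applies to any gate of the form $\lambda_0I+\lambda_1Z_a$. For a general single-qubit magic gate we would first conjugate it to diagonal form using a fixed Clifford.
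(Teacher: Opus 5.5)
Your proof is correct and follows the paper's argument. Each magic gate is expanded as $\alpha I+\beta Z_a$, which at most doubles the stabilizer expansion, and only the (conjugated) stabilizers commuting with $Z_a$ are kept, which costs at most one independent generator; your gate-by-gate bookkeeping and the replacement $P_i\mapsto P_iP_{i_0}$ are a concrete implementation of the paper's restriction to $\{P\in D\mathcal{S}D^\dagger : [P,Z_i]=0\}$.
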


\begin{proof}
We induct on the number $j\leq t$ of magic gates. For $j=0$, $U$ is Clifford, so its output $U\ket {0^q}$ has an explicit $(1,0)$-stabilizer decomposition. Indeed, the stabilizer decomposition is simply $U\ket{0^q}$ as a stabilizer state; the independent stabilizer group generators are  $UZ_{k}U^\dagger$, for all $k\in[q]$.

Assume the induction hypothesis for $j-1$; we prove the claim for $j$. A final Clifford layer preserves both parameters, so it suffices to consider $U = T_i D V$ where $D$ is a Clifford circuit and $V$ is a Clifford+$T$ circuit with $j-1$ magic gates.  By induction, $\ket v=V\ket{0^q}$ has an explicit $(m,\nu)$-stabilizer decomposition with $m\leq2^{j-1}$ and $\nu\leq\min\set{j-1,q}$.

Writing the stabilizer decomposition as $\ket v=\sum_{k=1}^m a_{k}\ket{s_{k}}$ and expanding the magic gate as $T_i=\alpha I+\beta Z_i$ (where $\alpha,\beta\in \C$ are scalars), we obtain
\begin{align*}
U\ket{0^q}
=T_iD\ket v=\alpha D\ket v+\beta Z_iD\ket v
=\sum_{k=1}^m a_{k}\ps{\alpha D\ket{s_{k}}+\beta Z_iD\ket{s_{k}}}.
\end{align*}
Since $D$ and $Z_iD$ are Clifford, this gives an explicit stabilizer decomposition with $m'\leq2m\leq2^{j}$ terms.

Let $\cS$ be the supplied stabilizer subgroup of $\ket v$, of rank $q-\nu$, and define
\begin{align*}
\cS'&\defeq\sett{P\in D \cS D^\dagger}{[P,Z_i]=0}.
\end{align*}
Now observe that every $P\in \cS'$ stabilizes $D\ket v$. Since $[P,Z_i]=0$ and $T_i=\alpha I+\beta Z_i$, $P$ also commutes with $T_i$. Therefore every $P\in\cS'$ satisfies
\begin{align*}
P U\ket{0^q} &=P T_iD\ket v=T_iP D\ket v=U\ket{0^q}.
\end{align*}
Conjugation by $D$ preserves the rank of $\cS$. To obtain $\cS'$, we restrict $D\cS D^\dagger$ to the elements commuting with $Z_i$, imposing at most one independent linear constraint. Hence,
\begin{align*}
\rank(\cS')&\geq\rank(D\cS D^\dagger)-1=\rank(\cS)-1=q-\nu-1.
\end{align*}
We conclude by verifying the claimed running time. The generator updates use Clifford conjugation and Gaussian elimination and take polynomial time in $L,q$. The expansion has at most $2^{j}$ terms, each specified by a Clifford circuit of size $O(L)$ and a product of at most $j$ coefficients $\alpha,\beta$. These products have exact encodings of bit length $O(j+1)$, and their arithmetic takes polynomial time. Thus the total running time is $\poly(2^{t},L,q)$.
\end{proof}

We can now prove:
\begin{proof}[Proof of \cref{lem:bell-contraction-data}]
We first construct the $m$-stabilizer expansion of $\ket{\chi}$.
Applying \cref{lem:cliffordT-nullity} to $W_\rho$ and $W_\sigma$ computes the stabilizer decompositions
\begin{align*}
\ket{\rho}_{\mathtt{AE}_\rho}
&=\sum_{i=1}^{m_\rho}\alpha_i^\rho\ket{s_i^\rho}_{\mathtt{AE}_\rho},\qquad
\ket{\sigma}_{\mathtt{AE}_\sigma} =\sum_{j=1}^{m_\sigma}\alpha_j^\sigma\ket{s_j^\sigma}_{\mathtt{AE}_\sigma},
\end{align*}
where each $\ket{s_i^\rho}$ and $\ket{s_j^\sigma}$ is a stabilizer state, and $m_\rho,m_\sigma\leq2^{t}$.

Substituting these decompositions into \cref{eq:chi-def} gives
\begin{align*}
\ket{\chi}
&=\sum_{i=1}^{m_\rho}\sum_{j=1}^{m_\sigma}
\underbrace{ \overline{\alpha_i^\rho} \alpha_j^\sigma }_{\alpha_{i,j}}
\underbrace{ \ps{\sum_u\bra{u}_{\mathtt{A}}\bra{u}_{\mathtt{A}'}}
\ps{\ket{s_j^\sigma}_{\mathtt{AE}_\sigma}\otimes\overline{\ket{s_i^\rho}}_{\mathtt{A}'\mathtt{E}_\rho}} }_{\ket{s_{i,j}}} \\
&\defeq \sum_{(i,j) \in [m_{\rho}] \times [m_{\sigma}]} \alpha_{i,j} \ket{s_{i,j}}.
\end{align*}
Let $\abs{\tt A}$ be the number of qubits in $\mathtt{A}$, so $q_\chi=q_\rho+q_\sigma-2\abs{\tt A}$. By \cref{thm:stabilizer-computation}, each (non-normalized) vector $\ket{s_{i,j}}$ is a computable scalar times a normalized stabilizer state. The contraction uses the unnormalized Bell vector, so its scalar includes the factor $2^{\abs{\tt A}/2}$. Absorb each scalar into $\alpha_{i,j}$ and henceforth use $\ket{s_{i,j}}$ for the normalized state. For a zero contraction, set $\alpha_{i,j}=0$ and $\ket{s_{i,j}}=\ket{0^{q_\chi}}$. This gives an explicit expansion with at most $m_\chi\leq m_\rho m_\sigma\leq2^{2t}$ terms.

Next, we compute $q_\chi-\nu_\chi$ independent stabilizers of $\ket{\chi}$. The application of \cref{lem:cliffordT-nullity} above also supplies independent stabilizer generators $P_1,\ldots,P_{q_\rho-\nu_\rho}$ and $Q_1,\ldots,Q_{q_\sigma-\nu_\sigma}$ of $\ket{\rho}$ and $\ket{\sigma}$, respectively, where $\nu_\rho,\nu_\sigma \le t$.
Thus the product state $\ket{\sigma}\otimes\overline{\ket{\rho}}$ is stabilized by
\[
\sett{Q_j\otimes I}{j\in[q_\sigma-\nu_\sigma]}
\cup
\sett{I\otimes\overline{P_i}}{i\in[q_\rho-\nu_\rho]}.
\]
These generators are jointly independent because the two independent sets act on disjoint registers. The product $\ket{\sigma}\otimes\overline{\ket{\rho}}$ therefore has $q_\rho+q_\sigma$ qubits and $r$ supplied independent stabilizers, where
\[
r \ge (q_\rho-\nu_\rho)+(q_\sigma-\nu_\sigma)
\geq q_\rho+q_\sigma-2t.
\]
Recalling that
\[
\ket\chi
=\ps{\sum_u\bra u_{\mathtt A}\bra u_{\mathtt A'}}
\ps{\ket{\sigma}_{\mathtt{AE}_\sigma}\otimes\overline{\ket{\rho}}_{\mathtt{A'E}_\rho}},
\]
observe that $\ket\chi$ is obtained by contracting each of the $\abs{\tt A}$ corresponding pairs of qubits in $\mathtt A,\mathtt A'$ with the unnormalized EPR state. Each Bell contraction can be implemented (up to a scalar) by a two-qubit Clifford unitary followed by two single-qubit projections onto $\ket{0}$.\footnote{Recall that $\ket{\Phi^+}=\mathrm{CNOT}(H\otimes I)\ket{00}$.} Clifford unitaries preserve the number of independent stabilizers, whereas in \cref{lem:stabilizers-one-qubit-projection} we show that each single-qubit projection loses at most one per independent stabilizer. Hence the $\abs{\tt A}$ Bell contractions lose at most $2\abs{\tt A}$ independent stabilizers. The number $q_\chi-\nu_\chi$ of independent generators we obtain therefore satisfies
\[
q_\chi-\nu_\chi\geq r-2\abs{\tt A} \geq (q_\rho+q_\sigma-2t) - 2\abs{\tt A} = q_\chi-2t.
\]
Thus $\nu_\chi\leq2t$, as required. The generators stabilizing $\ket\chi$ can be computed in polynomial time by applying the procedure in \cref{lem:stabilizers-one-qubit-projection} to each Bell contraction, and computing the coefficients $\alpha_{i,j}$ and stabilizer states $\ket{s_{i,j}}$ takes time $\poly(2^{t},L,q_\rho+q_\sigma)$.
\end{proof}

Finally, we prove the following technical lemma.

\begin{lemma}\label{lem:stabilizers-one-qubit-projection}
Let $\ket{\phi}_{\mathtt{aE}}$ be a $q$-qubit vector, where $\mathtt{a}$ is a single-qubit register. Consider the post-selected state $\ket {\phi'}$, where $\tt{a}$ is projected onto $\ket0$, and suppose $\ket{\phi'} \neq 0$. That is,
\[
\ket{\phi'}_{\mathtt{E}}\defeqq(\bra{0}_{\mathtt{a}}\otimes I_{\mathtt{E}})\ket{\phi}_{\mathtt{aE}} \neq 0.
\]
There exists a deterministic algorithm that, given the subgroup of stabilizers $\cS_{\mathrm{in}}\subseteq\stabgroup(\phi)$, outputs a subgroup $\cS_{\mathrm{out}} \subseteq \stabgroup(\phi')$ such that 
\[
\rank(\cS_{\mathrm{out}})\geq\rank(\cS_{\mathrm{in}})-1.
\]
The algorithm runs in deterministic time $\poly(q)$.
\end{lemma}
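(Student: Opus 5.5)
The plan is to work with the binary symplectic representation of Paulis and to split the input generators according to how they act on the qubit $\mathtt a$. Write each generator of $\cS_{\mathrm{in}}$ as $P = P_{\mathtt a}\otimes P_{\mathtt E}$ with $P_{\mathtt a}\in\set{I,X,Y,Z}$ up to sign. Elements acting as $I$ or $Z$ on $\mathtt a$ commute with $\ketbraa{0}_{\mathtt a}\otimes I$. Elements acting as $X$ or $Y$ anticommute with $Z_{\mathtt a}$.

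First, I will run Gaussian elimination over $\mathbb F_2$ on the $\mathtt a$-component of the generators, using only the ``$X$-bit'' of that component. This produces a new generating set of $\cS_{\mathrm{in}}$ (same rank) in which at most one generator $G_0$ has $X$- or $Y$-type action on $\mathtt a$. All others, $G_1,\ldots,G_{k-1}$ with $k=\rank(\cS_{\mathrm{in}})$, act on $\mathtt a$ as $I$ or $\pm Z$. I will discard $G_0$; this accounts for the loss of one. The remaining group $\cS'=\langle G_1,\ldots,G_{k-1}\rangle$ commutes with $Z_{\mathtt a}$ and has rank at least $k-1$.

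Second, for each $G\in\cS'$, write $G = c\,Z_{\mathtt a}^{e}\otimes G_{\mathtt E}$ with $e\in\set{0,1}$ and phase $c$. Since $\bra0 Z^{e} = \bra0$, I get
\[
G_{\mathtt E}^{\prime}\ket{\phi'} = (\bra0_{\mathtt a}\otimes c\,G_{\mathtt E})\ket\phi = (\bra0_{\mathtt a}\otimes I)G\ket\phi = \ket{\phi'},
\]
where $G'_{\mathtt E}\defeqq c\,G_{\mathtt E}$. Thus the restriction map $G\mapsto c\,G_{\mathtt E}$ sends $\cS'$ into $\stabgroup(\phi')$. It is a group homomorphism because $Z^{e}$ factors multiply consistently and signs are carried along. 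I take $\cS_{\mathrm{out}}$ to be its image.

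The main point is that rank does not drop further, i.e.\ the map is injective. Suppose $G=c Z_{\mathtt a}^e\otimes G_{\mathtt E}\in\cS'$ maps to $I_{\mathtt E}$. Then $G_{\mathtt E}\propto I$ and $G=\pm Z_{\mathtt a}^{e}$ with $c\,G_{\mathtt E}=I$.
\begin{itemize}
\item If $e=0$, then $G=I$, since $-I\notin\cS_{\mathrm{in}}$ (a group with $-I$ cannot stabilize a nonzero vector).
\item If $e=1$, then $G=\pm Z_{\mathtt a}$ stabilizes $\ket\phi$, and the image condition forces $c=1$, i.e.\ $G=Z_{\mathtt a}\otimes I$. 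But then the image element is $I$, which is consistent, so this case must be handled rather than excluded.
\end{itemize}
In the case $G=Z_{\mathtt a}$, the kernel is $\langle Z_{\mathtt a}\rangle$ and the rank drops by one. However, then $\ket\phi=\ket0\ket{\phi'}$, so $\ket\phi$ is a product state. In that case no $X/Y$-type generator $G_0$ can exist, since any such element would anticommute with $Z_{\mathtt a}\in\stabgroup(\phi)$ while both stabilize $\ket\phi\neq0$, which is impossible. So no generator was discarded in the first step, and the total loss is still at most one. (The case $G=-Z_{\mathtt a}$ stabilizing $\ket\phi$ would force $\ket{\phi'}=0$, which is excluded by hypothesis.)

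Handling this case split cleanly is the step I expect to be the main obstacle. I will phrase it as
\[
\rank(\cS_{\mathrm{out}})=\rank(\cS')-\dim\ker\geq k-1,
\]
using that $\ker\neq 1$ implies $\cS'=\cS_{\mathrm{in}}$. Independence and commutation of the output generators follow from those of $\cS'$, and the image still excludes $-I$ because it stabilizes $\ket{\phi'}\neq0$.

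Every step is Gaussian elimination on $O(q)\times O(q)$ binary matrices plus phase bookkeeping, so the algorithm runs in deterministic time $\poly(q)$.
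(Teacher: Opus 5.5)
Your proposal is correct and follows essentially the same route as the paper. You first keep only the elements commuting with $Z_{\mathtt a}$; your $\mathbb F_2$ elimination on the $X$-bit is the paper's $h_j$ construction in case (ii)(b). You then strip the $\mathtt a$-factor and show the two possible losses cannot both occur, because $Z_{\mathtt a}\in\cS_{\mathrm{in}}$ forbids any anticommuting stabilizer of $\ket\phi\neq0$, which is exactly the paper's split into cases (i) and (ii). Packaging the second step as a homomorphism $G\mapsto c\,G_{\mathtt E}$ with kernel contained in $\set{I,Z_{\mathtt a}}$ is a slightly cleaner version of the paper's independence argument in case (ii)(a), and it tracks signs explicitly.
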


\begin{proof}
The Pauli stabilizers of $\ket{0}_{\mathtt{a}}$ are $\set{I_{\mathtt{a}}, Z_{\mathtt{a}}}$. Therefore we argue that the stabilizers of $\ket{\phi'}_{\tt E}$ are as follows:
\begin{align*}
\cS_{\mathrm{out}}
& \defeqq\sett{R}{I_{\mathtt{a}}\otimes R\in \cS_{\mathrm{in}}\text{ or }Z_{\mathtt{a}}\otimes R\in \cS_{\mathrm{in}}}.
\end{align*}
To see that, observe that if $Z_{\mathtt{a}}\otimes R\in \cS_{\mathrm{in}}$, then
\begin{align*}
R\ket{\phi'}
&=(\bra{0}_{\mathtt{a}}\otimes R)\ket{\phi} \\
&=(\bra{0}_{\mathtt{a}}\otimes I_{\mathtt{E}})(Z_{\mathtt{a}}\otimes R)\ket{\phi} \\
&=(\bra{0}_{\mathtt{a}}\otimes I_{\mathtt{E}})\ket{\phi} \\
&=\ket{\phi'}.
\end{align*}
The same chain holds with $I_{\mathtt{a}}$ in place of $Z_{\mathtt{a}}$, proving that every $R\in \cS_{\mathrm{out}}$ stabilizes $\ket{\phi'}$.

Computing generators of $\cS_{\mathrm{out}}$ can be found from the given generators of $\cS_{\mathrm{in}}$ by Gaussian elimination in time $\poly(q)$.

For the rank bound, let $r=\rank(\cS_{\mathrm{in}})$ and let $g_1,\ldots,g_r$ be independent generators of $\cS_{\mathrm{in}}$.
The construction has two steps: (1) retain the stabilizers commuting with $Z_{\mathtt a}$, and (2) delete their factor on $\mathtt a$. Each step loses at most one independent generator. We show that they cannot both cause a loss, giving $\rank(\cS_{\mathrm{out}}) \geq r-1$.

\begin{enumerate}[(i)]
\item If $Z_{\mathtt{a}}\in S_{\mathrm{in}}$, all input stabilizers commute with it, so step (1) loses nothing. We may choose the generating set so that
\[
g_1 =Z_{\mathtt{a}}\otimes I_{\mathtt{E}},
\]
and write the remaining generators, for suitable $b_j\in\set{0,1}$, as
\begin{align*}
g_j &= Z_{\mathtt{a}}^{b_j}\otimes R_j,\qquad j=2,\ldots,r,
\end{align*}
Replacing $g_j$ by $g_1^{b_j}g_j=I_{\mathtt{a}}\otimes R_j$ preserves independence. Step (2) therefore leaves the $r-1$ independent generators $R_2,\ldots,R_r$, losing only $g_1$.

\item If $Z_{\mathtt{a}}\notin S_{\mathrm{in}}$, consider two cases:
\begin{enumerate}[(a)]
\item If all $g_j$ commute with $Z_{\mathtt{a}}$, write $g_j=Z_{\mathtt{a}}^{b_j}\otimes R_j$, for some Pauli operators $R_j$ and $b_j\in\set{0,1}$. After deleting the qubit, the $R_j$ remain independent: otherwise, for some nonempty $J\subseteq[r]$,
\begin{align*}
\prod_{j\in J}R_j=I_{\mathtt{E}},
\end{align*}
which holds if and only if
\[
\prod_{j\in J}g_j\in\set{I,Z_{\mathtt{a}}}.
\]
However, the product $\prod_{j\in J}g_j$ cannot equal $I$, because the $g_j$ are independent and $J$ is nonempty. It cannot equal $Z_{\mathtt{a}}$ either, because this product belongs to $\cS_{\mathrm{in}}$, whereas $Z_{\mathtt{a}}\notin \cS_{\mathrm{in}}$. Thus $R_1,\ldots,R_r$ remain independent, and all $r$ generators survive.

\item Otherwise, suppose without loss of generality that $g_1$ anticommutes with $Z_{\mathtt{a}}$. Define $h_1 =g_1$, and for $j\geq2$ define
\begin{align*}
h_j &=\begin{cases}
g_j & \text{if }[g_j, Z_{\mathtt{a}}] = 0,\\
g_1g_j & \text{otherwise}.
\end{cases}
\end{align*}
Clearly $h_1,\ldots, h_r$ are still independent because all the modifications are reversible. Each $h_j$ commutes with $Z_{\mathtt{a}}$, so discarding $g_1$ leaves $r-1$ independent commuting generators.
Now apply the argument of case (a) to $h_2,\ldots,h_r$ in place of $g_1,\ldots,g_r$. These are independent generators in $\cS_{\mathrm{in}}$ commuting with $Z_{\mathtt{a}}$, and $Z_{\mathtt{a}}\notin \cS_{\mathrm{in}}$ still holds. Therefore, deleting their factor on $\mathtt a$ preserves independence, leaving $r-1$ generators.
\end{enumerate}
\end{enumerate}
Thus in both cases,
\[
\rank(\cS_{\mathrm{out}})\geq\rank(\cS_{\mathrm{in}})-1.
\]
\end{proof}

\subsection{Fidelity from $(m,\nu)$-Stabilizer Decompositions}\label{sec:cliffordT-residual}
We now use the computed $(m,\nu)$-stabilizer decomposition to reduce the fidelity calculation to a small matrix. 

The first lemma shows how to locally modify $(m,\nu)$-stabilizer states into another state which is fixed but on at most $2\nu$ qubits. 
This local Clifford normal form follows from \cite[proof of Theorem~8]{GuOlivieroLeone2025}; we include a proof for completeness.
\begin{lemma}\label{lem:low-nullity-form}
Let $\ket{\psi}_{\mathtt{AB}}$ be a $q$-qubit pure vector given by an $(m,\nu)$-stabilizer decomposition.

Then there exist local Clifford unitaries $D_{\mathtt{A}}, D_{\mathtt{B}}$ and a pure vector $\ket{\gamma}$ on at most $2\nu$ qubits such that
\begin{align}\label{eq:gamma formalization}
    (D_{\mathtt{A}}\otimes D_{\mathtt{B}})\ket{\psi}_{\mathtt{AB}}
=
\ket{0}^{\otimes r}
\otimes
\ket{\Phi^+}_{\mathtt{AB}}^{\otimes p}
\otimes
\ket{\gamma}.
\end{align}
If the decomposition is explicit, then $D_{\mathtt{A}}, D_{\mathtt{B}}$ and the locations of the registers on the right-hand side can be computed in polynomial time.
\end{lemma}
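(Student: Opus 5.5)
We work entirely with the stabilizer part of the decomposition: the $q-\nu$ independent commuting Pauli operators $P_1,\ldots,P_{q-\nu}$ with $-I\notin\langle P_j\rangle$, generating a group $\cS\subseteq\stabgroup(\psi)$. The normal form will come from a local Clifford canonical form of the stabilizer group $\cS$ with respect to the bipartition $\mathtt A:\mathtt B$. The expansion $\sum_j\alpha_j\ket{s_j}$ is not needed for existence. It is used only afterwards, to compute $\ket\gamma$.

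\textbf{Step 1: local/bilocal structure of $\cS$.} Consider the restriction map $\pi_{\mathtt A}\colon\cS\to\Pauli_{\mathtt A}/\text{phases}$, viewed as linear maps over $\mathbb F_2$ of the symplectic representation. Define
\[
\cS_{\mathtt A}=\sett{P\in\cS}{P=P_{\mathtt A}\otimes I}
\qquad\text{and}\qquad
\cS_{\mathtt B}=\sett{P\in\cS}{P=I\otimes P_{\mathtt B}}.
\]
Choose a complement $\cS_{\mathtt{AB}}$ so that $\cS=\cS_{\mathtt A}\cS_{\mathtt B}\cS_{\mathtt{AB}}$. Following the standard Fattal et al.\ style argument used in \cite{GuOlivieroLeone2025}, we perform Gaussian elimination on the symplectic form restricted to $\pi_{\mathtt A}(\cS_{\mathtt{AB}})$. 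Since elements of $\cS$ commute globally, the $\mathtt A$-parts and $\mathtt B$-parts have identical commutation matrices. We bring this matrix to canonical form: $p$ anticommuting pairs $(g_k,h_k)$, plus a set of generators whose $\mathtt A$-parts commute with all others. Each pair $(g_k,h_k)$ has $\mathtt A$-parts forming a local anticommuting pair, and the same holds on $\mathtt B$. Local Cliffords therefore map them to $X_{a_k}X_{b_k}$ and $Z_{a_k}Z_{b_k}$, which stabilize $\ket{\Phi^+}_{a_kb_k}$.

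\textbf{Step 2: local Clifford choice.} Take the local groups $\pi_{\mathtt A}(\cS_{\mathtt A})$ together with the $\mathtt A$-parts of the remaining commuting generators and of the pairs. These form an isotropic-plus-symplectic subspace of the $\mathtt A$-Pauli space. A standard symplectic Gram--Schmidt yields a Clifford $D_{\mathtt A}$ with the following images: the $\mathtt A$-parts of the paired generators go to $X_{a_k},Z_{a_k}$ on $p$ qubits; the local stabilizers in $\cS_{\mathtt A}$ go to $Z$ on further $r_{\mathtt A}$ qubits, which must be in $\ket0$ since $-I\notin\cS$ fixes the signs (flip with $X$ if needed); and the remaining commuting $\mathtt A$-parts go to $Z$'s on yet other qubits. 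We construct $D_{\mathtt B}$ similarly. After applying $D_{\mathtt A}\otimes D_{\mathtt B}$, the conjugated vector is stabilized by $Z$ on $r=r_{\mathtt A}+r_{\mathtt B}$ qubits, and by $XX,ZZ$ on $p$ pairs. The vector therefore factorizes as $\ket0^{\otimes r}\otimes\ket{\Phi^+}^{\otimes p}\otimes\ket\gamma$, since a vector stabilized by a full stabilizer group on a subset of qubits is a tensor product with that stabilizer state. The remaining commuting generators are of the form $Z_{a}Z_{b}$ products acting on the residual qubits.

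\textbf{Step 3: counting (main obstacle).} We must show that the residual register has at most $2\nu$ qubits. Let the residual consist of $k_{\mathtt A}+k_{\mathtt B}$ qubits, and let $\ell$ be the number of remaining bilocal commuting generators acting on it. By rank counting,
\[
q-\nu=r+2p+\ell
\qquad\text{and}\qquad
q=r+2p+k_{\mathtt A}+k_{\mathtt B}.
\]
These give $k_{\mathtt A}+k_{\mathtt B}=\nu+\ell$. The key point is to show $\ell\le\min(k_{\mathtt A},k_{\mathtt B})$, since each such generator has independent, mutually commuting, nontrivial parts on both sides. Thus $\ell\le (k_{\mathtt A}+k_{\mathtt B})/2$, and hence $k_{\mathtt A}+k_{\mathtt B}\le 2\nu$. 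The delicate part is arranging the elimination so that the $\mathtt A$-parts of these $\ell$ generators are independent modulo $\cS_{\mathtt A}$. This holds because any dependence among them would produce a product lying in $\cS_{\mathtt B}$, contradicting the choice of the complement. The same reasoning applies on the $\mathtt B$ side. I expect this step to be the part requiring the most care.

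\textbf{Explicitness.} All of the above consists of symplectic Gaussian elimination and Clifford synthesis over $\mathbb F_2$ on $O(q)$-bit vectors. It therefore runs in $\poly(q)$ time and outputs $D_{\mathtt A}$, $D_{\mathtt B}$, and the register locations.
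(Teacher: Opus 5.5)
Your proposal is correct, and it takes a different route from the paper. The paper never puts $\cS$ into a bipartite canonical form. Instead it purifies the code space by adjoining a $\nu$-qubit reference $\mathtt R$. This gives a tripartite stabilizer state $\ket{\Omega}_{\mathtt{ABR}}$ with $\ket\psi=2^{\nu/2}(I\otimes\bra{\eta}_{\mathtt R})\ket{\Omega}$. The paper then invokes the Looi--Griffiths normal form (GHZ states, pairwise Bell pairs, local $\ket0$'s) and counts using the maximal mixedness of $\mathtt R$. No qubit of $\mathtt R$ can sit in a local factor, so $\nu=m_{\mathtt{AR}}+m_{\mathtt{BR}}+m_{\mathtt{ABR}}$, and the residual register has $\nu+m_{\mathtt{ABR}}\le 2\nu$ qubits. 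Finally, $\ket\gamma$ is obtained by projecting the $\mathtt R$-involving factors onto $U_{\mathtt R}\ket\eta$.

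You instead apply symplectic Gram--Schmidt directly to $\cS$, using the fact that the $\mathtt A$- and $\mathtt B$-restricted commutation forms agree on $\cS$. Your identity $k_{\mathtt A}+k_{\mathtt B}=\nu+\ell$ is the exact analogue of the paper's $\nu+m_{\mathtt{ABR}}$: a GHZ factor, restricted to elements acting trivially on $\mathtt R$, leaves exactly one bilocal $Z_{\mathtt A}Z_{\mathtt B}$, so $\ell=m_{\mathtt{ABR}}$.

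Your bound $\ell\le k_{\mathtt A}$ does hold, but state the reason explicitly. The $r_{\mathtt A}+\ell$ radical vectors are independent (your complement argument) and isotropic. They also lie in the $2(\abs{\mathtt A}-p)$-dimensional symplectic complement of the $p$ hyperbolic pairs, so $r_{\mathtt A}+\ell\le\abs{\mathtt A}-p$. ``Independent and mutually commuting'' alone only gives $\ell\le\abs{\mathtt A}$.

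What each route buys: yours is elementary and self-contained, and it makes the polynomial-time claim immediate, since everything is Gaussian elimination. It needs neither the reference system nor the $\bra\eta$ projection, and it also shows that $\ket\gamma$ is stabilized by the $\ell$ operators $Z_{a'_j}Z_{b'_j}$. The paper's route is shorter given the cited tripartite theorem, which handles the sign bookkeeping and the factorization internally.
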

We defer the proof to \cref{sec:clifford-t-gamma thm proof}.

We next compute the amplitudes of $\ket{\gamma}$ from \cref{eq:gamma formalization}.
\begin{lemma}\label{lem:stabilizer-residual-amplitudes}
Let $\ket{\psi}_{\mathtt{AB}}$ be a vector given by an explicit $(m,\nu)$-stabilizer decomposition. Then there is a deterministic algorithm that computes the computational-basis amplitudes of $\ket{\gamma}$ from \cref{eq:gamma formalization}, exactly in deterministic time $\poly(2^{2\nu}, \ell)$, where $\ell$ is the input description length.
\end{lemma}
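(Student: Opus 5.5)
The plan is to push each term of the explicit $m$-stabilizer decomposition through the local Clifford $D_{\mathtt A}\otimes D_{\mathtt B}$ of \cref{lem:low-nullity-form}, and then read off the amplitudes of $\ket\gamma$ by projecting the fixed factors away. First, invoke \cref{lem:low-nullity-form} on the explicit $(m,\nu)$-decomposition to obtain, in polynomial time, the Clifford circuits $D_{\mathtt A},D_{\mathtt B}$ and the locations of the $r$ zero qubits, the $p$ EPR pairs, and the at most $2\nu$ residual qubits $\mathtt G$ carrying $\ket\gamma$. Write $D\defeqq D_{\mathtt A}\otimes D_{\mathtt B}$. Since $D$ is Clifford and each $\ket{s_j}=D_j\ket{0^q}$ is given explicitly, $D\ket{s_j}=(DD_j)\ket{0^q}$ is again an explicitly described stabilizer state, and
\[
(D_{\mathtt A}\otimes D_{\mathtt B})\ket\psi=\sum_{j=1}^m\alpha_j\,DD_j\ket{0^q}.
\]

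Next, for each basis string $g\in\power{|\mathtt G|}$, I express the amplitude $\braket{g}{\gamma}$ as an overlap with a fixed stabilizer state. Let $\ket{a_g}\defeqq\ket{0}^{\otimes r}\otimes\ket{\Phi^+}^{\otimes p}\otimes\ket g$, placed on the registers identified above. This is a stabilizer state with an explicit Clifford preparation circuit of size $O(q)$, built from $X$ gates and $\mathrm{CNOT}(H\otimes I)$ blocks. By \cref{eq:gamma formalization} and the normalization of the fixed factors,
\[
\braket{g}{\gamma}=\bra{a_g}(D_{\mathtt A}\otimes D_{\mathtt B})\ket\psi=\sum_{j=1}^m\alpha_j\braket{a_g}{DD_js}.
\]
Here $\ket{DD_js}$ is shorthand for $DD_j\ket{0^q}$. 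Each inner product is an instance of \cref{thm:stabilizer-computation} with $p=q$ (full contraction, $\mathtt S$ the whole system). It returns a scalar times the empty state, computed exactly in time $\poly(q,|D|,|D_j|)$. Stabilizer inner products are of the form $0$ or $\omega 2^{-k/2}$ with $\omega$ an eighth root of unity, so the values lie in $\mathbb Q(\sqrt2,i)$. Combined with the coefficient encoding of the explicit decomposition, this keeps the arithmetic exact, with polynomial bit growth.

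Finally, summing over $j\le m$ and ranging over the $2^{|\mathtt G|}\le 2^{2\nu}$ strings $g$ gives total time $\poly(2^{2\nu},m,\ell)$. Since $m\le\ell$ for an explicit decomposition, this is the claimed $\poly(2^{2\nu},\ell)$. The main obstacle I expect is bookkeeping rather than conceptual: \cref{lem:low-nullity-form} determines $\ket\gamma$ only once the fixed factors are normalized. I must check that the tensor ordering and register locations it outputs match those used in $\ket{a_g}$, and that phases and the $2^{-k/2}$ normalizations from \cref{thm:stabilizer-computation} are tracked exactly in $\mathbb Q(\sqrt2,i)$ rather than numerically. Together these justify the claim of exact computation.
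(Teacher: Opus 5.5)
Your proposal is correct and follows the paper's proof. Both write $\braket{z}{\gamma}$ as $\sum_j \alpha_j$ times the overlap of $(D_{\mathtt A}\otimes D_{\mathtt B})\ket{s_j}$ with $\ket{0}^{\otimes r}\otimes\ket{\Phi^+}^{\otimes p}\otimes\ket{z}$, compute each overlap exactly via \cref{thm:stabilizer-computation}, and loop over the at most $2^{2\nu}$ strings $z$; your remarks on exact arithmetic in $\mathbb{Q}(\sqrt{2},i)$ and on $m\le\ell$ spell out bookkeeping the paper leaves implicit.
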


\begin{proof}
Write the $m$-stabilizer decomposition of $\ket \psi$ as $\ket{\psi}=\sum_{j=1}^m\alpha_j\ket{s_j}$. For every computational-basis string $z$ on the residual qubits, \cref{eq:gamma formalization} gives
\[
\langle z|\gamma\rangle
=
\sum_{j=1}^{m}\alpha_j
\ps{\bra{0}^{\otimes r}\otimes\bra{\Phi^+}^{\otimes p}\otimes\bra{z}}
(D_{\mathtt A}\otimes D_{\mathtt B})\ket{s_j}.
\]
The coefficients and Clifford circuits are explicit, so each summand, including its phase, is computable exactly by \cref{thm:stabilizer-computation}. Summing the $m$ terms takes polynomial time in the input description length. There are at most $2^{2\nu}$ strings $z$, giving the claimed running time.
\end{proof}

We finish the proof of \cref{thm:cliffordT-state-fidelity} by assembling these steps and computing the Schmidt coefficients of $\ket \chi$.

\begin{proof}[Proof of \cref{thm:cliffordT-state-fidelity}]
Apply \cref{lem:bell-contraction-data} to obtain an explicit $(m_\chi,\nu_\chi)$-stabilizer decomposition of $\ket{\chi}$. Then \cref{lem:low-nullity-form} computes local Clifford unitaries $D^{(\sigma)}_{\mathtt{E}_\sigma}$ and $D^{(\rho)}_{\mathtt{E}_\rho}$ such that
\[
(D^{(\sigma)}_{\mathtt{E}_\sigma}\otimes D^{(\rho)}_{\mathtt{E}_\rho})\ket{\chi}_{\mathtt{E}_\sigma \mathtt{E}_\rho} = \ket{0}^{\otimes r} \otimes \ket{\Phi^+}_{\mathtt{E}_\sigma \mathtt{E}_\rho}^{\otimes p} \otimes \ket{\gamma}.
\]
Apply \cref{lem:stabilizer-residual-amplitudes} to compute the at most $2^{2\nu_\chi}$ amplitudes of $\ket{\gamma}$.

Let $G$ be the coefficient matrix of $\ket{\gamma}$ across the residual qubits of $\mathtt{E}_\sigma$ and $\mathtt{E}_\rho$. Writing $u,v$ for basis strings on these residual registers,
\[
G_{u,v}\defeqq\braket{u,v}{\gamma}.
\]
The singular values of $G$ are the Schmidt coefficients of $\ket{\gamma}$. Moreover, the local Clifford unitaries and the local $\ket{0}$ factors do not change the Schmidt coefficients, while each Bell pair multiplies their sum by $\sqrt{2}$. Therefore,
\[
F(\rho_{\mathtt{A}},\sigma_{\mathtt{A}})
=
2^{p/2}\norm{G}_1.
\]
Applying \cref{thm:svd} to a sufficiently accurate rational approximation of $2^{p/2}G$ gives the required $\delta$-additive estimate. The bounds on the matrix dimensions and coefficient bit lengths give a running time of
\[
\poly(2^{t} , L,q_\rho+q_\sigma,\log(1/\delta)).
\]
\end{proof}

\subsection{Proof of \cref{lem:low-nullity-form}}\label{sec:clifford-t-gamma thm proof}
We use the following two stabilizer normal-form theorems. The following follows from the original work of \cite{AG04}, though we use the following formalization.
\begin{theorem}[{\cite[Lemma~3.5]{YoganathanJozsaStrelchuk2019}}]\label{thm:clifford-reduction}
Let $P_1,\ldots,P_k$ be independent commuting Hermitian Pauli operators on $q$ qubits such that $-I\notin\langle P_1,\ldots,P_k\rangle$. Then, there is a Clifford unitary $U$, computable in deterministic time $\poly(q)$, such that
\[
\forall i\in[k], \qquad
U P_i U^\dagger = Z_i.
\]
\end{theorem}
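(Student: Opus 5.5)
This is the standard Aaronson--Gottesman canonical form for stabilizer groups: an abelian Pauli group can be rotated onto single-qubit $Z$'s. I would prove \cref{thm:clifford-reduction} by induction on $k$, building $U$ as a product of elementary Clifford gates (H, S, CNOT, SWAP) found by Gaussian elimination on the binary symplectic representation.

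Represent each $P_i$ as $\pm$ a vector $(x_i\,|\,z_i)\in\mathbb F_2^{2q}$. Here independence means these vectors are linearly independent over $\mathbb F_2$, and commutation means their pairwise symplectic products vanish.

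For the inductive step, take $P_1$, which is not $\pm I$ by independence. First choose a qubit $j$ on which $P_1$ acts nontrivially. Apply $H_j$ if the action there is $X$, or $S_j$ followed by $H_j$ if it is $Y$; after this conjugation $P_1$ acts as $Z$ on qubit $j$. Next, for every other qubit $\ell$ on which $P_1$ acts nontrivially, first make the action there $Z$ by local gates, then apply $\mathrm{CNOT}_{\ell\to j}$. This moves that $Z$ factor onto qubit $j$, where it cancels against the existing $Z$. A final SWAP places qubit $j$ at position $1$, so the accumulated Clifford $C_1$ satisfies $C_1P_1C_1^\dagger=\pm Z_1$.

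To fix the sign, I use the hypothesis $-I\notin\langle P_1,\dots,P_k\rangle$ only to make sure the signs are consistent. The sign of each image can be flipped independently: conjugating by $X_1$ turns $-Z_1$ into $Z_1$, and the same works on qubit $i$ later. So I apply $X_1$ if needed.

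Now clean the remaining generators. Each $C_1P_iC_1^\dagger$ with $i\ge2$ commutes with $Z_1$, so it acts on qubit $1$ as $I$ or $Z$. Replacing $P_i$ by $P_1P_i$ where necessary gives generators acting trivially on qubit $1$. This replacement changes neither the generated group nor independence, and it keeps $-I$ out of the group.

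The resulting $k-1$ operators are independent and commuting on the remaining $q-1$ qubits. The induction hypothesis gives a Clifford $U'$ on those qubits mapping them to $Z_2,\dots,Z_k$.

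The replacement step means $U'C_1$ sends $P_iP_1$ to $Z_i$ rather than sending $P_i$ itself. Since $U'$ fixes $Z_1$, the image of $P_i$ is $Z_1^{b_i}Z_i$. I repair this with a final CNOT from qubit $i$ to qubit $1$ (control $i$, target $1$) whenever $b_i=1$. This conjugation maps $Z_1$ to $Z_1Z_i$ and fixes $Z_i$, so it turns $Z_1Z_i$ into $Z_i$ while preserving $Z_1$. Alternatively, I can simply track the correspondence, since the statement only requires $UP_iU^\dagger=Z_i$.

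Each stage uses $O(q)$ gates and $O(kq)$ bit updates, so the total time is $\poly(q)$.

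The only delicate point is sign handling, where $-I\notin\langle\cdot\rangle$ must be used. I would handle it by deferring all signs to the end: if $UP_iU^\dagger=(-1)^{c_i}Z_i$, then conjugating by $\prod_i X_i^{c_i}$ fixes every sign. This works because the $Z_i$ are independent, so the sign of each $Z_i$ can be set separately.
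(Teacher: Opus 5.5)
The paper does not prove this statement; it imports it from Yoganathan--Jozsa--Strelchuk (ultimately Aaronson--Gottesman). Your symplectic Gaussian-elimination induction is the natural route, and most of it is sound:
\begin{itemize}
\item reducing $P_1$ to $\pm Z_1$ with local gates, CNOTs and a SWAP;
\item noting that the other images act as $I$ or $Z$ on qubit $1$;
\item multiplying by $P_1$ to clear qubit $1$;
\item checking that the restricted operators stay independent and commuting;
\item fixing signs at the end with $\prod_i X_i^{c_i}$.
\end{itemize}

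\textbf{The gap is in the repair step, which fails as written.} After the recursion you have $U'C_1P_1C_1^\dagger U'^\dagger=Z_1$ and $U'C_1P_iC_1^\dagger U'^\dagger=Z_1^{b_i}Z_i$. You propose a CNOT with control $i$ and target $1$. You correctly state that it acts by $Z_1\mapsto Z_1Z_i$ and $Z_i\mapsto Z_i$. But then it sends $Z_1Z_i\mapsto Z_1Z_i\cdot Z_i=Z_1$, not $Z_i$. It also sends the image of $P_1$, namely $Z_1$, to $Z_1Z_i$, so it destroys that image rather than preserving it. Your own conjugation rule contradicts your conclusion.

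\textbf{The fix} is the opposite orientation: control $1$, target $i$. This gate fixes $Z_1$, fixes every $Z_j$ with $j\neq i$, and maps $Z_i\mapsto Z_1Z_i$, hence $Z_1Z_i\mapsto Z_i$. Gates with a common control and distinct targets commute. So applying $\prod_{i:\,b_i=1}\mathrm{CNOT}_{1\to i}$ after $U'$ repairs all generators at once. It introduces no signs, since it maps $Z$-type Paulis to $Z$-type Paulis without phases.

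Your fallback of ``simply tracking the correspondence'' does not rescue the step. The statement demands $UP_iU^\dagger=Z_i$ for the given $P_i$, not for a modified generating set, and $Z_1Z_i$ is not of the required form.

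\textbf{A smaller point concerns the hypothesis $-I\notin\langle P_1,\ldots,P_k\rangle$.} You say it is what keeps the signs consistent, but your sign fix by $\prod_i X_i^{c_i}$ never uses it. Its actual role is to convert independence of signed Pauli operators as group elements into $\mathbb F_2$-linear independence of their symplectic vectors. For example, $Z$ and $-Z$ are independent as group generators but generate $-I$. Linear independence is what your induction actually needs, to guarantee that $P_1$ and each restricted $R_i$ are not $\pm I$. Under the $\mathbb F_2$ reading of independence you adopt at the outset, the hypothesis holds automatically and never needs to be invoked.
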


\begin{theorem}[{\cite[Theorem~7]{LooiGriffiths2011}}]
\label{thm:tripartite-stabilizer}
Let $\ket{\Omega}_{\mathtt{ABR}}$ be a pure stabilizer state over $q$ qubits. There exist Clifford unitaries $U_{\mathtt{A}},U_{\mathtt{B}},U_{\mathtt{R}}$,\footnotemark acting on $\mathtt{A},\mathtt{B},\mathtt{R}$, respectively, such that for some integers $m_{\mathtt{ABR}}, m_{\mathtt{AB}}, m_{\mathtt{AR}}, m_{\mathtt{BR}}, r_{\mathtt{A}}, r_{\mathtt{B}}, r_{\mathtt{R}}$:
\begin{align*}
    (U_{\mathtt{A}}\otimes U_{\mathtt{B}}\otimes U_{\mathtt{R}})\ket{\Omega}_{\mathtt{ABR}}
=&
\ket{\mathrm{GHZ}}_{\mathtt{ABR}}^{\otimes m_{\mathtt{ABR}}} \\
&\otimes
\ket{\Phi^+}_{\mathtt{AB}}^{\otimes m_{\mathtt{AB}}}
\otimes
\ket{\Phi^+}_{\mathtt{AR}}^{\otimes m_{\mathtt{AR}}}
\otimes
\ket{\Phi^+}_{\mathtt{BR}}^{\otimes m_{\mathtt{BR}}} \\
&\otimes \ket{0}_{\mathtt{A}}^{\otimes r_{\mathtt{A}}} \otimes \ket{0}_{\mathtt{B}}^{\otimes r_{\mathtt{B}} }\otimes \ket{0}_{\mathtt{R}}^{\otimes r_{\mathtt{R}}}.
\end{align*}
Given stabilizer generators for $\ket{\Omega}_{\mathtt{ABR}}$, the integers above and stabilizer tableaus for $U_{\mathtt{A}},U_{\mathtt{B}},U_{\mathtt{R}}$ can be computed in time $\poly(q)$.
\end{theorem}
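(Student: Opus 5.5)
The approach is to work entirely in the binary symplectic representation. Identify the stabilizer group $\cS=\stabgroup(\Omega)$ with a Lagrangian subspace of $\mathbb F_2^{2q}$ (phases are tracked separately and fixed at the end). For each subset $X\subseteq\{\mathtt A,\mathtt B,\mathtt R\}$, define the local subgroup $\cS_X\subseteq\cS$ of stabilizers supported on $X$. Also define the projection $\pi_X(\cS)$, obtained by restricting each element of $\cS$ to its tensor factor on $X$. Local Clifford unitaries act as symplectic transformations on each party's block separately. The integers in the normal form should therefore be read off from invariants of these subspaces, such as $\dim\cS_{\mathtt A}$, $\dim\pi_{\mathtt A}(\cS)$, and the rank of the commutation form restricted to $\pi_{\mathtt A}(\cS)$. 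I would prove the decomposition by peeling off one elementary component at a time, with each step reducing $q$, and conclude by induction.

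The first step removes the local $\ket0$ factors. The group $\cS_{\mathtt A}$ consists of commuting independent Paulis on $\mathtt A$ not containing $-I$. \Cref{thm:clifford-reduction} gives a Clifford $U_{\mathtt A}$ mapping its generators to $Z_1,\ldots,Z_{r_{\mathtt A}}$. Those qubits are then fixed to $\ket0$. Every other stabilizer commutes with these $Z_i$, so after multiplying by them it acts trivially there, and the state factors as $\ket0^{\otimes r_{\mathtt A}}\otimes\ket{\Omega'}$. Repeat this for $\mathtt B$ and $\mathtt R$. After this step, $\cS_X=\{I\}$ for every single party $X$.

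The second step analyzes the commutation structure of $V=\pi_{\mathtt A}(\cS)$. Since $\cS_{\mathtt A}$ is trivial, $\pi_{\mathtt A}$ is injective on $\cS$ modulo $\cS_{\mathtt{BR}}$, and a dimension count (Lagrangian property) gives $\dim V=2|\mathtt A|-\dim\cS_{\mathtt A}=2|\mathtt A|$, so $V$ is the full Pauli space on $\mathtt A$. For $g,h\in\cS$, commutation of $g,h$ forces $[\pi_{\mathtt A}g,\pi_{\mathtt A}h]$ to equal the commutator on $\mathtt{BR}$. Choose a symplectic basis of $\pi_{\mathtt A}(\cS)$ consisting of pairs $(X_i,Z_i)$ on $\mathtt A$, together with lifts $g_i,h_i\in\cS$. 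The $\mathtt{BR}$ parts of each lift pair anticommute. For each pair, examine whether the anticommutation is carried by $\mathtt B$ alone, by $\mathtt R$ alone, or by both. In the first two cases, a further local Clifford on that party, again via \cref{thm:clifford-reduction} applied to the relevant restricted group, maps the pair to $X\otimes X$ and $Z\otimes Z$ on one qubit of each side. This is an EPR pair in $\mathtt{AB}$ or $\mathtt{AR}$, which then splits off. In the remaining case, the pair anticommutes on both $\mathtt B$ and $\mathtt R$. After row-reducing against the already-extracted pairs, it should produce elements of the form $Z\otimes Z\otimes I$, $I\otimes Z\otimes Z$, $X\otimes X\otimes X$, which is a GHZ triple. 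Once $\mathtt A$ is exhausted, the residual state lives on $\mathtt{BR}$ only. It is a bipartite stabilizer state, and the same argument with two parties yields $\ket{\Phi^+}_{\mathtt{BR}}^{\otimes m_{\mathtt{BR}}}$ and $\ket0$ factors.

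The main obstacle is the GHZ case. There I must show that one can simultaneously choose local bases so that the pair's $\mathtt B$- and $\mathtt R$-parts become single-qubit $X,Z$ on a fresh qubit of each party, without disturbing earlier extracted factors. The plan is to pick the symplectic basis of $\pi_{\mathtt A}(\cS)$ adapted to the filtration by $\pi_{\mathtt A}(\cS_{\mathtt{AB}})$ and $\pi_{\mathtt A}(\cS_{\mathtt{AR}})$. Elements whose lifts can be chosen inside $\cS_{\mathtt{AB}}$ yield $\mathtt{AB}$ pairs, and similarly for $\mathtt{AR}$. The complement, where no lift lives in a two-party subgroup, yields GHZ components. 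The symplectic orthogonality needed to separate these blocks would be verified via Gram–Schmidt over $\mathbb F_2$. Every step is Gaussian elimination or symplectic Gram–Schmidt on $O(q)\times O(q)$ binary matrices, plus tableau updates, so the total running time is $\poly(q)$. Phases are fixed at the end by local Pauli corrections.
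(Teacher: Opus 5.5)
The paper gives no proof of this statement; it cites Looi--Griffiths (the qudit extension of Bravyi--Fattal--Gottesman), so your proposal has to stand on its own. Your symplectic framework, the splitting-off of local $\ket0$ factors via $\cS_{\mathtt A},\cS_{\mathtt B},\cS_{\mathtt R}$, and the count $\dim\pi_{\mathtt A}(\cS)=2|\mathtt A|-\dim\cS_{\mathtt A}$ are fine. The gap is in the one step that carries the content of the theorem, namely telling EPR components from GHZ components.

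Your trichotomy is ill-defined. Since $\omega_{\mathtt B}(g_i,h_i)+\omega_{\mathtt R}(g_i,h_i)=\omega_{\mathtt A}(X_i,Z_i)=1$, exactly one term equals $1$, so the ``both'' case never occurs. Which term it is also depends on the choice of lifts modulo $\cS_{\mathtt{BR}}$. Take the GHZ state itself. Lifting $X_{\mathtt A},Z_{\mathtt A}$ to $XXX,ZZI$ puts the anticommutation on $\mathtt B$ alone, and $ZIZ$ puts it on $\mathtt R$ alone. Yet no EPR pair splits off. The only lifts of $X_{\mathtt A}$ are $XXX$ and $-XYY$, and both have nontrivial $\mathtt B$- and $\mathtt R$-parts, which local Cliffords cannot remove. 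Your rule would therefore classify every GHZ as an EPR pair.

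The fallback in your last paragraph fails on the same example. $Z_{\mathtt A}$ lifts into both $\cS_{\mathtt{AB}}$ and $\cS_{\mathtt{AR}}$, so ``has a lift in $\cS_{\mathtt{AB}}$ $\Rightarrow$ $\mathtt{AB}$ pair'' and ``no lift in a two-party subgroup $\Rightarrow$ GHZ'' are both false.

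The missing structural fact concerns $V_{\mathtt{AB}}\defeqq\pi_{\mathtt A}(\cS_{\mathtt{AB}})$ and $V_{\mathtt{AR}}\defeqq\pi_{\mathtt A}(\cS_{\mathtt{AR}})$. These are exact symplectic complements in the $\mathtt A$-Pauli space: use $\cS_{\mathtt{AB}}=\pi_{\mathtt{AB}}(\cS)^{\perp}$ and take the complement again. Hence $V_{\mathtt{AB}}=W\oplus\mathrm{rad}$ and $V_{\mathtt{AR}}=W'\oplus\mathrm{rad}$, where $W,W'$ are nondegenerate and $\mathrm{rad}=V_{\mathtt{AB}}\cap V_{\mathtt{AR}}$ is isotropic. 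This gives $\tfrac12\dim W=m_{\mathtt{AB}}$, $\tfrac12\dim W'=m_{\mathtt{AR}}$, and $\dim\mathrm{rad}=m_{\mathtt{ABR}}$. After an adapted $\mathtt A$-Clifford, the GHZ qubits of $\mathtt A$ are those whose $Z$ spans $\mathrm{rad}$ and whose $X$ lies in a complement of $V_{\mathtt{AB}}+V_{\mathtt{AR}}$ dual to $\mathrm{rad}$.

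Even granted this, you still have to show two more things. First, the lifts of $\mathrm{rad}$ and of its dual can be corrected by elements of $\cS_{\mathtt{BR}}$ so that their $\mathtt B$- and $\mathtt R$-parts have exactly the commutation pattern of $ZZI$, $ZIZ$, $XXX$. Second, these parts, together with the $\mathtt B$- and $\mathtt R$-parts of the EPR generators, must form a standard symplectic pattern, so that a single Clifford on $\mathtt B$ and a single Clifford on $\mathtt R$ map all of them at once to single-qubit Paulis on distinct qubits. This is the substance of the Bravyi--Fattal--Gottesman argument, and your proposal only asserts that it ``should'' happen.

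A smaller point: \cref{thm:clifford-reduction} handles only commuting sets. Mapping anticommuting pairs to $X,Z$ needs transitivity of the Clifford group on symplectic bases.
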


\footnotetext{The original theorem writes the local factors as $\ket{+}$ states. Applying a Hadamard gate to each qubit belonging to a local factor transforms $\ket{+}$ into $\ket{0}$. These Hadamard gates are local Clifford gates and may therefore be absorbed into $U_{\mathtt{A}},U_{\mathtt{B}},U_{\mathtt{R}}$.}

We can now prove:
\begin{proof}[Proof of \cref{lem:low-nullity-form}]
The given independent stabilizers $P_1,\ldots,P_{q-\nu}$ of $\ket{\psi}$ have a simultaneous $+1$ eigenspace that is a stabilizer code $\mathcal C$ of dimension $\dim(\mathcal C) = \frac{2^q}{2^{q-\nu}} = 2^\nu.$
Thus, $\mathcal C$ encodes $\nu$ logical qubits, and $\ket{\psi}_{\mathtt{AB}}\in\mathcal C$.

By \Cref{thm:clifford-reduction}, there is a Clifford encoding unitary $V$ such that $\cB$ is an orthonormal basis of $\mathcal C$:
\[
\cB
\defeq
\sett{
V\cdot\ket{0}^{\otimes(q-\nu)}\ket{x}
}{
x\in\power{\nu}
}.
\]
Introduce a reference system $\mathtt{R}$ of $\nu$ qubits and define
\begin{align*}
\ket{\Omega}_{\mathtt{ABR}}
&\defeq
\frac{1}{\sqrt{2^\nu}}
\sum_{x\in\power{\nu}}
\ps{
V\cdot\ket{0}^{\otimes(q-\nu)}\ket{x}
}_{\mathtt{AB}}
\ket{x}_{\mathtt{R}}
\\
&=
(V_{\mathtt{AB}}\otimes I_{\mathtt{R}})
\ps{
\ket{0}^{\otimes(q-\nu)}
\otimes
\ket{\Phi^+}^{\otimes\nu}
}.
\end{align*}
Thus, $\ket{\Omega}_{\mathtt{ABR}}$ is maximally entangled between the registers $\mathtt{AB}$ (namely, the code $\mathcal C$) and $\mathtt{R}$. It is also a stabilizer state because the state inside the parentheses is a tensor product of stabilizer states and $V$ is Clifford.

Let $\ket{\psi}_{\mathtt{AB}}\in\mathcal C$, and decompose using the basis $\cB$:
\[
\ket{\psi}_{\mathtt{AB}}
=
\sum_{x\in\power{\nu}}
\alpha_x
V\ps{
\ket{0}^{\otimes(q-\nu)}\ket{x}.
}
\]
Then, observe that
\[
\ket{\psi}_{\mathtt{AB}}
=
2^{\nu/2}
\bigl(I_{\mathtt{AB}}\otimes\bra{\eta}_{\mathtt{R}}\bigr)
\ket{\Omega}_{\mathtt{ABR}},
\]
where
\[
\ket{\eta}_{\mathtt{R}}
\defeq
\sum_{x\in\power{\nu}}
\overline{\alpha_x}\ket{x}_{\mathtt{R}}.
\]

Now invoke \Cref{thm:tripartite-stabilizer} on $\ket{\Omega}_{\mathtt{ABR}}$. The theorem gives local Clifford unitaries $U_{\mathtt{A}},U_{\mathtt{B}},U_{\mathtt{R}}$ that transform the state into a tensor product of local $\ket{0}$ states, Bell pairs, and GHZ states. We denote the resulting state by
\[
\ket{\Omega'}_{\mathtt{ABR}}
\defeq
(U_{\mathtt{A}}\otimes U_{\mathtt{B}}\otimes U_{\mathtt{R}})\ket{\Omega}_{\mathtt{ABR}}.
\]
Because $\ket{\Omega}_{\mathtt{ABR}}$ is maximally entangled between $\mathtt{AB}$ and $\mathtt{R}$, its reduced state on $\mathtt{R}$ is the maximally mixed state. Therefore, the reduced state of $\ket{\Omega'}_{\mathtt{ABR}}$ is also the maximally mixed state, so altogether
\begin{align*}
    \tr_{\mathtt{AB}} \ps{\ketbraa{\Omega}_{\mathtt{ABR}}}
    &=
    \frac{I_{\mathtt{R}}}{2^\nu}, \\
    \tr_{\mathtt{AB}}(\ketbraa{\Omega'}_{\mathtt{ABR}} ) &= U_{\mathtt{R}} \frac{I_{\mathtt{R}}}{2^\nu}U_{\mathtt{R}}^\dagger
    =
    \frac{I_{\mathtt{R}}}{2^\nu}.
\end{align*}
Hence, no qubit of $\mathtt{R}$ can occur in a local $\ket{0}_{\mathtt{R}}$ factor decomposition of $\ket{\Omega'}$. Thus, every qubit of $\mathtt{R}$ belongs to an $\mathtt{AR}$ Bell pair, a $\mathtt{BR}$ Bell pair, or an $\mathtt{ABR}$ GHZ state. Therefore,
\[
\nu
=
m_{\mathtt{AR}}+m_{\mathtt{BR}}+m_{\mathtt{ABR}}.
\]
An $\mathtt{AR}$ or $\mathtt{BR}$ Bell pair contains one qubit from $\mathtt{AB}$, while an $\mathtt{ABR}$ GHZ state contains two. Thus, all factors involving $\mathtt{R}$ contain the following number of qubits from $\mathtt{AB}$:
\[
m_{\mathtt{AR}}+m_{\mathtt{BR}}+2m_{\mathtt{ABR}}
\leq
2\bigl(m_{\mathtt{AR}}+m_{\mathtt{BR}}+m_{\mathtt{ABR}}\bigr)
=
2\nu.
\]
Let $\ket{\Theta}_{\mathtt{ABR}}$ be the tensor product of all factors involving $\mathtt{R}$. After locally reordering the remaining factors,
\[
\ket{\Omega'}_{\mathtt{ABR}}
=
\ket{0}^{\otimes r}
\otimes
\ket{\Phi^+}_{\mathtt{AB}}^{\otimes p}
\otimes
\ket{\Theta}_{\mathtt{ABR}}.
\]
Define
\begin{align*}
\ket{\eta'}_{\mathtt{R}}
&\defeq
U_{\mathtt{R}}\ket{\eta}_{\mathtt{R}},
\\
\ket{\gamma}
&\defeq
2^{\nu/2}
\bigl(I_{\mathtt{AB}}\otimes\bra{\eta'}_{\mathtt{R}}\bigr)
\ket{\Theta}_{\mathtt{ABR}}.
\end{align*}
Using the expression for $\ket{\psi}_{\mathtt{AB}}$ above gives
\[
(U_{\mathtt{A}}\otimes U_{\mathtt{B}})\ket{\psi}_{\mathtt{AB}}
=
\ket{0}^{\otimes r}
\otimes
\ket{\Phi^+}_{\mathtt{AB}}^{\otimes p}
\otimes
\ket{\gamma}.
\]
Since $\ket{\Theta}_{\mathtt{ABR}}$ contains at most $2\nu$ qubits from $\mathtt{AB}$, the vector $\ket{\gamma}$ is supported on at most $2\nu$ qubits. Taking $D_{\mathtt{A}}, D_{\mathtt{B}}$ to include $U_{\mathtt{A}},U_{\mathtt{B}}$ and the local reordering of the tensor factors proves the result. All the required transformations are computable in polynomial time by the cited stabilizer-tableau reductions.
\end{proof}

\section{Security against Pauli-Sparse Adversaries}
\subsection{Definitions and Security Results}
Let $\Pauli_q=\set{I,X,Y,Z}^{\otimes q}$ denote the Pauli basis group.
Recall that any unitary $U$ expands uniquely in the Pauli basis as
\[
U = \sum_{P \in \Pauli_q} \wh U(P) P,
\qquad
\wh U(P) \defeq 2^{-q} \tr\ps{UP}.
\]
In this subsection, we focus on adversaries whose unitaries are $(q,s)$-Pauli-sparse:
\begin{definition}\label{def:exact pauli sparse}
    A unitary $U$ is $(q,s)$-Pauli-sparse if $U$ acts on $q$ qubits and its Pauli spectrum has support size at most $s$, namely:
    \[
    \abs{\supp(\wh U)} \le s.
    \]
    We say that $U$ is explicit if its nonzero Pauli coefficients and names can be computed to $\poly(q)$ bits of precision in classical time $\poly(s,q)$.
\end{definition}
\begin{definition}\label{def:pauli-sparse-strategy}
    Let $q,s$$\colon\N\to\N$. An $f$-routing strategy is (explicit) $(q,s)$-Pauli-sparse if the unitary implementations of its resource-preparation and pre-communication channels $\cI^n,\cU^x,\cV^y$ are (explicit) $(q(n),s(n))$-Pauli-sparse unitaries, for every $n$ and $x,y\in\powern$.    
\end{definition}

Following \cref{def:uniform-routing-strategy}, $(q,L,\tau)$-uniformity means that the generators print descriptions of the unitaries in time at most $\tau(n)$, with total description length at most $L(n)$. In the case of explicit unitary, we mean printing the explicit description; in the case of non-explicit unitary, we mean access the unitary via an oracle.

For the remainder of this section, fix an arbitrary input length $n$ and write $d=d(n)$, $q=q(n)$, $s=s(n)$, $L=L(n)$, and $\tau=\tau(n)$.

The following claim bounds the cost of preparing a product of marginals of states prepared by Pauli-sparse unitaries.
\begin{claim}\label{lem:pauli-sparse-routing-preparations}
    If $\rho_{\tt{\Qb A}}$ is a reduced state of a $(q,s)$-Pauli-sparse unitary, then $\rho_{\tt{\Qb}}\otimes\rho_{\tt{A}}$ is a reduced state of a $(2q,s^2)$-Pauli-sparse unitary.
\end{claim}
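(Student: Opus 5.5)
We mirror the proof of \cref{lem:cliffordT-routing-preparations}: run two independent copies of the preparation and keep complementary registers. Let $U$ be the given $(q,s)$-Pauli-sparse unitary, and write $\ket{\rho}_{\tt{\Qb A E}} = U\ket{0^q}$ for the purification whose reduced state on $\tt{\Qb A}$ is $\rho_{\tt{\Qb A}}$. We consider the unitary $U\otimes U$ acting on $2q$ qubits, split as registers $\tt{\Qb A E}$ and $\tt{\Qb' A' E'}$. It prepares $\ket{\rho}_{\tt{\Qb AE}}\otimes\ket{\rho}_{\tt{\Qb' A'E'}}$. Tracing out everything except $\tt{\Qb}$ from the first copy and $\tt{A'}$ from the second leaves $\rho_{\tt{\Qb}}\otimes\rho_{\tt{A'}}$, because the two copies are in a product state. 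This is the desired reduced state, up to relabeling $\tt{A'}$ as $\tt A$.

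It remains to bound the Pauli sparsity of $U\otimes U$. We write
\[
U=\sum_{P\in\supp(\wh U)}\wh U(P)\,P .
\]
Then
\[
U\otimes U=\sum_{P,P'\in\supp(\wh U)}\wh U(P)\wh U(P')\,P\otimes P' .
\]
Each $P\otimes P'$ is an element of $\Pauli_{2q}$, and distinct pairs $(P,P')$ give distinct tensor products. The expansion above is therefore exactly the unique Pauli expansion of $U\otimes U$, with $\widehat{U\otimes U}(P\otimes P')=\wh U(P)\wh U(P')$. In particular, its support has size $\abs{\supp(\wh U)}^2\le s^2$. The unitary $U\otimes U$ acts on $2q$ qubits, so it is $(2q,s^2)$-Pauli-sparse.

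Explicitness, if needed downstream, also carries over. The coefficients of $U\otimes U$ are products of pairs of the given coefficients, and its Pauli names are concatenations of the given names. Both can be computed in time $\poly(s^2,q)$ with only a constant-factor increase in precision.

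No step here is a real obstacle. The only point needing care is that the tensor-product expansion is the genuine Pauli decomposition of $U\otimes U$ rather than merely some decomposition of it. This follows from uniqueness of the Pauli expansion, since the products $P\otimes P'$ are pairwise distinct basis elements of $\Pauli_{2q}$.
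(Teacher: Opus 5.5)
Your proof is correct and follows the paper's argument essentially verbatim: prepare two independent copies with $U\otimes U$, keep $\ol{\mathtt{Q}}$ from one copy and $\mathtt{A}'$ from the other, and read off the support bound $s^2$ from the product expansion. Your extra remark on uniqueness of the Pauli expansion (distinct pairs $(P,P')$ give distinct $P\otimes P'$) makes explicit a point the paper leaves implicit.
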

\begin{proof}
Let $W$ be the given unitary and, abusing notation, write $\ket{\rho}_{\tt{\Qb AE}}=W\ket{0^q}$ for a purification of $\rho_{\tt{\Qb A}}$. The unitary $W\otimes W$ prepares $\ket{\rho}_{\tt{\Qb AE}}\otimes\ket{\rho}_{\tt{\Qb' A' E'}}$, and tracing out all registers except $\tt{\Qb A'}$ gives $\rho_{\tt{\Qb}}\otimes\rho_{\tt{A'}}$, yielding the desired state. Since
\[
W\otimes W
=\sum_{P,Q\in\supp(\wh W)}\wh W(P)\wh W(Q)(P\otimes Q),
\]
the unitary acts on $2q$ qubits and has Pauli support size at most $s^2$. 
\end{proof}

Our main result approximates the mid-protocol fidelity of both explicit and non-explicit Pauli-sparse routing strategies.
\begin{theorem}\label{thm:pauli sparse routing fidelity}
Fix an input $(x,y)$ and suppose the $f$-routing strategy is $(q,s)$-Pauli-sparse and $(q,L,\tau)$-uniform.
Write $F_0=F(\rhopre_{\ol{\mathtt{Q}}\mathtt{A}},\rhopre_{\ol{\mathtt{Q}}}\otimes\rhopre_{\mathtt{A}})$. Then there exists an algorithm, that given $(x,y)$ and $\delta >0$, outputs $\wt{F_0}$ that approximate $\abs{ \wt{F_0} - F_0} \le \delta$, whose complexity is as follows:
\begin{enumerate}
    \item\label{itm:pauli-fidelity-explicit} If the strategy is \emph{explicit}, the algorithm runs in deterministic classical time $O(\tau)+\poly(ds)(q+\log(1/\delta))$.
    
    \item\label{itm:pauli-fidelity-non-explicit} If the strategy is \emph{non-explicit}, the algorithm runs in bounded-error quantum time $O(\tau)+\poly(q,d,s,L,1/\delta)$.
\end{enumerate}
\end{theorem}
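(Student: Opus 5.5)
Following the overview, I will reduce the fidelity computation to an $s'\times s'$ matrix whose entries can be computed explicitly from the Pauli expansions. First, run the generators to obtain $\cI^n,\cU^x,\cV^y$ in time $O(\tau)$. The mid-protocol purification is then $W\ket{0\cdots0}$, where $W=(I\otimes\cU^x\otimes\cV^y)(P_{\Phi}\otimes\cI^n)$ and $P_\Phi$ is a fixed Clifford preparing the $d$-dimensional EPR pair on $\ol{\mathtt Q}\mathtt Q$. I will keep the challenge EPR pair outside the Pauli-sparse accounting: $\Phi_{\ol{\mathtt Q}\mathtt Q}=\frac1{\sqrt d}\sum_{i}\ket{ii}$ is a sum of $d$ basis product vectors. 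Expanding $\cI^n$ and then $\cU^x\otimes\cV^y$ in the Pauli basis, and using that each Pauli maps a computational basis vector to a phase times a basis vector, writes the full pure mid-protocol state as a sum of at most $d s^3$ product basis vectors $c_j\ket{a_j}_{\ol{\mathtt Q}\mathtt A}\ket{b_j}_{\mathtt{B}}$. (Alternatively, I can apply the Pauli expansion of $\cU^x\otimes\cV^y$ directly to a sum of basis vectors.) The coefficients $c_j$ are products of the given Pauli coefficients, the phases, and $1/\sqrt d$.

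By \cref{lem:pauli-sparse-routing-preparations}, or directly by tensoring two copies, $\rho_{\ol{\mathtt Q}}\otimes\rho_{\mathtt A}$ is likewise the reduced state of a vector with at most $(ds^3)^2$ basis-product terms. Then, as in the overview, I group terms by their environment string $b$. This gives $\rho_{\ol{\mathtt Q}\mathtt A}=XX^\dagger$ and $\rho_{\ol{\mathtt Q}}\otimes\rho_{\mathtt A}=YY^\dagger$, where the columns of $X$ and $Y$ are sparse vectors $\ket{u_b}$, at most $\poly(ds)$ of them, each with at most $\poly(ds)$ nonzero entries. Writing the purifications as $\sum_b \ket{u_b}\ket b$ and applying \cref{thm:purification-trace-norm} gives $F_0=\norm{X^\dagger Y}_1$. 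Each entry $\braket{u_b}{v_{b'}}$ is an inner product of sparse vectors indexed by $q$-bit strings, so it can be computed by sorting the support strings. The whole matrix therefore costs $\poly(ds)\cdot q$ bit operations at the stated precision. Finally, \cref{thm:svd} approximates $\norm{X^\dagger Y}_1$ to additive error $\delta$ in time $\poly(ds,\log(1/\delta))$. I will also track the truncation error of the coefficients, given to $\poly(q)$ bits, and use the perturbation bound stated after \cref{thm:svd}. This establishes the explicit case.

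For the non-explicit case, I first learn the sparse Pauli expansion of each unitary from oracle access. I will use Bell sampling on the Choi state $(U\otimes I)\ket{\Phi^+}^{\otimes q}$, which samples $P$ with probability $\abs{\wh U(P)}^2$. With $\poly(s,1/\delta)$ samples this recovers every $P$ with $\abs{\wh U(P)}^2\geq \delta^2/s^2$. Estimating these coefficients, including their phases by interference or Hadamard tests, gives $\wt U=\sum \wt\alpha_P P$ with $\norm{U\ket\psi-\wt U\ket\psi}\leq\delta'$ uniformly. The discarded mass is controlled by Parseval: $\sum_{P}\abs{\wh U(P)}^2=1$ and the support has size at most $s$. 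I then run the explicit algorithm on the renormalized vectors. To bound the error, closeness of the purifications in $\ell_2$ gives closeness of the reduced states, and hence of $F_0$ via the Fuchs--van de Graaf inequalities (\cref{lem:fvg}) and continuity of fidelity, $\abs{F(\rho,\sigma)-F(\rho',\sigma)}\le\sqrt{2\norm{\rho-\rho'}_1}$ or similar.

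The main obstacle is this learning step. The difficulty is getting the phase-consistent coefficient estimates and an error propagation that is polynomial in $1/\delta$ through three composed unitaries and the product-of-marginals copy. I expect to set $\delta'=\poly(\delta)/\poly(s)$ and union bound over the failure probability of the three learning procedures.
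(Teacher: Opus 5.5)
Your proposal is correct and follows the paper's argument. You factor both reduced states as $XX^\dagger$ and $YY^\dagger$ by grouping the basis-product terms by environment string, then compute the $\poly(ds)$-size matrix $X^\dagger Y$ and approximate $\norm{X^\dagger Y}_1$ via \cref{thm:svd}. In the non-explicit case you learn the sparse Pauli expansions, renormalize, and transfer the error through trace-norm contractivity and fidelity continuity, exactly as the paper does.

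The differences are minor. You write the challenge EPR pair directly as $d$ basis vectors, rather than Pauli-expanding an $O(d^4)$-sparse preparation unitary, which improves the support bound to $ds^3$. You also build the learner from Bell sampling and Hadamard tests on each component unitary. The paper instead applies the learner of \cref{thm:learning Pauli coefficients} to the composed $W^{xy}$ and $W^{xy}\otimes W^{xy}$.
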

The proof for explicit strategies is developed in \cref{sec:explicit unitaries}, and for non-explicit strategies in \cref{sec:qs unitaries unexplicit}.

Plugging \cref{thm:pauli sparse routing fidelity} into \cref{thm:universal-fidelity-gap}, we obtain security against Pauli-sparse adversaries.
\begin{corollary}\label{cor:sparse adversary security}
There are universal constants $c_0$ and $\eps_0>0$ such that the following holds. Let $f=\set{f_n\colon\powern\times\powern\to\power{}}$ be a family of functions.

Suppose there exists a $(q,s)$-Pauli-sparse $f$-routing strategy that is $(1-\eps)$-correct on every input at every input length for some $0\leq\eps\leq\eps_0$.
\begin{enumerate}
    \item If the strategy is \emph{explicit} and $(q,L,\tau)$-uniform, then $f\in\DTIME\ps{\tau+(n+dsq)^{c_0}}$. 

    \item If the strategy is \emph{non-explicit} and $(q,L,\tau)$-uniform, then $f\in\BQTIME\ps{\tau+(n+L+dsq)^{c_0}}$.
\end{enumerate}
\end{corollary}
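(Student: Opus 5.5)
The plan is to read off $f(x,y)$ from a single threshold test on the estimate of $F_0$ given by \cref{thm:pauli sparse routing fidelity}, using the constant gap of \cref{thm:universal-fidelity-gap}.

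\textbf{Step 1: fix constants.} By \cref{thm:universal-fidelity-gap} and $1/d\le 1/2$, we have
\[
F_0(x,y)\ge 1-4\eps \quad\text{if } f(x,y)=1,
\qquad
F_0(x,y)\le \tfrac12+\sqrt{2\eps} \quad\text{if } f(x,y)=0.
\]
Choose a universal $\eps_0>0$ with $1-4\eps_0 > \tfrac12+\sqrt{2\eps_0}+\tfrac14$; for instance $\eps_0=10^{-3}$ works. Then for every $\eps\le\eps_0$ the two cases are separated by a gap of at least $1/4$. Set $\delta=1/16$ and the threshold $\theta=\tfrac12(1-4\eps_0+\tfrac12+\sqrt{2\eps_0})$.

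\textbf{Step 2: the decision algorithm.} On input $x\circ y$ of length $2n$:
\begin{enumerate}
\item Run the generators $\genresource(1^n)$, $\genalice(1^n,x)$ and $\genbob(1^n,y)$, taking time $O(\tau)$.
\item Invoke the algorithm of \cref{thm:pauli sparse routing fidelity} with $\delta=1/16$ to obtain $\wt{F_0}$.
\item Output $1$ iff $\wt{F_0}\ge\theta$.
\end{enumerate}
Correctness follows because $|\wt{F_0}-F_0|\le 1/16$ and the gap is at least $1/4$, so the estimate lands on the correct side of $\theta$.

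\textbf{Step 3: explicit case.} The running time is
\[
O(\tau)+\poly(ds)\bigl(q+\log 16\bigr)\le \tau+(n+dsq)^{c_0}
\]
for a universal $c_0$. The $n$ term accounts for reading the input and writing $1^n$. This gives $f\in\DTIME$ of the stated bound.

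\textbf{Step 4: non-explicit case.} The fidelity estimator is a bounded-error quantum algorithm running in time $O(\tau)+\poly(q,d,s,L,16)$. If its failure probability is not already small, reduce it below $1/3$ by a constant number of repetitions and taking the median. This yields $f\in\BQTIME(\tau+(n+L+dsq)^{c_0})$.

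\textbf{Subtlety to check.} The state whose fidelity is estimated involves the mid-protocol state, which includes the challenge EPR pair on $\ol{\mathtt Q}\mathtt Q$. I expect the proof of \cref{thm:pauli sparse routing fidelity} to already cover this. The EPR preparation has Pauli sparsity $\poly(d)$, which is why $d$ appears in the bounds, and \cref{lem:pauli-sparse-routing-preparations} handles the product-of-marginals state with $s\mapsto s^2$, $q\mapsto 2q$. These are polynomial blowups absorbed into $c_0$. So the main obstacle is only bookkeeping: confirming that the theorem's $\poly(ds)$ indeed incorporates the $d$-dimensional EPR factor and the squaring of sparsity. Granting the theorem as stated, the corollary is immediate.
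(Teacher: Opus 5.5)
Your proposal is correct and follows the paper's own argument. The paper obtains the corollary in one line, by plugging the constant-accuracy estimator of \cref{thm:pauli sparse routing fidelity} into the constant fidelity gap of \cref{thm:universal-fidelity-gap} and thresholding; the $f(x,y)=0$ case comes from the symmetric Alice/Bob argument, as in the intro's \cref{thm:main-fidelity-gap}. The bookkeeping you flag is handled inside the proof of \cref{thm:pauli sparse routing fidelity}: the EPR preparation and \cref{lem:pauli-sparse-routing-preparations} give explicit $(O(q),O(d^4s^3))$- and $(O(q),O(d^8s^6))$-sparse unitaries. One small imprecision is that your inequality $O(\tau)+\dots\le\tau+(n+dsq)^{c_0}$ is not literal; the constant factor on $\tau$ is absorbed by the usual constant-factor convention for $\DTIME$ and $\BQTIME$, which the paper also relies on.
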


\subsection{Explicit Unitaries}\label{sec:explicit unitaries}
We show how to compute the fidelity between two states that are preparable by $(q,s)$ unitaries. In this subsection we assume the unitaries are given explicitly; in \cref{sec:qs unitaries unexplicit} we extend the result to non-explicit unitaries by learning their Pauli coefficients.

To this end, we make use of the following formalism of Uhlmann's theorem \cite{Uhlmann1976}.
\begin{fact}[{\cite[Theorem~3.22]{Watrous18TQI}}]\label{fact from factorizations}
Let $\rho=XX^\dagger$ and $\sigma=YY^\dagger$ be density matrices on the same space. Then
\[
    F(\rho,\sigma)=\norm{X^\dagger Y}_1.
\]
\end{fact}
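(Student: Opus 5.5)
The plan is to reduce the identity to the standard formula $F(\rho,\sigma)=\norm{\sqrt{\rho}\sqrt{\sigma}}_1$ (the fidelity convention used throughout, cf.\ \cref{thm:purification-trace-norm}). The reduction goes through polar decompositions of the factors $X$ and $Y$, using unitary (more precisely, partial-isometric) invariance of the trace norm.

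\textbf{Step 1: factor $X$ and $Y$ through the square roots.} Let $X$ be an operator from some auxiliary space $\mathcal X$ into the state space. The polar decomposition of $X^\dagger$ is $X^\dagger = W\,|X^\dagger|$, where $|X^\dagger|=\sqrt{XX^\dagger}=\sqrt{\rho}$ and $W$ is a partial isometry whose initial space is $\operatorname{supp}(\rho)$. Thus $W^\dagger W$ is the projector $\Pi_\rho$ onto the support of $\rho$. Taking adjoints gives
\[
X=\sqrt{\rho}\,W^\dagger .
\]
In the same way we obtain $Y=\sqrt{\sigma}\,Z^\dagger$, where $Z$ is a partial isometry with $Z^\dagger Z=\Pi_\sigma$. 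Therefore
\[
X^\dagger Y = W\sqrt{\rho}\sqrt{\sigma}\,Z^\dagger .
\]

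\textbf{Step 2: strip off the partial isometries.} For any operator $A$, we have $\norm{WA}_1=\norm{A}_1$ whenever the range of $A$ lies in the initial space of $W$. This holds because $(WA)^\dagger(WA)=A^\dagger\Pi_\rho A=A^\dagger A$, so the singular values of $WA$ and $A$ coincide.
\begin{itemize}
    \item Applying this with $A=\sqrt{\rho}\sqrt{\sigma}Z^\dagger$, whose range lies in $\operatorname{supp}(\rho)$, gives $\norm{X^\dagger Y}_1=\norm{\sqrt\rho\sqrt\sigma Z^\dagger}_1$.
    \item Applying the same argument to the adjoint $Z\sqrt\sigma\sqrt\rho$, whose range lies in $\operatorname{supp}(\sigma)$, and using $\norm{B}_1=\norm{B^\dagger}_1$, removes $Z^\dagger$.
\end{itemize}
Together these yield $\norm{X^\dagger Y}_1=\norm{\sqrt\rho\sqrt\sigma}_1=F(\rho,\sigma)$.

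\textbf{Main obstacle.} The only delicate point is that $X$ and $Y$ may be rectangular with different inner dimensions, and $\rho,\sigma$ may be singular. In that case the polar factors are merely partial isometries, not unitaries. The fix is to track that each partial isometry acts isometrically on exactly the subspace where the adjacent square root has its range, as done in Step 2. Alternatively, one can pad $X$ and $Y$ with zero columns to a common square size and use a unitary extension of $W$ and $Z$, noting that the padding changes neither $XX^\dagger$, $YY^\dagger$, nor the nonzero singular values of $X^\dagger Y$.
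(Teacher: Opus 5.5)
Your proof is correct and complete. The paper does not prove this fact; it only cites it as a form of Uhlmann's theorem (Watrous, Theorem~3.22).

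\textbf{Comparison with the Uhlmann route.} A derivation along the lines the paper points to would pad $X$ and $Y$ to a common inner space $\mathcal{K}$. It would then view $\mathrm{vec}(X)$ and $\mathrm{vec}(Y)$ as purifications of $\rho$ and $\sigma$ in $\mathcal{H}\otimes\mathcal{K}$, where $\mathcal{H}$ is the state space. Next it would use that every purification of $\sigma$ there has the form $\mathrm{vec}(YU)$ for a unitary $U$ on $\mathcal{K}$. Finally it would invoke the variational formula $\norm{M}_1=\max_U\abs{\tr(MU)}$ to identify $\max_U\abs{\tr(X^\dagger Y U)}$ with $\norm{X^\dagger Y}_1$.

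Your route is more direct. The polar decompositions $X=\sqrt{\rho}\,W^\dagger$ and $Y=\sqrt{\sigma}\,Z^\dagger$, with $W^\dagger W=\Pi_\rho$ and $Z^\dagger Z=\Pi_\sigma$, do most of the work. Combined with the fact that a partial isometry preserves the trace norm of any operator whose range lies in its initial space, they reduce everything to $F(\rho,\sigma)=\norm{\sqrt{\rho}\sqrt{\sigma}}_1$. This needs no purifications, no common inner dimension, and no optimization over unitaries. It also handles singular $\rho,\sigma$ and factors with different numbers of columns cleanly.

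\textbf{One inaccuracy, in your optional closing remark.} If $X$ has more columns than $\dim\mathcal{H}$, adding zero columns cannot make it square. In that case you should extend $W$ to an isometry (or embed $\mathcal{H}$ into a larger space) rather than to a unitary. Your main Step~2 argument does not depend on this.
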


Our main theorem is then,
\begin{lemma}\label{thm:sparse states to computing fidelity}
Let $U,V$ be two explicit $(q,s)$-unitaries, and let $\rho, \sigma$ be two reduced states as follows
\[
\rho \defeqq \tr_{\bar{\mathtt{A}}}\ps{U\ketbraa{0^q}U^\dagger},
\qquad
\sigma \defeqq \tr_{\bar{\mathtt{A}}}\ps{V\ketbraa{0^q}V^\dagger}.
\]
For any $\delta> 0$, the fidelity $F(\rho,\sigma)$ can be computed up to additive error $\delta$ by a deterministic classical algorithm in time $\poly(s)( q + \log (1/\delta)).$
\end{lemma}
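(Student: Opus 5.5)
The plan follows the overview: build explicit low-rank factorizations $\rho=XX^\dagger$ and $\sigma=YY^\dagger$ with at most $s$ columns each, then apply \cref{fact from factorizations} and \cref{thm:svd} to the small matrix $X^\dagger Y$.

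\textbf{Step 1: sparse expansion of the states.} Write $U=\sum_{j=1}^{s}\wh U(P_j)P_j$. Each Pauli $P_j$ is a tensor product of single-qubit Paulis. Hence $P_j\ket{0^q}=\omega_j\ket{z_j}$, where $z_j$ is the bit string marking the positions of $X$ and $Y$ factors, and $\omega_j\in\set{\pm1,\pm i}$ is the product of the phases contributed by the $Y$ factors. Both $z_j$ and $\omega_j$ are read off from the name of $P_j$ in $O(q)$ time. Setting $c_j=\wh U(P_j)\omega_j$ gives
\[
U\ket{0^q}=\sum_{j=1}^{s}c_j\ket{a_j}_{\mathtt A}\ket{b_j}_{\bar{\mathtt A}},
\]
where $(a_j,b_j)$ is the split of $z_j$ across the two registers. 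Distinct Paulis may produce the same basis string, so I first merge equal strings $z_j$ by sorting the $s$ pairs $(z_j,c_j)$, in $O(sq\log s)$ time. The same is done for $V$, producing coefficients $e_k$ and strings $(a'_k,b'_k)$.

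\textbf{Step 2: factorizations.} Group the terms of $U\ket{0^q}$ by their value of $b$, and set $\ket{u_b}=\sum_{j:b_j=b}c_j\ket{a_j}$. Tracing out $\bar{\mathtt A}$ gives $\rho=\sum_b\ketbraa{u_b}=XX^\dagger$, where $X$ has the at most $s$ columns $\ket{u_b}$. Likewise $\sigma=YY^\dagger$, where $Y$ has at most $s$ columns $\ket{v_{b'}}$. By \cref{fact from factorizations}, $F(\rho,\sigma)=\norm{X^\dagger Y}_1$. Each entry $\braket{u_b}{v_{b'}}=\sum_{j,k}\overline{c_j}e_k[a_j=a'_k]$ is computed from the sparse lists. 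This is done for all entries at once by sorting the pairs on the $\mathtt A$-strings, which costs $\poly(s)\cdot q$ string comparisons. The resulting matrix is $s\times s$ and never touches the exponentially large space.

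\textbf{Step 3: precision.} This is the main obstacle: the coefficients are only given to finite precision, yet the final error must be $\delta$ with the stated dependence on $q+\log(1/\delta)$. Suppose the coefficients $\wh U(P)$ and $\wh V(P)$ are computed to additive error $\eta$. Then each entry of $X^\dagger Y$ changes by at most $O(s^2\eta)$, because all coefficients have modulus at most $1$, as $\sum_P\abs{\wh U(P)}^2=1$. By the perturbation bound accompanying \cref{thm:svd}, $\norm{\cdot}_1$ then moves by at most $O(s^{4}\eta)$. So $\eta=\delta/\poly(s)$ suffices, which needs $O(\log s+\log(1/\delta))$ bits; this is within the $\poly(q)$ bits provided by explicitness, or else the coefficients are recomputed at that precision. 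Finally, \cref{thm:svd} approximates the trace norm of the $s\times s$ matrix with $\poly(s)(\log s+\log(1/\delta))$ bit operations. Together with the $\poly(s)\,q$ cost of Steps 1–2, the total running time is $\poly(s)(q+\log(1/\delta))$.
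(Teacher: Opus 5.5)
Your proposal is correct and follows the paper's proof essentially step for step: the paper likewise writes each $P_j\ket{0^q}$ as a phased computational-basis vector, groups terms by their $\bar{\mathtt A}$-string to obtain factors $X,Y$ with at most $s$ columns (this is its \cref{thm:facoting nicely midprotocol reduced states X}), and then applies \cref{fact from factorizations} and \cref{thm:svd} to the small matrix $X^\dagger Y$. Your Step~3 precision accounting is more explicit than the paper's, which simply asserts $\poly(q)$-bit computability of $X,Y$; it truncates to $O(\log s+\log(1/\delta))$ bits using an $O(s^4\eta)$ perturbation bound. Both arguments, however, tacitly assume that the coefficients are available at the required precision and that \cref{thm:svd} runs in time linear in $\log(1/\delta)$.
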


\begin{proof}
By \cref{thm:facoting nicely midprotocol reduced states X}, there exist two matrices $X,Y$, computable to $\poly(q)$ bits of precision in deterministic time $\poly(s,q,\log(1/\delta))$ such that we can decompose the target states
\begin{align*}
    \rho = XX^\dagger \qquad& \text{where} \qquad X=\sum_{w\in T_X}\ket{\wt\psi_w}\bra w, \\
    \sigma = YY^\dagger \qquad&{where}\qquad Y=\sum_{v\in T_Y}\ket{\wt\phi_v}\bra v.
\end{align*}
where $\abs{T_X},\abs{T_Y}\le s.$

By \cref{fact from factorizations}, the fidelity is $F(\rho,\sigma)=\norm{X^\dagger Y}_1.$
We now observe that the product matrix has small dimension:
\[
M\defeq X^\dagger Y
=
\sum_{w\in T_X,\;v\in T_Y}
\braket{\wt\psi_w}{\wt\phi_v} \cdot \ketbra{w}{v}
\in
\C^{\abs{T_X}\times\abs{T_Y}}.
\]
Thus all entries of $M$ can be computed deterministically in time $\poly(s,q,\log(1/\delta))$.
Feeding $M$ to the algorithm from \cref{thm:svd} yields the output $F(\rho,\sigma)=\norm{M}_1$ up to additive error $\delta$;  the total running time is $\poly(s) \ps{q+\log (1/\delta)}.$
\end{proof}

Altogether, we can compute the mid-protocol fidelity efficiently, assuming the adversaries are $(q,s)$ unitaries.

\begin{proof}[Proof of \cref{thm:pauli sparse routing fidelity}, \cref{itm:pauli-fidelity-explicit}]
Let $W^{xy}$ be the unitary obtained by composing the resource-preparation unitary, Alice's and Bob's pre-communication unitaries, and a fixed unitary preparing the EPR pair.The EPR-preparation unitary can be chosen explicitly on $2\log_2 d$ qubits, so its Pauli support has size at most $O(d^4)$. The mid-protocol state has a purification of the form
\[
    \ket{\psi^{xy}}_{\ol{\mathtt{Q}}\mathtt{ABE}}
    =
    W^{xy}\ket{0}^{\otimes O(q)}.
\]
Since Pauli support sizes multiply under composition and tensor product, $W^{xy}$ is an explicit $(O(q), O(d^4s^3))$-Pauli-sparse unitary.
Hence $\rhopre_{\ol{\mathtt{Q}}\mathtt{A}}$ is a reduced state of a $(O(q), O(d^4s^3))$-unitary. By \cref{lem:pauli-sparse-routing-preparations}, $\rhopre_{\ol{\mathtt{Q}}}\otimes\rhopre_{\mathtt{A}}$ is a reduced state of the explicit $(O(q), O(d^8s^6))$-Pauli-sparse unitary $W^{xy}\otimes W^{xy}$. The result follows directly from \cref{thm:sparse states to computing fidelity}.
\end{proof}

\subsection{Deferred Technical Lemmas}

\begin{lemma}
\label{thm:facoting nicely midprotocol reduced states X}
Let $U$ be an explicit $(q,s)$-unitary given by its Pauli data, and let $\ket{\psi}=U\ket{0^q}.$
Then, for every subsystem $\mathtt{S}\subseteq [q]$, there exist $r\leq s$ and a matrix $X_{\mathtt{S}}\in \C^{2^{\abs{\mathtt{S}}} \times r}$ such that
\[
\rho_{\mathtt{S}} \defeq \tr_{\bar{\mathtt{S}}}\ketbraa{\psi}
= X_{\mathtt{S}} X_{\mathtt{S}}^\dagger,
\]
Moreover, $X_{\mathtt{S}}$ can be computed to $\poly(q)$ bits of precision deterministically in time $\poly(q,s)$.
\end{lemma}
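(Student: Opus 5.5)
We follow the outline in the proof overview. Write the Pauli expansion $U=\sum_{j=1}^{s}\wh U(P_j)P_j$ with $P_j=\omega_j X^{a_j}Z^{b_j}$ for strings $a_j,b_j\in\power{q}$ and phases $\omega_j\in\{\pm1,\pm i\}$. Since $Z^{b_j}\ket{0^q}=\ket{0^q}$, each term maps the all-zero vector to a signed basis vector:
\[
P_j\ket{0^q}=\omega_j\ket{a_j}.
\]
Hence
\[
\ket\psi=\sum_{j=1}^{s}c_j\ket{a_j},\qquad c_j\defeqq\wh U(P_j)\,\omega_j.
\]
Both $c_j$ and $a_j$ are read off directly from the explicit Pauli data. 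This takes $O(sq)$ bit operations plus the precision cost of the coefficients, which is $\poly(q)$ bits by \cref{def:exact pauli sparse}.

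Next, split each string as $a_j=(a_j^{\mathtt S},a_j^{\bar{\mathtt S}})$ according to the subsystem $\mathtt S$. Let $T\defeqq\set{a_j^{\bar{\mathtt S}}}_{j}\subseteq\power{\abs{\bar{\mathtt S}}}$, so that $r\defeqq\abs{T}\leq s$. For each $w\in T$, define the unnormalized vector
\[
\ket{\wt\psi_w}\defeqq\sum_{j:\,a_j^{\bar{\mathtt S}}=w}c_j\ket{a_j^{\mathtt S}}.
\]
Then $\ket\psi=\sum_{w\in T}\ket{\wt\psi_w}_{\mathtt S}\ket w_{\bar{\mathtt S}}$. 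Because the vectors $\ket w$ for distinct $w$ are orthonormal, tracing out $\bar{\mathtt S}$ gives
\[
\rho_{\mathtt S}=\sum_{w\in T}\ketbraa{\wt\psi_w}=X_{\mathtt S}X_{\mathtt S}^\dagger,
\qquad\text{where}\qquad
X_{\mathtt S}\defeqq\sum_{w\in T}\ket{\wt\psi_w}\bra w.
\]
Here $X_{\mathtt S}$ is viewed as a $2^{\abs{\mathtt S}}\times r$ matrix, with columns indexed by $T$.

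For the algorithm, note that $X_{\mathtt S}$ has exponentially many rows, so "computing" it must mean outputting a sparse representation: for each column $w\in T$, the list of at most $s$ pairs $(a_j^{\mathtt S},c_j)$, with repeated row-strings merged by adding their coefficients. The grouping amounts to sorting or hashing $s$ strings of length $q$, which is $\poly(q,s)$. This representation is exactly what \cref{thm:sparse states to computing fidelity} needs, since the inner products $\braket{\wt\psi_w}{\wt\phi_v}$ can then be computed by matching sparse supports in time $\poly(s,q)$.

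I expect no real obstacle here. The only subtle points are these. First, the phase convention: $Y=iXZ$, so the $\omega_j$ must be tracked, although the $\omega_j$ are exact. Second, precision bookkeeping: the $c_j$ inherit the $\poly(q)$-bit precision of the given coefficients. Sums of at most $s$ such terms keep additive error $O(s\cdot2^{-\poly(q)})$, which is absorbed by choosing the precision polynomial large enough relative to $\log(1/\delta)$ downstream.
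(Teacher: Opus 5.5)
Your proposal is correct and is essentially the paper's proof. You expand $\ket\psi$ as a sum of at most $s$ phased computational-basis vectors, group the terms by their $\bar{\mathtt S}$-string, and use orthonormality of the distinct $\bar{\mathtt S}$-strings to get $\rho_{\mathtt S}=X_{\mathtt S}X_{\mathtt S}^\dagger$ with at most $s$ columns. Two of your points are useful clarifications of things the paper leaves implicit: the explicit tracking of the Pauli phases, and the remark that $X_{\mathtt S}$ must be output as a sparse column representation because it has $2^{\abs{\mathtt S}}$ rows.
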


\begin{proof}
The key observation is that $\ket\psi$ is a sum of at most $s$ product vectors across $\mathtt{S}:\bar{\mathtt{S}}$. Indeed, each Pauli term maps $\ket{0^q}$ to a computational-basis vector up to a phase, so the Pauli data give
\[
    \ket\psi=\sum_{j=1}^{m}\alpha_j
    \ket{a_j}_{\mathtt{S}}\ket{b_j}_{\bar{\mathtt{S}}},
    \qquad m\leq s,
\]
where the phases are absorbed into $\alpha_j$. Thus the Schmidt rank $R$ of $\ket\psi$ across $\mathtt{S}:\bar{\mathtt{S}}$ is at most $m\leq s$. By the Schmidt decomposition,
\[
    \rank(\rho_{\mathtt{S}})
    =\rank\ps{\tr_{\bar{\mathtt{S}}}\ketbraa{\psi}}
    =R\leq s.
\]
To obtain an efficiently computable factorization, let $b'_1,\ldots,b'_{m'}$ be the distinct strings among $b_1,\ldots,b_m$, and define
\[
    X_{\mathtt{S}}
    \defeq\sum_{\ell=1}^{m'}
    \ps{\sum_{j:\,b_j=b'_\ell}\alpha_j\ket{a_j}}\bra\ell.
\]
Since the vectors $\ket{b'_\ell}$ are orthonormal, tracing out $\bar{\mathtt{S}}$ gives $\rho_{\mathtt{S}}=X_{\mathtt{S}}X_{\mathtt{S}}^\dagger$. Moreover, $X_{\mathtt{S}}$ has $m'\leq m\leq s$ columns.

Finally, observe that computing all $\alpha_j, a_j,b_j,b_j'$ directly from the explicit Pauli names is computable in deterministic time $\poly(q,s)$. Hence the sparse representation can be computed to $\poly(q)$ bits of precision in deterministic time $\poly(q,s)$.
\end{proof}

\subsection{Non-explicit Unitaries}\label{sec:qs unitaries unexplicit}
We now show how to extend \cref{thm:sparse states to computing fidelity} when the unitaries are given non explicitly. The idea is to learn the unitaries to sufficiently good approximation, and then invoke our earlier result. The main ingredient is therefore an efficient quantum algorithm to learn sparse unitaries:
\begin{theorem}[{\cite[Result~1]{HonjaniHeidari2026}}]\label{thm:learning Pauli coefficients}
    There is an efficient quantum algorithm that, given $\theta>0$ and query access to an unknown unitary $U$, estimates, with constant success probability, all Pauli coefficients of $U$ with magnitude at least $\theta$, up to a global phase factor, with additive error $O(\eps/\theta)$ using $\wt O(\frac{\log(1/\theta)}{\eps^4})$ queries.
\end{theorem}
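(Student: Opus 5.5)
This result is imported from \cite{HonjaniHeidari2026}; here is how I would reprove it, based on Bell sampling from the Choi state of $U$. Let $\ket{\Phi}$ be the maximally entangled state on $q+q$ qubits. Since $\set{(P\otimes I)\ket{\Phi}}_{P\in\Pauli_q}$ is an orthonormal (Bell) basis, the Choi state expands as
\[
(U\otimes I)\ket{\Phi}=\sum_{P\in\Pauli_q}\wh U(P)\,(P\otimes I)\ket{\Phi}.
\]
One query prepares this state. A Bell-basis measurement then returns $P$ with probability $\abs{\wh U(P)}^2$; these probabilities sum to one because $U$ is unitary.

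\textbf{Step 1 (finding the heavy set).} Every $P$ with $\abs{\wh U(P)}\geq\theta$ appears with probability at least $\theta^2$, and at most $1/\theta^2$ such $P$ exist. So I would take $O(\theta^{-2}\log(1/\theta))$ Bell samples. By a union bound, with constant probability every heavy $P$ is collected. Let $P_0$ be the most frequent outcome; then $\abs{\wh U(P_0)}\geq\theta$ up to constant slack.

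\textbf{Step 2 (magnitudes).} Frequency counting with $O(\eps^{-2})$ samples estimates each $\abs{\wh U(P)}^2$ to additive error $O(\eps^2)$. Applying a Chernoff bound to all candidates at once costs only an extra logarithmic factor.

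\textbf{Step 3 (relative phases).} Fix the global phase by declaring $\wh U(P_0)$ to be real and positive. For each candidate $P$, I estimate $\wh U(P)\overline{\wh U(P_0)}$ by measuring the Choi state with the two-outcome projector onto the normalized vectors $(P\otimes I)\ket\Phi\pm(P_0\otimes I)\ket\Phi$ (and the analogous vectors with a relative phase $i$). These measurements have acceptance probabilities $\abs{\wh U(P)\pm\wh U(P_0)}^2/2$ and their $i$-rotated analogues. This is a Clifford-plus-Bell-basis measurement, so it is efficient. Subtracting the $\pm$ outcomes isolates the real and imaginary parts of $\wh U(P)\overline{\wh U(P_0)}$, each to additive error $\eps$ using $O(\eps^{-2})$ queries per candidate. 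Dividing by the estimate of $\abs{\wh U(P_0)}\geq\theta$ recovers $\wh U(P)$ up to the global phase, with additive error $O(\eps/\theta)$.

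\textbf{Main obstacle.} I expect the difficulty to be the error and query accounting around the phase reference. Dividing by $\abs{\wh U(P_0)}$ is exactly what produces the $O(\eps/\theta)$ loss, so one must make sure $P_0$ really has magnitude $\Omega(\theta)$. One must also prevent spurious candidates: $P$'s with small true coefficients that appeared by chance should be pruned by their magnitude estimates.

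Reaching the stated $\wt O(\log(1/\theta)/\eps^4)$ query bound, rather than the naive $\theta^{-2}\eps^{-2}$ from per-candidate estimation, likely requires estimating all relative phases simultaneously from a shared pool of samples. For example, one could use a fixed random Clifford mixing of the $P_0$ reference so that one batch of measurements serves every candidate, and then apply a median-of-means argument. I would first try to reproduce the easier bound and then check whether such batching matches the cited $\eps^{-4}$ dependence. In our application, $\theta,\eps$ are inverse polynomials in $s$, so any $\poly(1/\theta,1/\eps)$ bound already suffices.
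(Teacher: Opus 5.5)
The paper does not prove this statement. It imports the result from \cite{HonjaniHeidari2026} and uses it as a black box in the proof of \cref{thm:learned sparse states to fidelity}, so your Bell-sampling argument is an independent, self-contained route. Its architecture is sound. One forward query prepares the Choi state, and Bell sampling returns $P$ with probability $\abs{\wh U(P)}^2$. Your interference tests project onto the normalized vectors $((P\pm P_0)\otimes I)\ket{\Phi}/\sqrt2$ and $((P\pm iP_0)\otimes I)\ket{\Phi}/\sqrt2$. These are stabilizer states, so a Clifford circuit followed by a computational-basis measurement implements each test. The obstacle you raise at the end does not arise. Since $\abs{\wh U(P)}\leq1$, the guarantee $O(\eps/\theta)$ is nontrivial only when $\eps\lesssim\theta$, and the paper works in that regime with $\eps=c_\eps\theta^2$. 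There $\theta^{-2}\leq\eps^{-2}$, so your per-candidate count $O(\theta^{-2}\eps^{-2}\log(1/\theta))$ already fits within the stated $O(\log(1/\theta)/\eps^{4})$ without any batching. In the paper's regime it is even better: $\wt O(\theta^{-6})$ rather than the cited $\wt O(\theta^{-8})$.

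Two steps need repair.

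\textbf{Step~2.} As written it is false: $O(\eps^{-2})$ samples estimate a probability only to additive error $\Theta(\eps)$, not $O(\eps^2)$ (take $\abs{\wh U(P)}^2=1/2$). What Step~3 actually needs is an estimate $\wt a_0$ of $a_0=\abs{\wh U(P_0)}$ with $\abs{\wt a_0-a_0}=O(\eps)$. To see this, write $z_P=\wh U(P)\overline{\wh U(P_0)}$. The error of $\wt z_P/\wt a_0$ is at most $\abs{\wt z_P-z_P}/\wt a_0+\abs{\wh U(P)}\abs{\wt a_0-a_0}/\wt a_0$. The needed accuracy on $\wt a_0$ follows from a multiplicative Chernoff (Bernstein) bound on $O(\eps^{-2}\log(1/\theta))$ shared Bell samples, using $a_0\gtrsim\theta\geq\eps$. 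Alternatively, Hoeffding on $O(\eps^{-4}\log(1/\theta))$ shared samples also stays within budget.

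\textbf{Choice of reference.} Your claim that the most frequent outcome has magnitude $\Omega(\theta)$ presupposes that some coefficient of magnitude at least $\theta$ exists. If none does, the reference can be arbitrarily small, and dividing by it ruins the estimates of any surviving spurious candidates. Instead, take $P_0$ to be the candidate with the largest estimated magnitude. Output the empty list if that estimate is below $\theta/2$, and prune every other candidate at $\theta/2$ as well. This matters for the paper's application, which assumes that \emph{every} returned coefficient, not only the heavy ones, is accurate up to a single common phase. Your scheme delivers this because all estimates are referenced to the same $\wh U(P_0)$.
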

Using the above we can now generalize our result:
\begin{theorem}\label{thm:learned sparse states to fidelity}
Let $U,V$ be non-explicit $(q,s)$-Pauli-sparse unitaries given by quantum circuits of description length at most $L$, and let
\[
\rho \defeqq \tr_{\bar{\mathtt{A}}}\ps{U\ketbraa{0^q}U^\dagger},
\qquad
\sigma \defeqq \tr_{\bar{\mathtt{A}}}\ps{V\ketbraa{0^q}V^\dagger}.
\]
For every $\delta\in(0,1)$, there is a bounded-error quantum algorithm that estimates $F(\rho,\sigma)$ up to additive error $\delta$ in time $\poly(q,s,L,1/\delta)$.
\end{theorem}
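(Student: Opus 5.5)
The approach is to learn sparse approximations $\wt U,\wt V$ of $U$ and $V$ with the learning algorithm of \cref{thm:learning Pauli coefficients}, and then run the explicit procedure of \cref{thm:sparse states to computing fidelity} on the learned data. The error analysis goes through the output states $U\ket{0^q}$ and $V\ket{0^q}$ and the continuity of fidelity.

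\emph{Learning step.} Since $U$ is unitary, $\sum_P|\wh U(P)|^2=1$, and $U$ has at most $s$ nonzero coefficients. Coefficients of magnitude below $\theta$ therefore carry total squared weight at most $s\theta^2$. We run \cref{thm:learning Pauli coefficients} with threshold $\theta$ and accuracy parameter chosen so that each retained coefficient has additive error $\eta$. We discard all coefficients not reported, which yields an operator $\wt U=\sum_{P\in T}\wt\alpha_P P$ with $|T|\le s$. The global phase ambiguity is harmless, because it only rotates $U\ket{0^q}$ by a phase and leaves $\rho$ unchanged. Repeating the learner $O(1)$ times and taking medians boosts the success probability. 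The query count is $\poly(s,1/\delta)$, and each query costs $\poly(L,q)$ time.

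\emph{Stability step.} Set $\ket{\wt\psi}=\wt U\ket{0^q}$. Then
\[
\norm{U\ket{0^q}-\ket{\wt\psi}}\le\sum_P|\wh U(P)-\wt\alpha_P|\le s\eta+s\theta .
\]
We then normalize $\ket{\wt\psi}$. This may still not be the output of a unitary, but it does not need to be: \cref{thm:facoting nicely midprotocol reduced states X} and \cref{thm:sparse states to computing fidelity} only use the expansion of the vector as a sum of at most $s$ computational-basis product terms, so they apply verbatim to any such normalized vector. Write $\wt\rho$ and $\wt\sigma$ for the resulting reduced states. If pure states $\psi,\psi'$ satisfy $\norm{\psi-\psi'}\le\gamma$, then their reduced states have trace distance at most $\gamma$. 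Combining Fuchs--van de Graaf (\cref{lem:fvg}) with the angle triangle inequality (\cref{thm:fidelity triangle}) shows that $F$ is uniformly continuous, with
\[
|F(\rho,\sigma)-F(\wt\rho,\wt\sigma)|\le O\bigl(\sqrt{\gamma}\bigr).
\]
It therefore suffices to take $\gamma=O(\delta^2)$, that is, $\theta,\eta=\Theta(\delta^2/s)$. The learner's stated dependence is then $\poly(s,1/\delta)$.

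\emph{Final step and main obstacle.} We compute $\norm{X^\dagger Y}_1$ classically to accuracy $\delta/2$ via \cref{thm:svd}, which takes time $\poly(s,q,\log(1/\delta))$. The main obstacle is the bookkeeping in the learning step. Unreported coefficients could individually be just above $\theta$ yet missed with small probability, and the learner's error bound $O(\eps/\theta)$ couples the two parameters. I would choose $\eps=\eta\theta$ explicitly so that the query bound $\wt O(\log(1/\theta)/\eps^4)$ stays $\poly(s,1/\delta)$. The normalization of $\ket{\wt\psi}$ must also be controlled, which holds because $\norm{\wt\psi}\ge 1-\gamma$.
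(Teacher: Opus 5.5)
Your overall route matches the paper's. You learn the large Pauli coefficients, bound the output-vector error by the $\ell_1$ coefficient error up to a common phase, normalize, pass to reduced states by contractivity, use fidelity continuity, and run the explicit algorithm of \cref{thm:sparse states to computing fidelity} on the learned sparse vectors. However, one step does not follow as written. This is the claim that keeping the reported coefficients yields $|T|\le s$, and hence $\sum_P|\wh U(P)-\wt\alpha_P|\le s\eta+s\theta$.

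\Cref{thm:learning Pauli coefficients} guarantees two things: every coefficient of magnitude at least $\theta$ is reported, and, on success, every returned estimate is accurate. It does not say the returned list is contained in $\supp(\wh U)$ or has at most $s$ entries. Every spurious Pauli $P\notin\supp(\wh U)$ in your list contributes $|\wt\alpha_P|$, possibly as large as $\eta$, to the $\ell_1$ error. With no control on how many such terms appear, the stability estimate breaks. (The running time is not the problem; it stays polynomial in the list size.) Your stated worry, about large coefficients being missed, is already covered by the learner's success event. The danger runs the other way.

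The missing ingredient is a pruning step, which is what the paper does. Set the learner's accuracy to at most $\theta/4$, e.g.\ $\eps=c\,\theta^2$ for a small constant $c$ so that $O(\eps/\theta)\le\theta/4$. Then discard every returned estimate of magnitude below $\theta/2$. A Pauli with true coefficient $0$ has estimate at most $\theta/4$ and is removed, so $T\subseteq\supp(\wh U)$ and $|T|\le s$. A support element that is dropped has $|\wh U(P)|<\theta$ if unreported and $|\wh U(P)|<3\theta/4$ if pruned. This gives $\ell_1$ error at most $s\theta/4+s\theta$. The rest of your argument then goes through with $\theta=\Theta(\delta^2/s)$; the paper takes $\eta=(c_\eta\delta)^2$, $\theta=\eta/(2s)$, $\eps=c_\eps\theta^2$.

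A smaller point concerns amplification. Taking medians of learned coefficients across independent runs is not meaningful, since each run returns coefficients up to its own global phase. Instead, repeat the whole estimation and take the median of the final real-valued outputs, which are phase-invariant.
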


\begin{proof}
We first learn the Pauli coefficients of $U$ and show that they give a state close to
\[
\ket u \defeq U\ket{0^q}.
\]
A symmetric argument would apply to $V$ as well. We then compare the corresponding reduced states.
Since $U$ is $(q,s)$ Pauli sparse, there exist a set $\cS_U \subseteq \Pauli_q$ of cardinality $\abs{\cS_U} \le s$ such that
\[
U=\sum_{P\in \cS_U}\alpha_P P.
\]
Let $c_\eta, c_\eps$ TBD constant integers, and fix the following parameters:
\[
\eta = ( c_\eta\delta)^2,
\quad
\theta=\frac{\eta}{2s},
\quad
\eps=c_\eps \theta^2.
\]
Here $c_\eps >0$ is a sufficiently small universal constant so that the learner's additive error from \cref{thm:learning Pauli coefficients} is at most $O(\eps/\theta)=O(c_\eps \theta) < \theta/4$.
Now apply \cref{thm:learning Pauli coefficients} to $U$ with parameter $\theta$, amplifying its success probability to at least $11/12$, to obtain a list of Pauli coefficients $\wt \cS_U \subseteq \Pauli_q$, and their estimated amplitutdes, denoted by $\wt \alpha_P$ for every $P \in \wt \cS_U$. Discard all coefficients of magnitude below $\theta/2$, and write
\[
\wt U \defeq \sum_{P\in\wt \cS_U}\wt\alpha_P P.
\]
Assume the algorithm succeeds in estimating every coefficient it returns to additive error at most $\theta/4$, up to a common global phase $\phi_U$. This happens with probability at least $\ge 11/12$. Hence every $P\notin\cS_U$ has true coefficient $\alpha_P=0 < \theta/4$, so its estimate $\wt \alpha_P$ cannot survive the cutoff $\theta/2$. Hence the retained set satisfies $\wt\cS_U\subseteq\cS_U$, and therefore $\abs{\wt\cS_U}\leq s$.
Thus,
\begin{align*}
    \forall P\in \wt \cS_U\col
    \qquad &\abs{\alpha_P-e^{i\phi_U} \wt\alpha_P}\leq\frac{\theta}{4}, \\
    \forall P\in \cS_U\setminus\wt \cS_U\col
    \qquad &\abs{\alpha_P}<\theta.
\end{align*}
Since every Pauli maps $\ket{0^q}$ to a unit vector,
\begin{align}\label{eq:wt U error}
\norm{\ket{u}-e^{i\phi_U}\wt U\ket{0^q}}_2
\leq
\sum_{P\in\wt \cS_U}\abs{\alpha_P-e^{i\phi_U}\wt\alpha_P}
+\sum_{P\in \cS_U\setminus\wt \cS_U}\abs{\alpha_P}
\leq s\frac{\theta}{4} + s\theta 
<\eta.
\end{align}
Applying the same argument to $V$, by symmetry, and writing $\ket v\defeq V\ket{0^q}$, with probability at least $5/6$ we obtain an explicit $s$-sparse Pauli sum $\wt V$ and a global phase $\phi_V$ such that
\[
\norm{\ket{v}-e^{i\phi_V}\wt V\ket{0^q}}_2<\eta.
\]
The learned Pauli sums need not be unitary. We therefore normalize their outputs and consider the corresponding reduced states:
\begin{gather*}
\ket{\wt u}\defeqq\frac{\wt U\ket{0^q}}{\norm{\wt U\ket{0^q}}_2},
\qquad
\wt\rho\defeqq\tr_{\bar{\mathtt{A}}}\ketbraa{\wt u},
\\
\ket{\wt v}\defeqq\frac{\wt V\ket{0^q}}{\norm{\wt V\ket{0^q}}_2},
\qquad
\wt\sigma\defeqq\tr_{\bar{\mathtt{A}}}\ketbraa{\wt v}.
\end{gather*}
Therefore,
\begin{align*}
\norm{\ket u-e^{i\phi_U}\ket{\wt u}}_2
&\leq
\norm{\ket u-e^{i\phi_U}\wt U\ket{0^q}}_2
 +\norm{e^{i\phi_U}\ps{\wt U\ket{0^q}-\ket{\wt u}}}_2 \\
&= \norm{\ket u-e^{i\phi_U}\wt U\ket{0^q}}_2
+\norm{\wt U\ket{0^q}-\ket{\wt u}}_2\\
&=\norm{\ket u-e^{i\phi_U}\wt U\ket{0^q}}_2
+\norm{\ps{\norm{\wt U\ket{0^q}}_2-1}\ket{\wt u}}_2 \\
&=
\norm{\ket u-e^{i\phi_U}\wt U\ket{0^q}}_2
+\abs{\norm{\wt U\ket{0^q}}_2-1} \cdot 1 \\
&\leq
2\norm{\ket u-e^{i\phi_U}\wt U\ket{0^q}}_2 \\
&<2\eta,
\end{align*}
where the ultimate inequality follows from \cref{eq:wt U error}, and the penultimate inequality follows from the reverse triangle inequality which states that $\abs{\norm{x}_2-\norm{y}_2}\leq\norm{x-y}_2$ for any two vectors $x,y$):
\begin{align*}
    \abs{1 - \norm{\wt U \ket{0^q} }_2 } &= 
    \abs{\norm{\ket u}_2-\norm{\wt U\ket{0^q}}_2} \\
    &= \abs{\norm{\ket u}_2-\norm{e^{i\phi_U}\wt U\ket{0^q}}_2} \\
    &\leq \norm{\ket u-e^{i\phi_U}\wt U\ket{0^q}}_2.
\end{align*}
We now use contractivity under partial trace: partial trace is a quantum channel and cannot increase the trace norm between two states. Together with the trace-distance formula for pure states, this gives
\begin{align*}
\norm{\rho-\wt\rho}_1
&=
\norm{\tr_{\bar{\mathtt{A}}}\ps{\ketbraa u-\ketbraa{\wt u}}}_1\\
&\leq
\norm{\ketbraa u-\ketbraa{\wt u}}_1\\
&=2\sqrt{1-\abs{\braket{u}{\wt u}}^2}\\
&\leq
2\norm{\ket u-e^{i\phi_U}\ket{\wt u}}_2<4\eta.
\end{align*}
By symmetry, $\norm{\sigma-\wt\sigma}_1<4\eta$. Therefore, by \cref{lem:fvg,thm:fidelity triangle}, for sufficiently large cosntant $c_\eta$,
\[
\abs{F(\rho,\sigma)-F(\wt\rho,\wt\sigma)}
=O(\sqrt\eta) = O(c_\eta \delta) \leq \frac{\delta}{2}.
\]
The computation in \cref{thm:sparse states to computing fidelity} uses only the explicit sparse descriptions of the normalized output vectors. It therefore computes a value $\wt F$ such that
\[
\abs{\wt F-F(\wt\rho,\wt\sigma)}\leq\frac{\delta}{2}.
\]
Thus $\abs{\wt F-F(\rho,\sigma)}\leq\delta$. The total running time is $\poly(q,s,L,1/\delta)$.
\end{proof}

\begin{proof}[Proof of \cref{thm:pauli sparse routing fidelity}, \cref{itm:pauli-fidelity-non-explicit}]
Construct $W^{xy}$ as in the proof of \cref{itm:pauli-fidelity-explicit}. The unitaries $W^{xy}$ and $W^{xy}\otimes W^{xy}$ prepare the two states defining $F_0$ as reduced states. Similarly to the proof of \cref{itm:pauli-fidelity-explicit}, they act on $O(q)$ qubits, have Pauli support sizes at most $O(d^4s^3)$ and ${O(d^8s^6)}$, respectively, and have circuits of size $O(L+q)+\poly(d)$. Applying \cref{thm:learned sparse states to fidelity} gives the required bounded-error quantum algorithm.
\end{proof}

\section*{AI Disclosure}
This research was conducted by humans, and was assisted by AI tools. All research direction, discovery, lemma statements, hypothesis were originated by the the authors, and ChatGPT's Sol 5.6 assisted to 
prove them. We repeatedly revised and interacted with ChatGPT to produce the final proofs.

\printbibliography

\end{document}